\newif\ifels
	\newproof{proof}{Proof}
	\newtheorem{theorem}{Theorem}
	\journal{Applied and Computational Harmonic Analysis}
\definecolor{blue1}{rgb}{0,0,0}
\renewcommand\thesection{\arabic{section}}
\titleformat{\section}[hang]{\color{blue1}\large\bfseries\sffamily}{\thesection}{0mm}{. }[]
\titleformat{\subsection}[hang] {\color{blue1}\bfseries\sffamily}{\thesubsection}{0em}{. }[]
\titleformat{\subsubsection}[hang] {\color{blue1}\sffamily}{\thesubsubsection}{0em}{. }[]
\titlespacing*{\section}{1em}{3.5ex plus .2ex minus .2ex}{1ex plus .2ex}
\titlespacing*{\subsection}{0em}{3ex plus .2ex minus .2ex}{1ex plus .2ex}
\titlespacing*{\subsubsection}{0em}{3ex plus .2ex minus .2ex}{1ex plus .2ex}
\renewenvironment{abstract}{{\color{blue1}\small\bfseries Abstract.}\footnotesize}{\par \vskip .1in}
\def\@setauthors{
\begingroup 
\def \thanks{\protect\thanks@warning}
\trivlist \centering\footnotesize \@topsep30\p@\relax \advance\@topsep by -\baselineskip
\item\relax \author@andify \authors \def\\{\protect\linebreak} {\color{blue1}\large\authors} \endtrivlist \endgroup}
\def\@settitle{\centering{\color{blue1} \Large \bfseries \bfseries \@title \par}}
	\newtheorem{theorem}{Theorem}[section]
\newcommand{\Graph}{\ensuremath{\set{G}}}
\newcommand{\CCal}{\ensuremath{\set{C}}}
\newcommand{\Lap}{\ensuremath{\ma{L}}}
\newcommand{\Fou}{\ensuremath{\ma{U}}}
\newcommand{\Eig}{\ensuremath{\ma{\Lambda}}}
\newcommand{\Meas}{\ensuremath{\ma{M}}}
\newcommand{\fou}{\ensuremath{\vec{u}}}
\newcommand{\eig}{\ensuremath{\vec{\lambda}}}
\newcommand{\err}{\ensuremath{\vec{n}}}
\newcommand{\sig}{\ensuremath{\vec{x}}}
\newcommand{\meas}{\ensuremath{\vec{y}}}
\newcommand{\prob}{\ensuremath{\vec{p}}}
\newcommand{\nbVert}{\ensuremath{n}}
\newcommand{\nbVertRed}{\ensuremath{m}}
\newcommand{\nbClass}{\ensuremath{k}}
\newcommand{\cumCoh}{\ensuremath{\nu}}
\newcommand{\reg}{\ensuremath{\gamma}}
\newcommand{\refeq}[1]{(\ref{#1})}
\newcommand{\Rbb}{\ensuremath{\mathbb{R}}} 
\newcommand{\Nbb}{\ensuremath{\mathbb{N}}}
\newcommand{\Ebb}{\ensuremath{\mathbb{E}}}
\newcommand{\Pbb}{\ensuremath{\mathbb{P}}}
\renewcommand{\leq}{\ensuremath{\leqslant}}
\renewcommand{\geq}{\ensuremath{\geqslant}}
\newcommand{\adjoint}{\ensuremath{{\intercal}}}
\newcommand{\inv}[1]{\ensuremath{\frac{1}{#1}}}
\newcommand{\norm}[1]{\ensuremath{\left\| #1\right\|}}
\newcommand{\abs}[1]{\ensuremath{\left| #1 \right|}}
\newcommand{\ma}[1]{\ensuremath{\mathsf{#1}}}
\renewcommand{\vec}[1]{\ensuremath{\bm{#1}}}
\newcommand{\diag}{\ensuremath{{\rm diag}}}
\newcommand{\set}[1]{\ensuremath{\mathcal{#1}}}
\newcommand{\spann}{\ensuremath{{\rm span}}}
\newcommand{\ie}{\textit{i.e.}}
\newcommand{\eg}{\textit{e.g.}}
\renewcommand{\th}{\ensuremath{\text{th}}}
\newcommand{\ee}{\ensuremath{{\rm e}}}
\newtheorem{definition}[theorem]{Definition}
\newtheorem{lemma}[theorem]{Lemma}
\newtheorem{corollary}[theorem]{Corollary}
\algnewcommand\algorithmicinput{\textbf{Input:}}
\algnewcommand\Input{\item[\algorithmicinput]}
\algnewcommand\algorithmicoutput{\textbf{Output:}}
\algnewcommand\Output{\item[\algorithmicoutput]}
\newcommand{\longtitle}{{Random sampling of bandlimited signals on graphs}}
\newcommand{\GPlong}{{Gilles~Puy}}
\newcommand{\PVlong}{{Pierre~Vandergheynst}}
\newcommand{\RGlong}{{R\'emi Gribonval}}
\newcommand{\NTlong}{{Nicolas Tremblay}}
\newcommand{\GPshort}{{G.~Puy}}
\newcommand{\PVshort}{{P.~Vandergheynst}}
\newcommand{\RGshort}{{R.~Gribonval}}
\newcommand{\NTshort}{{N.~Tremblay}}
\newcommand{\Technicolor}{{Technicolor, 975 Avenue des Champs Blancs, 35576 Cesson-S\'evign\'e, France}}
\newcommand{\INRIA}{{INRIA Rennes - Bretagne Atlantique, Campus de Beaulieu, FR-35042 Rennes Cedex, France}}
\newcommand{\EPFL}{{Institute of Electrical Engineering, Ecole Polytechnique F{\'e}d{\'e}rale de Lausanne (EPFL), CH-1015 Lausanne, Switzerland}}
\newcommand{\ERC}{{This work was partly funded by the European Research Council, PLEASE project (ERC-StG-2011-277906), and by the Swiss National Science Foundation, grant 200021-154350/1 - Towards Signal Processing on Graphs}}
\newcommand{\INRIAGilles}{The first version of this manuscript was submitted when \GPshort\ was at \INRIA}
\newcommand{\MODIF}[1]{{\color{black}{#1}}}
\newcommand{\ADDED}[1]{{\color{black}{#1}}}
	\title{\longtitle\tnoteref{t1}}
	\author[technicolor]{\GPlong}
	\author[inria,epfl]{\NTlong}
	\author[inria]{\RGlong}
	\author[inria,epfl]{\PVlong}
	\address[technicolor]{\Technicolor}
	\address[inria]{\INRIA}
	\address[epfl]{\EPFL}
	\title[\longtitle]{\longtitle}
	\author[\GPshort]{\GPlong}
	\author[\NTshort]{\NTlong}
	\author[\RGshort]{\RGlong}
	\author[\PVshort]{\PVlong}
	\thanks{\GPshort\ is with \Technicolor. \NTshort, \RGshort\ and \PVshort\ are with \INRIA. \NTshort\ and \PVshort\ are also with the \EPFL. \INRIAGilles. \ERC.}
\begin{document}

\ifels
	\begin{abstract}
We study the problem of sampling $k$-bandlimited signals on graphs. We propose two sampling strategies that consist in selecting a small subset of nodes at random. The first strategy is non-adaptive, \ie, independent of the graph structure, and its performance depends on a parameter called the graph coherence. On the contrary, the second strategy is adaptive but yields optimal results. Indeed, no more than $O(\nbClass \log(\nbClass))$ measurements are sufficient to ensure an accurate and stable recovery of all $\nbClass$-bandlimited signals. This second strategy is based on a careful choice of the sampling distribution, which can be estimated quickly. Then, we propose a computationally efficient decoder to reconstruct $k$-bandlimited signals from their samples. We prove that it yields accurate reconstructions and that it is also stable to noise. Finally, we conduct several experiments to test these techniques.
\end{abstract}
	\maketitle
	% !TEX root = ../main.tex

% ===========================================
% INTRODUCTION
% ===========================================
\section{Introduction}

Graphs are a central modelling tool for network-structured data~\cite{newman_book2010}. Depending on the application, the nodes of a graph may represent people in social networks, brain regions in neuronal networks, or stations in transportation networks. Data on a graph, such as individual hobbies, activity of brain regions, traffic at a station, may be represented by scalars defined on each node, which form a graph signal. Extending classical signal processing methods to graph signals is the purpose of the emerging field of graph signal processing~\cite{shuman_SPMAG2013, sandryhaila_SPMAG2014}. 

Within this framework, a cornerstone is sampling, \ie, measuring a graph signal on a reduced set of nodes carefully chosen to enable stable reconstructions. Classically, sampling a continuous signal $x(t)$ consists in measuring a countable sequence of its values, $\{ x(t_j) \}_{j \in \mathbb{Z}}$, that ensures its recovery under a given smoothness model~\cite{unser_shannon2000}. Smoothness assumptions are often defined in terms of the signal's Fourier transform. For example, Shannon's famous sampling theorem~\cite{shannon_IRE1949} states that any $\omega$-bandlimited signal can be recovered \textit{exactly} from its values at $t_j=j/2\omega$. Similar theorems exist for other classes of signals, \eg, signals on the sphere \cite{mcewen11}; and other types of sampling schemes, \eg, irregular sampling~\cite{grochenig1992reconstruction,benedetto1992irregular} or compressive sampling~\cite{candes2006compressive}. Extending these theorems to graph signals requires to decide on a smoothness model and to design a sampling scheme that enables stable recovery. 

Natural choices of smoothness models build upon, \eg, the graph's adjacency matrix, the combinatorial Laplacian matrix, the normalised Laplacian, or the random walk Laplacian. The sets of eigenvectors of these operators define different graph Fourier bases. Given such a Fourier basis, the equivalent of a classical $\omega$-bandlimited signal is a $k$-bandlimited graph signal whose $k$ first Fourier coefficients are non-null~\cite{chen_TSP2015,anis_ICASSP2014}. 

Unlike continuous time signal processing, the concept of regular sampling itself is not applicable for graph signals, apart for very regular graphs such as bipartite graphs \cite{narang_TSP2013}. We are left with two possible choices for sampling: irregular or random sampling. Irregular sampling of $k$-bandlimited graph signals has been studied first by Pesenson~\cite{pesenson_paley, Pesenson:2011tu} who introduced the notion of uniqueness set associated to the subspace of $k$-bandlimited graph signals. If two $k$-bandlimited graph signals are equal on a uniqueness set, they are necessarily equal on the whole graph. Building upon this first work, and using the fact that the sampling matrix applied to the first $k$ Fourier modes should have rank $k$ in order to guarantee recovery of $k$-bandlimited signals, Anis \emph{et al.}~\cite{anis_ICASSP2014,Anis_ARXIV2015} 
and Chen \emph{et al.}~\cite{chen_ICASSP2015,chen_TSP2015} showed that a sampling set of size $k$ that perfectly embeds $k$-bandlimited signals always exists. To find such an optimal set, the authors need to compute the first $k$ eigenvectors of the Laplacian, which is computationally prohibitive for large graphs. A recent work~\cite{anis_arxiv2015_long} bypasses the partial diagonalisation of the Laplacian by using graph spectral proxies, but the procedure to find an optimal sampling set still requires a search over all possible subsets of nodes of a given size. This is a very large combinatorial problem. In practice, approximate results are obtained using a greedy heuristic that enables the authors to efficiently perform experiments on graphs of size up to few thousands nodes. 

Several other sampling schemes exist in the literature, such as schemes based on a bipartite decomposition of the graph~\cite{narang_TSP2013}, on a decomposition via maximum spanning trees~\cite{nguyen_TSP2015}, on the sign of the last Fourier mode~\cite{shuman_ARXIV2013}, on the sign of the Fiedler vector~\cite{irion_SPIE2015}, or on a decomposition in communities~\cite{tremblay_arxiv2015}. All these propositions are however specifically designed for graph multiresolution analysis with filterbanks, and are not suited to find optimal or close-to-optimal sets of nodes for sampling $k$-bandlimited graph signals.

% ===========================================
\subsection{Main Contributions} 

In this paper, we propose a very different approach to sampling on graphs. Instead of trying to find an \textit{optimal} sampling set, (\ie, a set of size $k$) for $k$-bandlimited signals, we relax this optimality constraint in order to tackle graphs of very large size. We allow ourselves to sample slightly more than $k$ nodes and, inspired by compressive sampling, we propose two random sampling schemes that ensure recovery of graph signals with high probability.

A central graph characteristic that appears from our study is the \emph{graph weighted coherence} of order $k$ (see Definition~\ref{def:Gr_w_coh}). This quantity is a measure of the localisation of the first $k$ Fourier modes on the nodes of the graph. Unlike the classical Fourier modes, some graph Fourier modes have the surprising potential of being localised on very few nodes. The farther a graph is from a regular grid, the higher the chance to have a few localised Fourier modes. This particularity in graph signal processing is studied in~\cite{agaskar_TIT2013,rabbat_icassp2014, Nakatsukasa_laa2013} but is still largely not understood. 

First, we propose a non-adaptive sampling technique that consists in choosing a few nodes at random to form the sampling set. In this setting, we show that the number of samples ensuring the reconstruction of all $k$-bandlimited signals scales with the square of the graph weighted coherence. For regular or almost-regular graphs, \ie, graphs whose coherence is close to $\sqrt{k}$, this result shows that $O(k\log{k})$ samples selected using the uniform distribution are sufficient to sample $k$-bandlimited signals. We thus obtain an almost optimal sampling condition.

Second, for arbitrary graphs with a coherence potentially tending to $\sqrt{\nbVert}$, where $\nbVert \gg k$ is the total number of nodes, we propose a second sampling strategy that compensates the undesirable consequences of mode localisation. The technique relies on the variable density sampling strategy widely used in compressed sensing \cite{puy11, krahmer12, adcock14}. We prove that there always exists a sampling distribution such that no more than $O(k\log{k})$ samples are sufficient to ensure exact and stable reconstruction of all $k$-bandlimited signals, whatever the graph structure. Unfortunately, computing the optimal sampling distribution requires the partial diagonalisation of the first $k$ eigenvectors of the Laplacian. To circumvent this issue, we propose a fast technique to estimate this optimal sampling distribution accurately. 

Finally, we propose an efficient method to reconstruct any $k$-bandlimited signal from its samples. We prove that the method recovers $k$-bandlimited signals exactly in the absence of noise. We also prove that the method is robust to measurement noise and model errors.

Note that our sampling theorems are applicable to any symmetrical Laplacian or adjacency matrix, \ie, any weighted undirected graph. Nevertheless, the efficient recovery method we propose is specifically designed to take 
advantage of the semi-definite positivity of the Laplacian operator. In the following, we therefore concentrate on such symmetrical positive semi-definite Laplacians, such as the combinatorial or normalized Laplacians. 

Let us acknowledge that the idea of random sampling for $k$-bandlimited graph signals is mentioned in \cite{chen_TSP2015} and \cite{chen_sampta2015}. In \cite{chen_TSP2015}, the authors prove that the space of $k$-bandlimited graph signals can be stably embedded using a uniform sampling but for the Erd\H{o}s-R\'enyi graph only. The idea of using a non-uniform sampling appears in \cite{chen_sampta2015}. However, the authors do not prove that this sampling strategy provides a stable embedding of the space of $k$-bandlimited graph signals. We prove this result in Section~\ref{sec:rip} but also show that there always exists a sampling distribution that yields optimal results. Finally, the reconstruction methods proposed in \cite{chen_sampta2015} requires a partial diagonalisation of the Laplacian matrix, unlike ours. We also have much stronger recovery guarantees than the ones presented in \cite{chen_sampta2015}, which are expected recovery guarantees.

% ===========================================
\subsection{Notations and definitions} 

For any matrix $\ma{X} \in \Rbb^{m \times \nbVert}$, $\norm{\ma{X}}_2$ denotes the spectral norm of $\ma{X}$, $\lambda_{\rm max}(\ma{X})$ denotes the largest eigenvalue of $\ma{X}$, and $\lambda_{\rm min}(\ma{X})$ denotes the smallest eigenvalue of $\ma{X}$. For any vector $\sig \in \Rbb^\nbVert$, $\norm{\sig}_2$ denotes the Euclidean norm of $\sig$. Depending on the context, $\sig_j$ may represent the $j^\th$ entry of the vector $\sig$ or the $j^\th$ column-vector of the matrix $\ma{X}$. The identity matrix is denoted by $\ma{I}$ - its dimensions are determined by the context - and $\vec{\delta}_j$ is its $j^\th$ column vector.

We consider an undirected, connected, weighted graph $\Graph = \{\mathcal{V}, \mathcal{E}, \ma{W} \}$, where $\mathcal{V}$ is the set of $\nbVert$ nodes, $\mathcal{E}$ is the set of edges, and $\ma{W} \in \Rbb^{\nbVert \times \nbVert}$ is the weighted adjacency matrix. The entries of $\ma{W}$ are nonnegative. We denote the graph Laplacian by $\Lap \in \Rbb^{\nbVert \times \nbVert}$. As said before, we assume that $\Lap$ is real, symmetric, and positive semi-definite. For example, the matrix $\Lap$ can be the combinatorial graph Laplacian $\Lap := \ma{D} - \ma{W}$, or the normalised one $\Lap := \ma{I} - \ma{D}^{-1/2} \ma{W} \ma{D}^{-1/2}$, where $\ma{D} \in \Rbb^{\nbVert \times \nbVert}$ is the diagonal degree matrix and $\ma{I}$ is the identity matrix~\cite{chung_book1997}. The diagonal degree matrix $ \ma{D}$ has entries $d_i := \sum_{i \neq j}\ma{W}_{ij}$.

As the matrix $\Lap$ is real symmetric, there exists a set of orthonormal eigenvectors $\Fou \in \Rbb^{\nbVert \times \nbVert}$ and real eigenvalues $\eig_1, \ldots, \eig_n$ such that $\Lap = \Fou \Eig \Fou^\adjoint$, where $\Eig := \diag(\eig_1, \ldots, \eig_n) \in \Rbb^{\nbVert \times \nbVert}$. Furthermore, semi-definite positivity of $\Lap$ implies that all eigenvalues are nonnegative. Without loss of generality, we assume that $\eig_1 \leq \ldots \leq \eig_n$.

The matrix $\Fou$ is often viewed as the graph Fourier transform~\cite{shuman_SPMAG2013}. For any signal $\sig \in \Rbb^{\nbVert}$ defined on the nodes of the graph $\Graph$, $\hat{\sig} = \Fou^\adjoint \sig$ contains the Fourier coefficients of $\sig$ ordered in increasing frequencies. As explained before, it is thus natural to consider that a $k$-bandlimited (smooth) signal $\sig \in \Rbb^{\nbVert}$ on $\Graph$ with band-limit $\nbClass>0$ is a signal that satisfies 
\begin{align*}
\sig = \Fou_{\nbClass} \hat{\sig}^\nbClass
\end{align*}
where $\hat{\sig}^\nbClass \in \Rbb^{\nbClass}$ and
\begin{align*}
\Fou_{\nbClass} := \left( \fou_1, \ldots, \fou_\nbClass \right) \in \Rbb^{\nbVert \times \nbClass},
\end{align*}
\ie, $\Fou_{\nbClass}$ is the restriction of $\Fou$ to its first $\nbClass$ vectors. This yields the following formal definition of a $k$-bandlimited signal.

\begin{definition}[$k$-bandlimited signal on \Graph] A signal $\sig \in \Rbb^{\nbVert}$ defined on the nodes of the graph $\Graph$ is $k$-bandlimited with $\nbClass \in \Nbb\setminus\{0\}$ if $\sig \in \spann(\ma{U}_\nbClass)$.
\end{definition}

Note we use $\spann(\ma{U}_\nbClass)$ in our definition of $k$-bandlimited signals to handle the case where the eigendecomposition is not unique. To avoid any ambiguity in the definition of $\nbClass$-bandlimited signals, we assume that $\eig_{\nbClass} \neq  \eig_{\nbClass+1}$ for simplicity.

% ===========================================
\subsection{Outline} 

In Section~\ref{sec:rip}, we detail our sampling strategies and provide sufficient sampling conditions that ensure a stable embedding of $k$-bandlimited graph signals. We also prove that there always exists an optimal sampling distribution that ensures an embedding of $k$-bandlimited signals for $O(k \log(k))$ measurements. In Section~\ref{sec:reconstruction}, we propose decoders able to recover $k$-bandlimited signals from their samples. In Section~\ref{sec:estimation_distribution}, we explain how to obtain an estimation of the optimal sampling distribution quickly, \textit{without} partial diagonalisation of the Laplacian matrix. In Section~\ref{sec:experiments}, we conduct several experiments on different graphs to test our methods. Finally, we conclude and discuss perspectives in Section~\ref{sec:conclusion}.

% ===========================================
% SAMPLING THEOREMS
% ===========================================
\section{Sampling $k$-bandlimited signals}
\label{sec:rip}

In this section, we start by describing how we select a subset of the nodes to sample $k$-bandlimited signals. Then, we prove that this sampling procedure stably embeds the set of $k$-bandlimited signals. We describe how to reconstruct such signals from these measurements in Section~\ref{sec:reconstruction}.

% ===========================================
\subsection{The sampling procedure}

In order to select the subset of nodes that will be used for sampling, we need a probability distribution $\mathcal{P}$ on $\{ 1, \ldots, \nbVert \}$. This probability distribution is used as a sampling distribution. We represent it by a vector $\prob \in \Rbb^\nbVert$. We assume that $\prob_i > 0$ for all $i = 1, \ldots, \nbVert$. We obviously have $\norm{\prob}_1 = \sum_{i=1}^\nbVert \prob_i = 1$. We associate the matrix 
\begin{align*}
\ma{P} := \diag(\prob) \in \Rbb^{\nbVert \times \nbVert}
\end{align*}
to $\prob$.

The subset of nodes $\Omega := \{\omega_1, \ldots, \omega_\nbVertRed \}$ used for sampling is constructed by drawing independently (with replacements) $\nbVertRed$ indices from the set $\{1, \ldots, \nbVert \}$ according to the probability distribution $\prob$. We thus have 
\begin{align*}
\Pbb(\omega_j = i) = \prob_i,\quad \forall j \in \{1, \ldots, \nbVertRed \} \text{ and } \forall i \in \{1, \ldots, \nbVert \}.
\end{align*}
For any signal $\sig \in \Rbb^n$ defined on the nodes of the graph, its sampled version $\meas \in \Rbb^\nbVertRed$ satisfies
\begin{align*}
\meas_j := \sig_{\omega_j}, \quad \forall j \in \{1, \ldots, \nbVertRed \}.
\end{align*}
Note that we discuss the case of sampling without replacement in Section~\ref{sec:sampling_without_replacement}.

Let us pause for a moment and highlight few important facts. First, the sampling procedure allows each node to be selected multiple times. The number of measurements $\nbVertRed$ includes these duplications. In practice, one can sample each selected node only once and add these duplications ``artificially'' afterwards. Second, \emph{the set of nodes $\Omega$ needs to be selected only once to sample all $k$-bandlimited signals on $\Graph$.} One does not need to construct a set $\Omega$ each time a signal has to be sampled. Third, note that the sampling procedure is so far completely independent of the graph $\Graph$. This is a non-adaptive sampling strategy.

Let us define the sampling matrix $\Meas \in \Rbb^{\nbVertRed \times \nbVert}$. This matrix satisfies
\begin{align}
\label{eq:subsampling_matrix_def}
\Meas_{ij} := 
\left\{
\begin{array}{ll}
1 & \text{if } j = \omega_i\\
0 & \text{otherwise},
\end{array}
\right.
\end{align}
for all $i \in \{1, \ldots, \nbVert \}$ and $j \in \{1, \ldots, \nbVertRed \}$. Note that $\meas = \Meas \sig$. In the next section, we show that, with high probability, $\Meas$ embeds the set of $\nbClass$-bandlimited signals for a number of measurements $\nbVertRed$ essentially proportional to $\nbClass \log(\nbClass)$ times a parameter called the graph weighted coherence.

% ===========================================
\subsection{The space of $k$-bandlimited signals is stably embedded}

Similarly to many compressed sensing results, the number of measurements required to stably sample $k$-bandlimited signals will depend on a quantity, called the \emph{graph weighted coherence}, that represents how the energy of these signals spreads over the nodes. Before providing the formal definition of this quantity, let us give an intuition of what it represents and why it is important.

Consider the signal $\vec{\delta}_i \in \Rbb^n$ with value $1$ at node $i$ and $0$ everywhere else. This signal has its energy concentrated entirely at the $i^\th$ node. Compute $\Fou_\nbClass^\adjoint \vec{\delta}_i$, \ie, the first $\nbClass$ Fourier coefficients of $\vec{\delta}_i$. The ratio
\begin{align*}
\frac{\norm{\Fou_\nbClass^\adjoint \vec{\delta}_i}_2}{\norm{\Fou^\adjoint \vec{\delta}_i}_2}
= 
\frac{\norm{\Fou_\nbClass^\adjoint \vec{\delta}_i}_2}{\norm{\vec{\delta}_i}_2}
= 
\norm{\Fou_\nbClass^\adjoint \vec{\delta}_i}_2
\end{align*}
characterises how much the energy of $\vec{\delta}_i$ is concentrated on the first $\nbClass$ Fourier modes. This ratio varies between $0$ and $1$. When it is equal to $1$, this indicates that there exists $k$-bandlimited signals whose energy is solely concentrated at the $i^\th$ node; not sampling the $i^\th$ node jeopardises the chance of reconstructing these signals. When this ratio is equal to $0$, then no $k$-bandlimited signal has a part of its energy on the $i^\th$ node; one can safely remove this node from the sampling set. We thus see that the quality of our sampling method will depend on the interplay between the sampling distribution $\prob$ and the quantities $\norm{\Fou_\nbClass^\adjoint \vec{\delta}_i}_2$ for $i \in \{1, \ldots, \nbVert\}$. Ideally, we should have $\prob_i$ large wherever $\norm{\Fou_\nbClass^\adjoint \vec{\delta}_i}_2$ is large and $\prob_i$ small wherever $\norm{\Fou_\nbClass^\adjoint \vec{\delta}_i}_2$ is small. The interplay between $\prob_i$ and $\norm{\Fou_\nbClass^\adjoint \vec{\delta}_i}_2$ is characterised by the \emph{graph weighted coherence}.
%Equivalently, the sampling distribution should equilibrate the ratio between $\norm{\Fou_\nbClass^\adjoint \vec{\delta}_i}_2$ and $\prob_i$ over all the nodes.
%
\begin{definition}[Graph weighted coherence]
\label{def:Gr_w_coh}
Let $\prob \in \Rbb^n$ represent a sampling distribution on $\{ 1, \ldots, \nbVert \}$. The graph weighted coherence of order $\nbClass$ for the pair $(\Graph, \prob)$ is
\begin{align*}
\cumCoh_{\prob}^{\nbClass} := \max_{1 \leq i \leq \nbVert} \left\{ \prob_i^{-1/2} \norm{\Fou_\nbClass^\adjoint \vec{\delta}_i}_2 \right\}.
\end{align*}
\ADDED{The quantity $\norm{\Fou_\nbClass^\adjoint \vec{\delta}_i}_2$ is called the local graph coherence at node $i$.}
\end{definition}

Let us highlight two fundamental properties of $\cumCoh_{\prob}^{\nbClass}$. First, we have 
\begin{align*}
\cumCoh_{\prob}^{\nbClass} \; \geq \; \sqrt{\nbClass}.
\end{align*}
Indeed, as the columns of $\Fou_\nbClass$ are normalised to $1$, we have
\begin{align*}
\nbClass = \norm{\Fou_\nbClass}_{\rm Frob}^2 
= \sum_{i=1}^\nbVert \norm{\Fou_\nbClass^\adjoint \vec{\delta}_i}_2^2 
= \sum_{i=1}^\nbVert \prob_i \frac{\norm{\Fou_\nbClass^\adjoint \vec{\delta}_i}_2^2}{\prob_i} 
\; \leq \; 
\max_{1 \leq i \leq \nbVert} \left\{ \frac{\norm{\Fou_\nbClass^\adjoint \vec{\delta}_i}_2^2}{\prob_i} \right\} \cdot \sum_{i=1}^\nbVert \prob_i
= 
(\cumCoh_{\prob}^{\nbClass})^2.
\end{align*}
Second, $\cumCoh_{\prob}^{\nbClass}$ is a quantity that depends solely on $\prob$ and $\spann(\Fou_\nbClass)$. The choice of the basis for $\spann(\Fou_\nbClass)$ does not matter in the definition of $\cumCoh_{\prob}^{\nbClass}$. Indeed, it suffices to notice that $\norm{\Fou_\nbClass^\adjoint \vec{\delta}_i}_2^2 = \norm{{\rm P}_{\nbClass} (\vec{\delta}_i)}_2^2$, where ${\rm P}_{\nbClass}(\cdot): \Rbb^{\nbVert} \rightarrow \Rbb^{\nbVert}$ is the orthogonal projection onto $\spann(\Fou_\nbClass)$, whose definition is independent of the choice of the basis of $\spann(\Fou_\nbClass)$. The graph weighted coherence is thus a characteristic of the interaction between the signal model, \ie, $\spann(\Fou_\nbClass)$, and the sampling distribution $\prob$.

We are now ready to introduce our main theorem which shows that $\nbVertRed^{-1} \Meas\ma{P}^{-1/2}$ satisfies a restricted isometry property on the space of $k$-bandlimited signals.

\begin{theorem}[Restricted isometry property]
\label{th:rip}
Let $\Meas$ be a random subsampling matrix constructed as in \refeq{eq:subsampling_matrix_def} with the sampling distribution $\prob$. For any $\delta, \epsilon \in (0, 1)$, with probability at least $1-\epsilon$,
\begin{align}
\label{eq:RIP}
(1 - \delta) \norm{\sig}_2^2 \leq \inv{\nbVertRed} \norm{\Meas \ma{P}^{-1/2} \; \sig}_2^2 \leq (1 + \delta) \norm{\sig}_2^2
\end{align}
for all $\sig \in \spann(\Fou_{\nbClass})$ provided that
\begin{align}
\label{eq:sampling_condition}
\nbVertRed \geq \frac{3}{\delta^2} \; (\cumCoh_{\prob}^{\nbClass})^2 \; \log\left( \frac{2\nbClass}{\epsilon} \right).
\end{align}
\end{theorem}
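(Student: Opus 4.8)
The plan is to recognize \refeq{eq:RIP} as a matrix concentration statement and prove it via a matrix Chernoff/Bernstein bound. First I would reduce to the span of $\Fou_\nbClass$ by writing any $\sig\in\spann(\Fou_\nbClass)$ as $\sig=\Fou_\nbClass\vec{\alpha}$ with $\vec{\alpha}\in\Rbb^\nbClass$, so that $\norm{\sig}_2^2=\norm{\vec{\alpha}}_2^2$ and $\inv{\nbVertRed}\norm{\Meas\ma{P}^{-1/2}\sig}_2^2 = \vec{\alpha}^\adjoint \left(\inv{\nbVertRed}\Fou_\nbClass^\adjoint\ma{P}^{-1/2}\Meas^\adjoint\Meas\ma{P}^{-1/2}\Fou_\nbClass\right)\vec{\alpha}$. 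Thus \refeq{eq:RIP} holds for all such $\sig$ if and only if the $\nbClass\times\nbClass$ matrix
\begin{align*}
\ma{A} := \inv{\nbVertRed}\,\Fou_\nbClass^\adjoint\ma{P}^{-1/2}\Meas^\adjoint\Meas\ma{P}^{-1/2}\Fou_\nbClass
\end{align*}
satisfies $\norm{\ma{A}-\ma{I}}_2\leq\delta$. The key structural fact is that $\Meas^\adjoint\Meas = \sum_{j=1}^\nbVertRed \vec{\delta}_{\omega_j}\vec{\delta}_{\omega_j}^\adjoint$, so $\ma{A}$ is a sum of $\nbVertRed$ i.i.d.\ rank-one random matrices, $\ma{A}=\inv{\nbVertRed}\sum_{j=1}^\nbVertRed \ma{X}_j$ with $\ma{X}_j := \prob_{\omega_j}^{-1}\,(\Fou_\nbClass^\adjoint\vec{\delta}_{\omega_j})(\Fou_\nbClass^\adjoint\vec{\delta}_{\omega_j})^\adjoint$.

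Next I would check the hypotheses of a matrix Chernoff bound (in the form of Tropp, \emph{User-Friendly Tail Bounds for Sums of Random Matrices}, or the variant for sums with a mean-one normalization). The summands are positive semidefinite. Their expectation is the identity: $\Ebb\,\ma{X}_j = \sum_{i=1}^\nbVert \prob_i\cdot\prob_i^{-1}(\Fou_\nbClass^\adjoint\vec{\delta}_i)(\Fou_\nbClass^\adjoint\vec{\delta}_i)^\adjoint = \Fou_\nbClass^\adjoint\left(\sum_i\vec{\delta}_i\vec{\delta}_i^\adjoint\right)\Fou_\nbClass = \Fou_\nbClass^\adjoint\Fou_\nbClass = \ma{I}$, hence $\Ebb\,\ma{A}=\ma{I}$ and $\lambda_{\min}(\Ebb\,\ma{A})=\lambda_{\max}(\Ebb\,\ma{A})=1$. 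The uniform bound on the summands is exactly the coherence: $\norm{\ma{X}_j}_2 = \prob_{\omega_j}^{-1}\norm{\Fou_\nbClass^\adjoint\vec{\delta}_{\omega_j}}_2^2 \leq \max_{1\leq i\leq\nbVert}\prob_i^{-1}\norm{\Fou_\nbClass^\adjoint\vec{\delta}_i}_2^2 = (\cumCoh_\prob^\nbClass)^2 =: B$ almost surely. Feeding $\mu_{\min}=\mu_{\max}=\nbVertRed$ and this $B$ into the two-sided matrix Chernoff inequality gives, for the deviation parameter $\delta\in(0,1)$,
\begin{align*}
\Pbb\!\left(\norm{\ma{A}-\ma{I}}_2\geq\delta\right) \;\leq\; 2\nbClass\,\exp\!\left(-\frac{\nbVertRed\,\delta^2}{3\,B}\right),
\end{align*}
using the standard bound $-\log(1-\delta)\cdot\delta - \ldots \geq \delta^2/3$ for the relevant range (and similarly for the $+\delta$ side, which actually gives a smaller exponent, so $\delta^2/3$ is a safe common value). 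The dimension factor is $\nbClass$ because $\ma{A}$ acts on $\Rbb^\nbClass$. Requiring the right-hand side to be at most $\epsilon$ is precisely $\nbVertRed \geq 3\delta^{-2}(\cumCoh_\prob^\nbClass)^2\log(2\nbClass/\epsilon)$, which is \refeq{eq:sampling_condition}. Then on the complementary event of probability at least $1-\epsilon$ we have $\norm{\ma{A}-\ma{I}}_2\leq\delta$, which unwinds to \refeq{eq:RIP}.

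I expect the only real subtlety to be getting the constant and the scalar tail function right: one must invoke the correct form of the matrix Chernoff bound (the one stated for a sum of PSD matrices bounded by $B$ with control on $\lambda_{\min}$ and $\lambda_{\max}$ of the mean), then simplify the eigenvalue tail functions $(1-\delta)^{1-\delta}$-type expressions down to the clean $\exp(-\nbVertRed\delta^2/(3B))$ form valid on $\delta\in(0,1)$; this is where the factor $3$ in \refeq{eq:sampling_condition} comes from, and it is worth stating the numerical inequality $(1+\delta)\log(1+\delta)-\delta\geq\delta^2/3$ and $(1-\delta)\log(1-\delta)+\delta\geq\delta^2/3$ (equivalently $\geq\delta^2/2$, the weaker $1/3$ being enough for both) explicitly. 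The rest — the rank-one decomposition, the mean computation, the a.s.\ bound by the coherence, and the final algebraic rearrangement into the sampling condition — is routine. A minor point to mention is that $\ma{P}^{-1/2}$ is well defined since $\prob_i>0$ for all $i$ by assumption, and that the equivalence between the operator inequality $\norm{\ma{A}-\ma{I}}_2\leq\delta$ and the quadratic-form statement \refeq{eq:RIP} is just the variational characterization of the spectral norm of a symmetric matrix.
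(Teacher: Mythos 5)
Your proposal is correct and follows essentially the same route as the paper: the paper's proof likewise writes $\nbVertRed^{-1}\Fou_\nbClass^\adjoint\ma{P}^{-1/2}\Meas^\adjoint\Meas\ma{P}^{-1/2}\Fou_\nbClass$ as a sum of independent rank-one positive semi-definite matrices with identity mean, bounds each summand's norm by $(\cumCoh_{\prob}^{\nbClass})^2/\nbVertRed$, and applies Tropp's matrix Chernoff inequality together with the simplification of the tail functions to $\exp(-\delta^2\mu/(3R))$, which is exactly where the factor $3$ in \refeq{eq:sampling_condition} comes from. The only superficial difference is whether the $1/\nbVertRed$ normalisation is placed inside or outside the summands, which does not affect the argument.
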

\begin{proof}
See \ifels\else Appendix \fi \ref{app:proof_rip}.
\end{proof}

There are several important comments to make about the above theorem.
\begin{itemize}
\item First, this theorem shows that the matrix $\Meas \ma{P}^{-1/2}$ embeds the set of $k$-bandlimited signals into $\Rbb^\nbVertRed$. Indeed, for any $\sig \neq \vec{z} \in \spann(\Fou_\nbClass)$, we have $\norm{\Meas \ma{P}^{-1/2} \; (\sig - \vec{z})}_2^2 \geq \nbVertRed (1 - \delta) \norm{\sig - \vec{z}}_2^2 > 0$, as $(\sig - \vec{z}) \in \spann(\Fou_\nbClass)$. The matrix $\Meas \ma{P}^{-1/2}$ can thus be used to sample $k$-bandlimited signals.
\item Second, we notice that $\Meas \ma{P}^{-1/2} \; \sig = \ma{P}_{\Omega}^{-1/2} \Meas \sig$ where $\ma{P}_{\Omega} \in \Rbb^{\nbVertRed \times \nbVertRed}$ is the diagonal matrix with entries $(\ma{P}_{\Omega})_{ii} = \prob_{\omega_i}$. Therefore, one just needs to measure $\Meas \sig$ in practice; the re-weighting by $\ma{P}_{\Omega}^{-1/2}$ can be done off-line.
\item Third, as $(\cumCoh_{\prob}^{\nbClass})^2 \geq \nbClass$, we need to sample at least $\nbClass$ nodes. Note that $\nbClass$ is also the minimum number of measurements that one must take to hope to reconstruct $\sig \in \spann(\Fou_\nbClass)$. 
\end{itemize}

The above theorem is quite similar to known compressed sensing results in bounded orthonormal systems \cite{candes07, foucart13}. The proof actually relies on the same tools as the ones used in compressed sensing. However, in our case, the setting is simpler. Unlike in compressed sensing where the signal model is a union of subspaces, the model here is a single known subspace. In the proof, we exploit this fact to refine and tighten the sampling condition. In this simpler setting and thanks to our refined result, we can propose a sampling procedure that is \emph{always} optimal in terms of the number of measurements.

In order to minimise the number of measurements, the idea is to choose a sampling distribution that minimises $\cumCoh_{\prob}^{\nbClass}$. Luckily, it occurs that it is always possible to reach the lower bound of $\cumCoh_{\prob}^{\nbClass}$ with a proper choice of the sampling distribution. The sampling distribution $\vec{p}^* \in \Rbb^\nbVert$ that minimises the graph weighted coherence is 
\begin{align}
\label{eq:optimal_distribution}
\vec{p}^*_i := \frac{\norm{\Fou_\nbClass^\adjoint \vec{\delta}_i}_2^2}{\nbClass},\quad i = 1, \ldots, \nbVert,
\end{align}
for which $(\cumCoh_{\vec{p}^*}^{\nbClass})^2 =  \nbClass$. The proof is simple. One just need to notice that $\sum_{i=1}^\nbVert \vec{p}^*_i = \nbClass^{-1} \; \sum_{i=1}^\nbVert \norm{\Fou_\nbClass^\adjoint \vec{\delta}_i}_2^2 = \nbClass^{-1} \norm{\Fou_\nbClass}_{\rm Frob} = \nbClass^{-1} \nbClass = 1$ so that $\vec{p}^*$ is a valid probability distribution. Finally, it is easy to check that $(\cumCoh_{\vec{p}^*}^{\nbClass})^2 = \nbClass$. This yields the following corollary to Theorem \ref{th:rip}.

\begin{corollary}
Let $\Meas$ be a random subsampling matrix constructed as in \refeq{eq:subsampling_matrix_def} with the sampling distribution $\vec{p}^*$ defined in \refeq{eq:optimal_distribution}. For any $\delta, \epsilon \in (0, 1)$, with probability at least $1-\epsilon$,
\begin{align*}
(1 - \delta) \norm{\sig}_2^2 \leq \inv{\nbVertRed} \norm{\Meas \ma{P}^{-1/2} \; \sig}_2^2 \leq (1 + \delta) \norm{\sig}_2^2
\end{align*}
for all $\sig \in \spann(\Fou_{\nbClass})$ provided that
\begin{align*}
\nbVertRed \geq \frac{3}{\delta^2} \; \nbClass \; \log\left( \frac{2\nbClass}{\epsilon} \right).
\end{align*}
\end{corollary}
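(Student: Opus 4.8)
The plan is to obtain the corollary as an immediate specialisation of Theorem~\ref{th:rip} to the particular sampling distribution $\vec{p}^*$ given in \refeq{eq:optimal_distribution}. The only things that need checking are that $\vec{p}^*$ is an admissible sampling distribution in the sense required by the theorem — strictly positive entries summing to one — and that its associated graph weighted coherence attains the universal lower bound $\sqrt{\nbClass}$; everything else is already contained in Theorem~\ref{th:rip}.

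First I would verify admissibility. Each entry $\vec{p}^*_i = \nbClass^{-1}\norm{\Fou_\nbClass^\adjoint\vec{\delta}_i}_2^2$ is nonnegative, and summing over $i$ while using that the columns of $\Fou_\nbClass$ are orthonormal gives $\sum_{i=1}^\nbVert \vec{p}^*_i = \nbClass^{-1}\sum_{i=1}^\nbVert \norm{\Fou_\nbClass^\adjoint\vec{\delta}_i}_2^2 = \nbClass^{-1}\norm{\Fou_\nbClass}_{\rm Frob}^2 = \nbClass^{-1}\nbClass = 1$, so $\norm{\vec{p}^*}_1 = 1$. Strict positivity holds because on a connected graph the first Laplacian eigenvector has no vanishing entry (it is proportional to the all-ones vector for the combinatorial Laplacian and to $\ma{D}^{1/2}\mathbf{1}$ for the normalised one), whence $\norm{\Fou_\nbClass^\adjoint\vec{\delta}_i}_2^2 \geq \abs{(\fou_1)_i}^2 > 0$ for every $i$; consequently $\ma{P} = \diag(\vec{p}^*)$ is invertible and the re-weighting $\ma{P}^{-1/2}$ is well defined, so Theorem~\ref{th:rip} applies verbatim.

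Second I would compute the coherence. Directly from Definition~\ref{def:Gr_w_coh} and the explicit form of $\vec{p}^*_i$, we get $(\cumCoh_{\vec{p}^*}^{\nbClass})^2 = \max_{1\leq i\leq \nbVert}\left\{ (\vec{p}^*_i)^{-1}\norm{\Fou_\nbClass^\adjoint\vec{\delta}_i}_2^2\right\} = \max_{1\leq i\leq \nbVert}\left\{\nbClass\right\} = \nbClass$. This also shows $\vec{p}^*$ meets the lower bound $\cumCoh_{\prob}^{\nbClass}\geq\sqrt{\nbClass}$ established earlier, so it is an optimal choice of sampling distribution.

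Finally I would invoke Theorem~\ref{th:rip} with $\prob = \vec{p}^*$: for any $\delta,\epsilon\in(0,1)$, with probability at least $1-\epsilon$, the two-sided bound \refeq{eq:RIP} holds for all $\sig\in\spann(\Fou_\nbClass)$ provided $\nbVertRed \geq 3\delta^{-2}(\cumCoh_{\vec{p}^*}^{\nbClass})^2\log(2\nbClass/\epsilon)$, and substituting $(\cumCoh_{\vec{p}^*}^{\nbClass})^2 = \nbClass$ gives exactly the stated condition $\nbVertRed \geq 3\delta^{-2}\,\nbClass\,\log(2\nbClass/\epsilon)$. I do not expect any real obstacle here, since all the probabilistic work is carried by Theorem~\ref{th:rip}; the single point that deserves a word of care is the strict positivity of $\vec{p}^*$, which is precisely what guarantees that $\ma{P}^{-1/2}$ is meaningful and hence that the theorem can be applied as is.
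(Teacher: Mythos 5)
Your proposal is correct and follows essentially the same route as the paper: verify that $\vec{p}^*$ is a valid probability distribution via $\norm{\Fou_\nbClass}_{\rm Frob}^2 = \nbClass$, observe that $(\cumCoh_{\vec{p}^*}^{\nbClass})^2 = \nbClass$ directly from Definition~\ref{def:Gr_w_coh}, and specialise Theorem~\ref{th:rip}. Your additional check that $\vec{p}^*_i > 0$ for all $i$ (via the nowhere-vanishing first eigenvector of a connected graph) is a detail the paper leaves implicit, and it is a welcome one since the general setup assumes strictly positive sampling weights.
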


The sampling distribution $\vec{p}^*$ is optimal in the sense that the number of measurements needed to embed the set of $\nbClass$-bandlimited signals is essentially reduced to its minimum value. Note that, unlike Theorem \ref{th:rip} where the sampling is non-adaptive, the sampling distribution is now adapted to the structure of the graph and \emph{a priori} requires the knowledge of a basis of $\spann(\Fou_k)$. We present a fast method that does not require the computation of a basis of $\spann(\Fou_k)$ to estimate $\vec{p}^*$ in Section~\ref{sec:estimation_distribution}.

\ADDED{It is important to mention that variable density sampling techniques are also popular in compressed sensing to reduce the sampling rate. We have been inspired by the works in this field to develop our sampling technique on graphs. In compressed sensing, the high efficiency of variable density sampling was first observed empirically in magnetic resonance imaging where the goal was to speed up the acquisition by reducing the amount of acquired data~\cite{lustig07}. Theoretical evidence of the efficiency of this technique then appeared in \cite{puy11, krahmer12, adcock14} as well as in~\cite{chauffert13, boyer15, bigot16} where additional measurement constraints and structured sparsity patterns are considered. There are similarities between the theoretical results existing in the compressed sensing literature and the ones presented in this work as we use similar proof techniques. However, we refine the proofs to take into account our specific signal model: bandlimited signals on graphs. One specificity of this setting is that there always exists a sampling distribution for which sampling $O(\nbClass\log(\nbClass))$ nodes is enough to capture all $\nbClass$-bandlimited signals. We recall that up to the log factor, one cannot hope to reduce the number of measurements much further. A second originality is that we can rapidly estimate this optimal distribution using fast filtering techniques on graphs (see Section~\ref{sec:estimation_distribution}).

Finally, we would like to highlight the similarity between the concept of local graph coherence and the concept of leverage scores of a matrix. Let $\ma{G}$ be a matrix and $\ma{V}$ be the left singular vectors of $\ma{G}$, 
the leverage scores related to the best rank-$r$ approximation of $\ma{G}$ are $l_i := \norm{\ma{V}_r^\adjoint \delta_i}_2^2$, where $\ma{V}_r$ contains the $r$ eigenvectors of $\ma{G}$ with largest eigenvalues 
(see, \eg,~\cite{drineas12} for more details). The only difference between the leverage scores and the local graph coherences of $\Lap$ is thus that the latter involve the smallest eigenvalues while the leverage scores involve the 
largest eigenvalues. For the normalised graph Laplacian $\Lap = \ma{I} - \ma{D}^{-1/2} \ma{W} \ma{D}^{-1/2}$, one can notice that the leverage scores of $\ma{D}^{-1/2} \ma{W} \ma{D}^{-1/2}$ correspond exactly 
to the local graph coherences of $\Lap$. In machine learning, the leverage scores have been used, \eg, to improve the performance of randomised algorithms that solve overdetermined least-square 
problems~\cite{drineas06, mahoney11, chen16} or that compute a low-rank approximation of a given matrix~\cite{drineas08, mahoney11, gittens13, sun15}. These algorithms work by building a sketch of the matrix 
of interest from a subset of its rows and/or columns. The leverage scores represent the importance of each row/columns in the dataset. The optimised sampling distribution for the rows/columns is then constructed 
by normalising the vector of leverage scores, as we do it here with the local graph coherence. Note also that fast algorithms to estimate the leverage scores have been developed in~\cite{drineas12}. In the future, 
it would be interesting to compare this method with ours, which explicitly uses the graph structure in the computations.}

% ===========================================
\subsection{Sampling without replacement}
\label{sec:sampling_without_replacement}

We have seen that the proposed sampling procedure allows one node to be sampled multiple times. In the case of a uniform sampling distribution, we can solve the issue by considering a sampling of the nodes without replacement and still prove that the set of $k$-bandlimited signals is stably embedded. We denote by $\vec{\pi} \in \Rbb^\nbVert$ the uniform distribution on $\{1, \ldots, \nbVert \}$, $\vec{\pi}_i = 1/\nbVert$ for all $i \in \{1, \ldots, \nbVert \}$.

\begin{theorem}
\label{th:rip_bis}
Let $\Meas$ be a random subsampling matrix constructed as in \refeq{eq:subsampling_matrix_def} with $\Omega$ built by drawing $\nbVertRed$ indices $\{ \omega_1, \ldots, \omega_\nbVertRed \}$ from $\{ 1, \ldots, \nbVert \}$ uniformly at random \textbf{without} replacement. For any $\delta, \epsilon \in (0, 1)$, with probability at least $1-\epsilon$,
\begin{align*}
(1 - \delta) \norm{\sig}_2^2 \leq \frac{\nbVert}{\nbVertRed} \norm{\Meas \, \sig}_2^2 \leq (1 + \delta) \norm{\sig}_2^2
\end{align*}
for all $\sig \in \spann(\Fou_{\nbClass})$ provided that
\begin{align*}
\nbVertRed \geq \frac{3}{\delta^2} \; (\cumCoh_{\vec{\pi}}^{\nbClass})^2 \; \log\left( \frac{2\nbClass}{\epsilon} \right).
\end{align*}
\end{theorem}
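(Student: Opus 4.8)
The plan is to reuse the proof of Theorem~\ref{th:rip} almost verbatim, replacing the matrix Chernoff bound for sums of i.i.d.\ positive semi-definite matrices by its analogue for sampling without replacement. First I would put the statement in operator form. Since $\prob=\vec{\pi}$ is uniform, $\ma{P}^{-1/2}=\sqrt{\nbVert}\,\ma{I}$, so $\tfrac{\nbVert}{\nbVertRed}\norm{\Meas\sig}_2^2=\tfrac1\nbVertRed\norm{\Meas\ma{P}^{-1/2}\sig}_2^2$. Writing $\sig=\Fou_\nbClass\hat{\sig}$ with $\hat{\sig}\in\Rbb^\nbClass$ and using $\norm{\sig}_2=\norm{\hat{\sig}}_2$ (orthonormality of the columns of $\Fou_\nbClass$), the claimed two-sided inequality is equivalent to
\begin{align*}
\norm{\inv{\nbVertRed}\sum_{j=1}^{\nbVertRed}\ma{X}_j - \ma{I}_\nbClass}_2 \leq \delta,\qquad \ma{X}_j := \nbVert\,\Fou_\nbClass^\adjoint\vec{\delta}_{\omega_j}\vec{\delta}_{\omega_j}^\adjoint\Fou_\nbClass \in \Rbb^{\nbClass\times\nbClass},
\end{align*}
where each $\ma{X}_j$ is symmetric positive semi-definite.

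Next I would gather the parameters for a matrix Chernoff bound. The collection $\{\ma{X}_1,\dots,\ma{X}_\nbVertRed\}$ is drawn uniformly without replacement from the finite family $\{\nbVert\,\Fou_\nbClass^\adjoint\vec{\delta}_i\vec{\delta}_i^\adjoint\Fou_\nbClass\}_{i=1}^{\nbVert}$. A single uniform draw $\ma{X}$ from this family has $\Ebb\ma{X}=\tfrac1\nbVert\sum_{i=1}^\nbVert\nbVert\,\Fou_\nbClass^\adjoint\vec{\delta}_i\vec{\delta}_i^\adjoint\Fou_\nbClass=\Fou_\nbClass^\adjoint\Fou_\nbClass=\ma{I}_\nbClass$, so $\lambda_{\rm min}(\nbVertRed\,\Ebb\ma{X})=\lambda_{\rm max}(\nbVertRed\,\Ebb\ma{X})=\nbVertRed$; and every member of the family satisfies $\norm{\nbVert\,\Fou_\nbClass^\adjoint\vec{\delta}_i\vec{\delta}_i^\adjoint\Fou_\nbClass}_2=\nbVert\norm{\Fou_\nbClass^\adjoint\vec{\delta}_i}_2^2\leq\max_{1\leq i\leq\nbVert}\nbVert\norm{\Fou_\nbClass^\adjoint\vec{\delta}_i}_2^2=(\cumCoh_{\vec{\pi}}^{\nbClass})^2$, using $\vec{\pi}_i=1/\nbVert$ in Definition~\ref{def:Gr_w_coh}.

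Then I would invoke the matrix Chernoff bound for sampling without replacement (Tropp; see also Gross and Nesme): a sum of $\nbVertRed$ matrices drawn uniformly without replacement from a finite family of positive semi-definite matrices obeys exactly the same upper- and lower-tail bounds as in the i.i.d.\ case, with the parameters $\mu_{\rm min}=\mu_{\rm max}=\nbVertRed$ and $L=(\cumCoh_{\vec{\pi}}^{\nbClass})^2$ identified above. Writing $\ma{Y}:=\sum_{j=1}^{\nbVertRed}\ma{X}_j$ and $\nu:=\cumCoh_{\vec{\pi}}^{\nbClass}$, this gives
\begin{align*}
\Pbb\!\left\{\lambda_{\rm min}(\ma{Y})\leq(1-\delta)\nbVertRed\right\}\leq\nbClass\,\ee^{-\nbVertRed\delta^2/(3\nu^2)},\qquad \Pbb\!\left\{\lambda_{\rm max}(\ma{Y})\geq(1+\delta)\nbVertRed\right\}\leq\nbClass\,\ee^{-\nbVertRed\delta^2/(3\nu^2)},
\end{align*}
after bounding the Chernoff functions $\ee^{\mp\delta}(1\mp\delta)^{-(1\mp\delta)}$ by $\ee^{-\delta^2/3}$ on $(0,1)$, exactly as in the proof of Theorem~\ref{th:rip}. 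A union bound shows $\norm{\inv{\nbVertRed}\ma{Y}-\ma{I}_\nbClass}_2\leq\delta$ with probability at least $1-2\nbClass\,\ee^{-\nbVertRed\delta^2/(3\nu^2)}$, which is $\geq1-\epsilon$ as soon as $\nbVertRed\geq3\delta^{-2}(\cumCoh_{\vec{\pi}}^{\nbClass})^2\log(2\nbClass/\epsilon)$. On this event $(1-\delta)\ma{I}_\nbClass\preceq\inv{\nbVertRed}\ma{Y}\preceq(1+\delta)\ma{I}_\nbClass$, which is the desired conclusion after translating back through $\sig=\Fou_\nbClass\hat{\sig}$.

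The one genuinely new point compared with Theorem~\ref{th:rip} is the passage from independent draws to draws without replacement: the indices $\omega_1,\dots,\omega_\nbVertRed$ are now negatively dependent, so the i.i.d.\ matrix Laplace-transform argument does not apply directly. The hard part is therefore the reduction underlying the cited inequality -- the classical observation (Hoeffding; in the matrix setting, Gross--Nesme and Tropp) that, by convexity of $S\mapsto\mathrm{tr}\,\ee^{\theta S}$ together with an exchangeability argument, the matrix moment-generating function for sampling without replacement is dominated by the one for sampling with replacement. Once this domination is in hand, the remaining steps are identical to the with-replacement proof and the parameter bookkeeping above is routine.
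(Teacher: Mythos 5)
Your proposal is correct and follows essentially the same route as the paper: both reduce the claim to a two-sided spectral bound on $\nbVertRed^{-1}\sum_j \nbVert\,\Fou_\nbClass^\adjoint\vec{\delta}_{\omega_j}\vec{\delta}_{\omega_j}^\adjoint\Fou_\nbClass$, identify $\mu_{\rm min}=\mu_{\rm max}=\nbVertRed$ and $R=(\cumCoh_{\vec{\pi}}^{\nbClass})^2$, and invoke Tropp's matrix Chernoff bound for uniform sampling without replacement, whose tail bounds coincide with the with-replacement case so that the rest of the argument of Theorem~\ref{th:rip} carries over verbatim. Your additional remark on the moment-generating-function domination (Hoeffding/Gross--Nesme) is exactly the mechanism hidden inside the cited lemma, which the paper leaves implicit.
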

\begin{proof}
See \ifels\else Appendix \fi \ref{app:proof_rip}.
\end{proof}

The attentive reader will notice that, unfortunately, the condition on $\nbVertRed$ is identical to the case where the sampling is done with replacement. This is because the theorem that we use to prove this result is obtained by ``coming back'' to sampling with replacement. Yet, we believe that it is still interesting to mention this result for applications where one wants to avoid any duplicated lines in the sampling matrix $\Meas$, which, for example, ensures that $\norm{\Meas}_2 = 1$.

In the general case of non-uniform distributions, we are unfortunately not aware of any result allowing us to handle the case of a sampling without replacement. Yet it would be interesting to study this scenario more carefully in the future as sampling without replacement seems more natural for practical applications.

% ===========================================
\subsection{\ADDED{Intuitive links between a graph's structure and its local coherence}}
\label{sec:intuitive_coh}
\ADDED{Recall that the local graph coherence at node $i$ reads $\norm{\Fou_\nbClass^\adjoint\vec{\delta}_i}_2$. We give here some examples showing how this quantity changes for different graphs. 

Consider first the {\em $d$-dimensional grid with periodic boundary conditions}. In this case, the eigenvectors of its Laplacian are simply the $d$-dimensional classical Fourier modes. For simplicity, we suppose that $\lambda_k\neq\lambda_{k+1}$. In this case, one can show that the local coherence is independent of $i$: $\forall i\in\mathcal{V}\quad \norm{\Fou_\nbClass^\adjoint\vec{\delta}_i}_2=\sqrt{\nbClass/\nbVert}$. The optimal probability $\prob^*$ is therefore uniform for the $d$-dimensional grid with periodic boundary conditions. Without the periodic boundary conditions, the optimal probability is mostly constant with an increase when going to the boundary nodes.
% take two values wether the nodes are within the boundary or part of the boundary. 
An example in 1 dimension is shown in Fig.~\ref{fig:sampling_distribution} for the path graph. 

Let us now consider {\em a graph made of $k$ disconnected components} of size $\nbVert_1, \ldots, \nbVert_k$. We have $\sum_{j=1}^k \nbVert_j = \nbVert$. Considering the combinatorial graph Laplacian, one can show that a basis of $\spann{(\Fou_\nbClass)}$ is the concatenation of the indicator vectors of each component. Moreover, $\spann{(\Fou_\nbClass)}$ is the eigenspace associated to the eigenvalue $0$. The local coherence associated to node $i$ in component $j$ is $1/\sqrt{\nbVert_j}$, and the probability to choose this node reads $p_i^*=1/(\nbClass \nbVert_j)$. If all components have the same size, \ie, $\nbVert_1~=~\ldots~=~\nbVert_k$, the optimal sampling is the uniform sampling. If the components have different sizes, the smaller is a component, the larger is the probability of sampling one of its nodes. With the optimal sampling distribution, each component is sampled with probability $1/k$, no matter the size of the component. The probability that each component is sampled at least once - a necessary condition for perfect recovery - is thus higher than when using uniform sampling. One may relax this strictly disconnected component example into a loosely defined community-structured graph, where $k$ sets of nodes (forming a partition of the $n$ nodes) are more connected with themselves than with the rest of the graph. In this case, one also expects that the probability to sample a node is inversely proportional to the size of the community it belongs to.}

% ===========================================
% RECONSTRUCTION
% ===========================================
\section{Signal recovery}
\label{sec:reconstruction}

In the last section, we proved that it is possible to embed the space of $k$-bandlimited signals into $\Rbb^\nbVertRed$ using a sparse matrix $\Meas \in \Rbb^{\nbVertRed \times \nbVert}$. We now have to design a procedure to estimate accurately any $\sig \in \spann(\Fou_\nbClass)$ from its, possibly \emph{noisy}, $\nbVertRed$ samples. Let us consider that the samples $\meas \in \Rbb^\nbVertRed$ satisfy
\begin{align*}
\meas = \Meas \sig + \err,
\end{align*}
where $\err \in \Rbb^\nbVertRed$ models a noise. Note that $\err$ can be \emph{any} vector $\Rbb^\nbVertRed$. We do not restrict our study to a particular noise structure. The vector $\err$ can be used to represent, \eg, errors 
relative to the signal model or correlated noise. \ADDED{Such a recovery problem of a graph signal given few measured nodes and under a smoothness model is reminiscent to the litterature on 
semi-supervised learning. We link our approach to the appropriate literature in Section~\ref{subsec:eff_decoder}.}

% ===========================================
\subsection{Standard decoder}

In a situation where one knows a basis of $\spann{(\Fou_\nbClass)}$, the standard method to estimate $\sig$ from $\meas$ is to compute the best approximation to $\meas$ from $\spann(\Fou_\nbClass)$, \ie, to solve
\begin{align}
\label{eq:optimal_decoder}
\min_{\vec{z} \in \spann(\Fou_\nbClass)} \norm{ \ma{P}^{-1/2}_\Omega \left(\Meas \vec{z} - \meas \right)}_2.
\end{align}

Note that we introduced a weighting by the matrix $\ma{P}^{-1/2}_\Omega$ in \refeq{eq:optimal_decoder} to account for the fact that $\nbVertRed^{-1} \, \ma{P}^{-1/2}_\Omega \Meas = \nbVertRed^{-1} \, \Meas \ma{P}^{-1/2}$ satisfies the RIP, not $\Meas$ alone. The following theorem proves that the solution of \refeq{eq:optimal_decoder} is a faithful estimation of $\sig$.

\begin{theorem}
\label{th:ideal_decoder}
Let $\Omega$ be a set of $\nbVertRed$ indices selected independently from $\{1, \ldots, \nbVert \}$ using a sampling distribution $\prob \in \Rbb^\nbVert$, and $\Meas$ be the sampling matrix associated to $\Omega$ (see \refeq{eq:subsampling_matrix_def}). Let $\epsilon, \delta \in (0, 1)$ and suppose that $\nbVertRed$ satisfies \refeq{eq:sampling_condition}. With probability at least $1-\epsilon$, the following holds for all $\sig \in \spann(\Fou_\nbClass)$ and all $\err \in \Rbb^\nbVertRed$.
\begin{enumerate}[i)]
\item Let $\sig^*$ be the solution of Problem \refeq{eq:optimal_decoder} with $\meas = \Meas \sig + \err$. Then,
\begin{align}
\label{eq:error_reconstruction_optimal}
\norm{\sig^* -  \sig}_2 \leq \frac{2}{\sqrt{\nbVertRed \, (1 - \delta)}} \norm{\ma{P}^{-1/2}_\Omega \err}_2.
\end{align}
\item There exist particular vectors $\err_0 \in \Rbb^\nbVertRed$ such that the solution $\sig^*$ of Problem \refeq{eq:optimal_decoder} with $\meas = \Meas \sig + \err_0$ satisfies
\begin{align}
\label{eq:lower_error_reconstruction_optimal}
\norm{\sig^* -  \sig}_2 \geq \frac{1}{\sqrt{\nbVertRed \, (1 + \delta)}} \norm{\ma{P}^{-1/2}_\Omega \err_0}_2.
\end{align}
\end{enumerate}
\end{theorem}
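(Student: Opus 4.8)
The plan is to condition on the restricted isometry event of Theorem~\ref{th:rip}. Since $\nbVertRed$ satisfies \refeq{eq:sampling_condition}, with probability at least $1-\epsilon$ the matrix $\nbVertRed^{-1/2}\,\Meas\ma{P}^{-1/2}$ is a near-isometry on $\spann(\Fou_\nbClass)$, and all of the estimates below will hold on this single event, simultaneously for every $\sig\in\spann(\Fou_\nbClass)$ and every $\err\in\Rbb^\nbVertRed$. Throughout I will use the identity $\ma{P}^{-1/2}_\Omega\Meas = \Meas\ma{P}^{-1/2}$ (both send a signal $\vec{z}$ to the vector with $i$-th entry $\prob_{\omega_i}^{-1/2}\vec{z}_{\omega_i}$), so that \refeq{eq:RIP} may be read directly as a two-sided bound on $\norm{\ma{P}^{-1/2}_\Omega\Meas\vec{z}}_2$ for $\vec{z}\in\spann(\Fou_\nbClass)$.

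For part~i) I would use the standard ``feasibility $+$ triangle inequality $+$ RIP'' sandwich. The vector $\sig$ lies in $\spann(\Fou_\nbClass)$, hence is feasible for \refeq{eq:optimal_decoder}, so optimality of $\sig^*$ gives $\norm{\ma{P}^{-1/2}_\Omega(\Meas\sig^*-\meas)}_2 \le \norm{\ma{P}^{-1/2}_\Omega(\Meas\sig-\meas)}_2 = \norm{\ma{P}^{-1/2}_\Omega\err}_2$. Since $\sig^*-\sig\in\spann(\Fou_\nbClass)$, the lower bound in \refeq{eq:RIP} yields $\sqrt{\nbVertRed(1-\delta)}\,\norm{\sig^*-\sig}_2 \le \norm{\ma{P}^{-1/2}_\Omega\Meas(\sig^*-\sig)}_2$. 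Writing $\ma{P}^{-1/2}_\Omega\Meas(\sig^*-\sig) = \ma{P}^{-1/2}_\Omega(\Meas\sig^*-\meas) - \ma{P}^{-1/2}_\Omega(\Meas\sig-\meas)$ and bounding each of the two terms by $\norm{\ma{P}^{-1/2}_\Omega\err}_2$ via the triangle inequality produces the factor $2$ and the claimed bound \refeq{eq:error_reconstruction_optimal}.

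For part~ii) the idea is to feed the decoder a noise it cannot distinguish from a genuine bandlimited signal. Fix any nonzero $\vec{h}\in\spann(\Fou_\nbClass)$ and set $\err_0 := \Meas\vec{h}$, so that $\meas = \Meas\sig+\err_0 = \Meas(\sig+\vec{h})$. On the RIP event the map $\vec{z}\mapsto\ma{P}^{-1/2}_\Omega\Meas\vec{z}$ is injective on $\spann(\Fou_\nbClass)$, hence $\vec{z}=\sig+\vec{h}$, which attains residual zero, is the \emph{unique} minimiser of \refeq{eq:optimal_decoder}; thus $\sig^* = \sig+\vec{h}$ and $\sig^*-\sig=\vec{h}$. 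It then remains only to bound the noise: by the upper bound in \refeq{eq:RIP}, $\norm{\ma{P}^{-1/2}_\Omega\err_0}_2 = \norm{\ma{P}^{-1/2}_\Omega\Meas\vec{h}}_2 \le \sqrt{\nbVertRed(1+\delta)}\,\norm{\vec{h}}_2 = \sqrt{\nbVertRed(1+\delta)}\,\norm{\sig^*-\sig}_2$, which rearranges to \refeq{eq:lower_error_reconstruction_optimal}.

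There is no serious obstacle: part~i) is routine once the RIP is in hand, and part~ii) is an explicit adversarial-noise construction. The only point needing a little care is that in part~ii) one must invoke injectivity of $\ma{P}^{-1/2}_\Omega\Meas$ on $\spann(\Fou_\nbClass)$ — which is precisely the lower RIP bound — to guarantee that the least-squares solution is unique and equal to $\sig+\vec{h}$, rather than merely \emph{some} minimiser with vanishing residual.
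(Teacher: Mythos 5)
Your proof is correct and follows essentially the same route as the paper's: part~i) is the same feasibility--optimality--triangle-inequality--RIP sandwich, and part~ii) uses the same adversarial choice $\err_0=\Meas\vec{h}$ with $\vec{h}\in\spann(\Fou_\nbClass)$ followed by the upper RIP bound. Your added remark on uniqueness of the minimiser in part~ii) (via injectivity from the lower RIP bound) is a small point the paper glosses over, and it is a worthwhile clarification.
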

\begin{proof}
See \ifels\else Appendix \fi \ref{app:proof_decoder}.
\end{proof}

We notice that in the absence of noise $\sig^* = \sig$, as desired. In the presence of noise, the upper bound on the error between $\sig^*$ and $\sig$ increases linearly with $\| \ma{P}^{-1/2}_\Omega \err \|_2$. For a uniform sampling, we have $\| \ma{P}^{-1/2}_\Omega \err \|_2 = \sqrt{n} \; \| \err \|_2$. For a non-uniform sampling, we may have $\| \ma{P}^{-1/2}_\Omega \err \|_2 \gg \sqrt{n} \; \norm{\err}_2$ for some \emph{particular} draws of $\Omega$ and noise vectors $\err$. Indeed, some weights $\prob_{\omega_i}$ might be arbitrarily close to $0$. Unfortunately, one cannot in general improve the upper bound in \refeq{eq:error_reconstruction_optimal} as proved by the second part of the theorem with \refeq{eq:lower_error_reconstruction_optimal}. Non-uniform sampling can thus be very sensitive to noise unlike uniform sampling. However, this is a worst case scenario. First, it is unlikely to draw an index $\omega_i$ where $\prob_{\omega_i}$ is small by construction of the sampling procedure. Second, 
\begin{align*}
\Ebb \, \norm{ \ma{P}^{-1/2}_\Omega \err }_2^2 = \nbVert \norm{\err}_2^2,
\end{align*}
so that $\| \ma{P}^{-1/2}_\Omega \err \|_2$ is not too large on average over the draw of $\Omega$. Furthermore, in our numerical experiments, we noticed that we have $\min_i \prob_i = 1/(\alpha^2 \, \nbVert)$, where $\alpha>1$ is a small constant\footnote{In the numerical experiments presented below, we have $\alpha$ smaller or equal to $3$ in all cases tested  with the optimal sampling distribution for the graphs presented in Fig.~\ref{fig:graphs}.}, for the optimal sampling distributions $\prob = \prob^*$ obtained in practice. This yields $\| \ma{P}^{-1/2}_\Omega \err \|_2 \leq \alpha \, \sqrt{n} \, \norm{\err}_2$, which shows that non-uniform sampling is just slightly more sensitive to noise than uniform sampling in practical settings, with the advantage of reducing the number of measurements. Non-uniform sampling is thus still a beneficial solution.

We have seen a first method to estimate $\sig$ from its measurements. This method has however a major drawback: it requires the estimation of a basis of $\Fou_\nbClass$, which can be computationally very expensive for large graphs. To overcome this issue, we propose an alternative decoder which is computationally much more efficient. This algorithm uses techniques developed to filter graph signals rapidly. We thus briefly recall the principle of these filtering techniques.

% ===========================================
\subsection{Fast filtering on graphs}

A filter is represented by a function $h: \Rbb \rightarrow \Rbb$ in the Fourier (spectral) domain. The signal $\sig$ filtered by $h$ is
\begin{align*}
\sig_{h} := \Fou \, \diag(\hat{\vec{h}}) \, \Fou^\adjoint \sig \in \Rbb^\nbVert,
\end{align*}
where $\hat{\vec{h}} = (h(\eig_1), \ldots, h(\eig_\nbVert))^\adjoint \in \Rbb^{\nbVert}$. Filtering thus consists in multiplying point by point the Fourier transform of $\sig$ with $\hat{\vec{h}}$ and then computing the inverse Fourier transform of the resulting signal.

According to the above definition, filtering \emph{a priori} requires the knowledge of the matrix $\Fou$. To avoid the computation of $\Fou$, one can approximate the function $h$ by a polynomial 
\begin{align*}
\kappa(t) = \sum_{i=0}^d \alpha_i \, t^i
\end{align*}
of degree $d$ and compute $\sig_{\kappa}$, which will approximate $\sig_{h}$. This computation can be done rapidly as it only requires matrix-vector multiplications with $\Lap$, which is sparse in most applications. Indeed,
\begin{align*}
\sig_{\kappa} = \Fou \, \diag(\hat{\vec{\kappa}}) \, \Fou^\adjoint \sig = \sum_{i=0}^d \alpha_i \; \Fou \, \diag(\eig_1^i, \ldots, \eig_\nbVert^i) \, \Fou^\adjoint \sig = \sum_{i=0}^d \alpha_i \; \Lap^i \sig.
\end{align*}
%
% Furthermore, if the polynomial $q$ is a linear combination of, \eg, Chebyshev polynomials, one can use the recurrence relation between these polynomials to reduce the memory requirements for the computation. 
We let the reader refer to \cite{hammond11} for more information on this fast filtering technique. 

To simplify notations, for any polynomial function $\kappa(t) = \sum_{i=0}^d \; \alpha_i \, t^i$ and any matrix $\ma{A} \in \Rbb^{\nbVert \times \nbVert}$, we define
\begin{align}
\label{eq:poly_matrix}
\kappa(\ma{A}) := \sum_{i=0}^d \alpha_i \; \ma{A}^i .
\end{align}
Remark that $\kappa(\Lap) = \Fou \, \kappa(\Eig) \, \Fou^\adjoint$.

% ===========================================
\subsection{Efficient decoder}
\label{subsec:eff_decoder}

Instead of solving \refeq{eq:optimal_decoder}, we propose to estimate $\sig$ by solving the following problem
\begin{align}
\label{eq:practical_decoder}
\min_{\vec{z} \in \Rbb^\nbVert} \norm{ \ma{P}^{-1/2}_\Omega \left(\Meas \vec{z} - \meas \right)}_2^2 + \reg \; \vec{z}^\adjoint g(\Lap) \vec{z},
\end{align}
where $\gamma>0$ and $g: \Rbb \rightarrow \Rbb$ is a nonnegative and nondecreasing polynomial function. These assumptions on $g$ imply that $g(\Lap)$ is positive semi-definite - hence \refeq{eq:practical_decoder} is convex - and that $0 \leq g(\eig_1) \leq \ldots \leq g(\eig_\nbVert)$.

The intuition behind decoder~\eqref{eq:practical_decoder} is quite simple. Consider, for simplicity, that $g$ is the identity. The regularisation term becomes $\vec{z}^\adjoint \Lap \vec{z}$. Remember that a $\nbClass$-bandlimited signal is a signal that lives in the span of the first $\nbClass$ eigenvector of $\Fou$, \ie, where the eigenvalues of $\Lap$ are the smallest. The regularisation term satisfies $\vec{z}^\adjoint \Lap \vec{z} = (\vec{z}^\adjoint \Fou) \Eig (\Fou^\adjoint\vec{z})$, where $\Eig$ is the diagonal matrix containing the eigenvalues of $\Lap$. Therefore, this term penalises signals with energy concentrated at high frequencies more than signals with energy concentrated at low frequencies. In other words, this regularisation term favours the reconstruction of low-frequency signals, \ie, signals approximately bandlimited. Notice also that one can recover the standard decoder defined in \refeq{eq:optimal_decoder} by substituting the function $i_{\lambda_\nbClass} : \Rbb \rightarrow \Rbb \cup \{ + \infty \}$, defined as
\begin{align*}
i_{\lambda_\nbClass}(t) := 
\left\{
\begin{array}{ll}
0 & \text{if } t \in [0, \lambda_\nbClass],\\
+\infty & \text{otherwise},
\end{array}
\right.
\end{align*}
for $g$ in \refeq{eq:practical_decoder}.

We argue that solving \refeq{eq:practical_decoder} is computationally efficient because $\Lap$ is sparse in most applications. Therefore, any method solving \refeq{eq:practical_decoder} that requires only matrix-vector multiplications with $g(\Lap)$ can be implemented efficiently, as it requires multiplications with $\Lap$ only (recall the definition of $g(\Lap)$ in \refeq{eq:poly_matrix}). Examples of such methods are the conjugate gradient method or any gradient descend methods. Let us recall that one can find a solution to \refeq{eq:practical_decoder} by solving
\begin{align}
\label{eq:exact_solution}
\left(\Meas^\adjoint \ma{P}_\Omega^{-1} \Meas + \reg \, g(\Lap) \right) \vec{z} = \Meas^\adjoint \ma{P}_\Omega^{-1} \vec{y}.
\end{align}

The next theorem bounds the error between the original signal $\sig$ and the solution of \refeq{eq:practical_decoder}.

\begin{theorem}
\label{th:practical_decoder}
Let $\Omega$ be a set of $\nbVertRed$ indices selected independently from $\{1, \ldots, \nbVert \}$ using a sampling distribution $\prob \in \Rbb^\nbVert$, $\Meas$ be the sampling matrix associated to $\Omega$ (see \refeq{eq:subsampling_matrix_def}), and $M_{\rm max}>0$ be a constant such that $\norm{\Meas \ma{P}^{-1/2}}_2 \leq M_{\rm max}$. Let $\epsilon, \delta \in (0, 1)$ and suppose that $\nbVertRed$ satisfies \refeq{eq:sampling_condition}. With probability at least $1-\epsilon$, the following holds for all $\sig \in \spann(\Fou_\nbClass)$, all $\err \in \Rbb^{\nbVert}$, all $\reg > 0$, and all nonnegative and nondecreasing polynomial functions $g$ such that $g(\eig_{\nbClass+1}) > 0$.

Let $\sig^*$ be the solution of \refeq{eq:optimal_decoder} with $\meas = \Meas \sig + \err$. Then,
\begin{align}
\label{eq:bound_alpha}
\norm{ \vec{\alpha}^* - \sig }_2
& \; \leq \;
\inv{\sqrt{\nbVertRed (1-\delta)}} 
\left[\left( 2 +  \frac{M_{\rm max}}{\sqrt{\reg g(\eig_{\nbClass+1})}}\right)\norm{\ma{P}^{-1/2}_\Omega \err}_2\right. \nonumber \\
& \hspace{4cm} \left. + \left( M_{\rm max} \sqrt{\frac{g(\eig_\nbClass)}{g(\eig_{\nbClass+1})}} + \sqrt{\reg g(\eig_\nbClass)}\right) \norm{ \sig  }_2 \right],
\end{align}
and
\begin{align}
\label{eq:bound_beta}
\norm{ \vec{\beta}^*}_2
\; \leq \;
 \frac{1}{\sqrt{\reg g(\eig_{\nbClass+1})}} \norm{ \ma{P}^{-1/2}_\Omega \err }_2 +  \sqrt{\frac{g(\eig_\nbClass)}{g(\eig_{\nbClass+1})}} \norm{ \sig  }_2,
\end{align}
where $\vec{\alpha}^* := \Fou_\nbClass \Fou_\nbClass^\adjoint \, \sig^*$ and $\vec{\beta}^* := (\ma{I} - \Fou_\nbClass \Fou_\nbClass^\adjoint) \, \sig^*$.
\end{theorem}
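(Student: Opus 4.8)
The plan is to combine the optimality of the minimiser with the restricted isometry property of Theorem~\ref{th:rip}, after splitting the estimate $\sig^*$ produced by the efficient decoder~\refeq{eq:practical_decoder} into its bandlimited part $\vec{\alpha}^* = \Fou_\nbClass\Fou_\nbClass^\adjoint\sig^*$ and its high-frequency part $\vec{\beta}^* = (\ma{I} - \Fou_\nbClass\Fou_\nbClass^\adjoint)\sig^*$, so that $\sig^* = \vec{\alpha}^* + \vec{\beta}^*$ with $\vec{\alpha}^* - \sig \in \spann(\Fou_\nbClass)$ and $\vec{\beta}^* \perp \spann(\Fou_\nbClass)$.

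First I would use that $\sig$ is itself feasible in~\refeq{eq:practical_decoder}: since $\meas = \Meas\sig + \err$, comparing the objective value at $\sig^*$ with the one at $\sig$ yields
\begin{align*}
\norm{\ma{P}^{-1/2}_\Omega(\Meas\sig^* - \meas)}_2^2 + \reg\,(\sig^*)^\adjoint g(\Lap)\,\sig^* \;\leq\; \norm{\ma{P}^{-1/2}_\Omega\err}_2^2 + \reg\,\sig^\adjoint g(\Lap)\,\sig.
\end{align*}
Because $g$ is nonnegative and nondecreasing and $\sig \in \spann(\Fou_\nbClass)$, I would bound $\sig^\adjoint g(\Lap)\sig \leq g(\eig_\nbClass)\norm{\sig}_2^2$ on the right, and on the left expand $(\sig^*)^\adjoint g(\Lap)\sig^*$ in the eigenbasis of $\Lap$ and retain only the indices above $\nbClass$ to get $(\sig^*)^\adjoint g(\Lap)\sig^* \geq g(\eig_{\nbClass+1})\norm{\vec{\beta}^*}_2^2$. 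Since both terms on the left-hand side are nonnegative, this single inequality yields, via $\sqrt{a^2+b^2}\leq a+b$, both a bound on the data-fidelity term $\norm{\ma{P}^{-1/2}_\Omega(\Meas\sig^*-\meas)}_2$ and exactly the bound~\refeq{eq:bound_beta} on $\norm{\vec{\beta}^*}_2$ (where the hypothesis $g(\eig_{\nbClass+1})>0$ is used).

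Next I would handle $\norm{\vec{\alpha}^* - \sig}_2$. Since $\vec{\alpha}^* - \sig \in \spann(\Fou_\nbClass)$, Theorem~\ref{th:rip} applies and gives $\sqrt{\nbVertRed(1-\delta)}\,\norm{\vec{\alpha}^*-\sig}_2 \leq \norm{\ma{P}^{-1/2}_\Omega\Meas(\vec{\alpha}^*-\sig)}_2$ (recall that $\ma{P}^{-1/2}_\Omega\Meas = \Meas\ma{P}^{-1/2}$, which is what the RIP is stated for). Using $\sig^* = \vec{\alpha}^* + \vec{\beta}^*$ and $\meas - \Meas\sig = \err$, I would write $\ma{P}^{-1/2}_\Omega\Meas(\vec{\alpha}^* - \sig) = \ma{P}^{-1/2}_\Omega(\Meas\sig^* - \meas) + \ma{P}^{-1/2}_\Omega\err - \ma{P}^{-1/2}_\Omega\Meas\vec{\beta}^*$, and apply the triangle inequality. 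The first term is already controlled above; the second is the noise term $\norm{\ma{P}^{-1/2}_\Omega\err}_2$; and the third satisfies $\norm{\ma{P}^{-1/2}_\Omega\Meas\vec{\beta}^*}_2 = \norm{\Meas\ma{P}^{-1/2}\vec{\beta}^*}_2 \leq M_{\rm max}\norm{\vec{\beta}^*}_2$. Plugging in~\refeq{eq:bound_beta} for $\norm{\vec{\beta}^*}_2$ and grouping the coefficients of $\norm{\ma{P}^{-1/2}_\Omega\err}_2$ and $\norm{\sig}_2$ then produces~\refeq{eq:bound_alpha} after dividing by $\sqrt{\nbVertRed(1-\delta)}$.

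I do not expect a serious obstacle: the argument is essentially the triangle inequality together with $\sqrt{a^2+b^2}\leq a+b$. The one point requiring care is that the RIP of Theorem~\ref{th:rip} controls $\Meas\ma{P}^{-1/2}$ only on $\spann(\Fou_\nbClass)$, hence says nothing about $\Meas\ma{P}^{-1/2}\vec{\beta}^*$; this is what forces the two non-generic ingredients of the proof, namely the operator-norm constant $M_{\rm max}$ to absorb $\Meas\ma{P}^{-1/2}\vec{\beta}^*$, and the regulariser together with the hypothesis $g(\eig_{\nbClass+1})>0$ to keep $\norm{\vec{\beta}^*}_2$ small.
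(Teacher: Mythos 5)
Your proposal is correct and follows essentially the same route as the paper's own proof: optimality of $\sig^*$ against the feasible point $\sig$, the spectral splitting of the regulariser to bound both the data-fidelity term and $\norm{\vec{\beta}^*}_2$, and then the RIP applied to $\vec{\alpha}^*-\sig\in\spann(\Fou_\nbClass)$ with $M_{\rm max}$ absorbing the out-of-model term $\Meas\ma{P}^{-1/2}\vec{\beta}^*$. No gaps.
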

\begin{proof}
See \ifels\else Appendix \fi \ref{app:proof_decoder}.
\end{proof}

In the above theorem, $\vec{\alpha}^*$ is the orthogonal projection of $\sig^*$ onto $\spann(\Fou_\nbClass)$ and $\vec{\beta}^*$ onto the orthogonal complement of $\spann(\Fou_\nbClass)$. To obtain a bound on $\norm{ \sig^* - \sig }_2$, one can simply use the triangle inequality and the bounds \refeq{eq:bound_alpha} and \refeq{eq:bound_beta}. 

In the absence of noise, we thus have
\begin{align*}
\norm{ \sig^* - \sig }_2
\; \leq \;
\inv{\sqrt{\nbVertRed (1-\delta)}} 
\left( M_{\rm max} \sqrt{\frac{g(\eig_\nbClass)}{g(\eig_{\nbClass+1})}} + \sqrt{\reg g(\eig_\nbClass)}\right) \norm{ \sig  }_2 +  \sqrt{\frac{g(\eig_\nbClass)}{g(\eig_{\nbClass+1})}} \norm{ \sig  }_2.
\end{align*}
If $g(\eig_\nbClass) = 0$, we notice that we obtain a perfect reconstruction. Note that as $g$ is supposed to be nondecreasing and nonnegative, $g(\eig_\nbClass) = 0$ implies that we also have $g(\eig_1) = ... = g(\eig_{\nbClass-1}) = 0$. If $g(\eig_\nbClass) \neq 0$, the above bound shows that we should choose $\reg$ as close as possible to\footnote{Notice that if $\meas$ is in the range of $\Meas$, then the solution of Problem~\refeq{eq:practical_decoder} tends to the solution of $\min_{\vec{z}} \vec{z}^\adjoint g(\Lap) \vec{z} \quad \text{s.t.}\quad \meas = \Meas \vec{z}$ in the limit where $\reg \rightarrow 0^+$.} $0$ and seek to minimise the ratio $g(\eig_\nbClass)/g(\eig_{\nbClass+1})$ to minimise the upper bound on the reconstruction error. Notice that if $g(\Lap) = \Lap^l$, with $l \in \Nbb^*$, then the ratio $g(\eig_\nbClass)/g(\eig_{\nbClass+1})$ decreases as $l$ increases. Increasing the power of $\Lap$ and taking $\reg$ sufficiently small to compensate the potential growth of $g(\eig_\nbClass)$ is thus a simple solution to improve the reconstruction quality in the absence of noise.

In the presence of noise, for a fixed function $g$, the upper bound on the reconstruction error is minimised for a value of $\reg$ proportional to $\| \ma{P}^{-1/2}_\Omega \err \|_2 /\norm{ \sig  }_2$. To optimise the result further, one should seek to have $g(\eig_{\nbClass})$ as small as possible and $g(\eig_{\nbClass+1})$ as large as possible.

\ADDED{Decoder~\eqref{eq:practical_decoder} has close links to several ``decoders'' used in the semi-supervised learning litterature~\cite{Chapelle_SSL_book} that attempt to estimate the label of unlabeled data from a small number of labeled data, by supposing that the label functions are smooth either  1) in the data space or  2) in a suitable transformed space --using similarity kernels that define graphs modeling the underlying manifold for instance--~\cite{zhu_ICML2003,Zhou_NIPS2004}, or 3) in both~\cite{Belkin_JMLR2006,bengio_chapter2006}. In our work, smoothness is defined solely using the graph (case 2) which we suppose given; there is no equivalent of a data space (case 1) on which to define another smoothness constraint. Nevertheless, other types of smoothness could be considered instead of the Laplacian smoothness $\vec{z}^\adjoint g(\Lap) \vec{z}$. For instance, one could decide to use an $l_1$ penalisation of the 
graph difference operator, as in~\cite{trend_filtering_AISTATS2015}, to allow the signal to depart from the smoothness prior at some nodes. The closest semi-supervised framework to our naive decoder~\eqref{eq:optimal_decoder} is found in~\cite{Belkin_NIPS2003}, where the authors constrain the solution to be in $\spann{(\Fou_k)}$ without specifying precisely the value of $k$; and the closest technique to our efficient decoder~\eqref{eq:practical_decoder} is found in~\cite{Zhou_NIPS2004} even though their cost function has an additional term of the form $\sum_{i\not\in\Omega} \vec{z}_i^2$ compared to ours. Another decoding method may be found in~\cite{Fergus_NIPS2009}, where the authors have a similar cost function to ours. However, they work in the data space (case 1 above) and try to optimise this cost function directly in this space, \ie, without explicitly constructing and using a graph.

Even though we use similar decoders than in the semi-supervised learning literature, let us stress that an important difference is that we choose beforehand which nodes to sample/label. In this sense, our work also has connections with the literature in active learning~\cite{settles_active_learning}, more precisely with the works that concentrate on the offline (all nodes to label are chosen from the start), single-batch (the nodes to sample are drawn simultaneously) selection problem~\cite{Fu2012}, such as in~\cite{Gu_ICDM2012,bilgic_AAAI2010,ji_AISTATS2012,ma_NIPS2013}. Yet another connection may be found in the area of Gaussian Random Fields~\cite{zhang2015graph,gadde_ICASSP2015}. The originality of our method compared to these works comes from our specific smoothness model ($\spann{(\Fou_\nbClass)}$) for which we devise an original sampling scenario which ensures stable reconstruction when coupled with the decoder~\eqref{eq:practical_decoder} (see Theorem \ref{th:practical_decoder}).}

% ===========================================
% SAMPLING DISTRIBUTION
% ===========================================
\section{Estimation of the optimal sampling distribution}
\label{sec:estimation_distribution}

In this section, we explain how to estimate the optimal sampling distribution $\vec{p}^*$ efficiently. This distribution is entirely defined by the values 
$\norm{\Fou_\nbClass^\adjoint \vec{\delta}_i}_2^2$, $i = 1, \ldots, \nbVert$ (see \refeq{eq:optimal_distribution}). In order to be able to deal with large graphs and potentially large $\nbClass$, we want to avoid 
the computation of a basis of $\spann(\Fou_\nbClass)$ to estimate this distribution. Instead, we take another route that consists in filtering a small number of random signals. Note that the idea of filtering few random signals to estimate the number of eigenvalues of a Hermitian matrix in a given interval is already proposed and studied in \cite{napoli13}. We show here that this technique can be used to estimate $\vec{p}^*$.

For this estimation, we will need to use low-pass filters. For any $\lambda>0$, the filter ideal low-pass filter $b_\lambda : \Rbb \rightarrow \Rbb$ with cut-off frequency $\lambda$ satisfies 
\begin{align*}
b_\lambda(t) = 
\left\{
\begin{array}{ll}
1 & \text{if } t \in [0, \lambda],\\
0 & \text{otherwise}.
\end{array}
\right.
\end{align*}
%

% ===========================================
\subsection{Principle of the estimation}

We recall that our goal is to estimate $\| \Fou_\nbClass^\adjoint \vec{\delta}_i \|_2^2$ for all $i \in \{ 1, \ldots, \nbVert \}$. To understand how our method works, consider that $\eig_\nbClass$ is known for the moment. Let $\vec{r} \in \Rbb^\nbVert$ be a vector with independent random entries that follow a standard normal distribution. By filtering $\vec{r}$ with $b_{\eig_\nbClass}$, we obtain
\begin{align*}
\vec{r}_{b_{\eig_\nbClass}} 
\; = \;
\Fou \; \diag(\eig_1, \ldots, \eig_\nbClass, 0, \ldots, 0) \; \Fou^\adjoint \; \vec{r}
\; = \;
\Fou_\nbClass \Fou_\nbClass^\adjoint \; \vec{r}.
\end{align*}
The estimation of the optimal sampling distribution is based on the following property. The $i^\th$ entry of $\vec{r}_{b_{\eig_\nbClass}}$ is
\begin{align*}
(\vec{r}_{b_{\eig_\nbClass}})_i 
\; = \;
\vec{r}_{b_{\eig_\nbClass}}^\adjoint \vec{\delta}_i
\; = \;
\vec{r}^\adjoint \Fou_\nbClass \Fou_\nbClass^\adjoint \vec{\delta}_i,
\end{align*}
and the mean of $(\vec{r}_{b_{\eig_\nbClass}})_i^2$ satisfies
\begin{align*}
\Ebb \, (\vec{r}_{b_{\eig_\nbClass}})_i^2
\; = \;
\vec{\delta}_i^\adjoint \Fou_\nbClass \Fou_\nbClass^\adjoint \; \Ebb (\vec{r} \vec{r}^\adjoint) \; \Fou_\nbClass \Fou_\nbClass^\adjoint \vec{\delta}_i
\; = \;
\vec{\delta}_i^\adjoint \Fou_\nbClass \Fou_\nbClass^\adjoint \Fou_\nbClass \Fou_\nbClass^\adjoint \vec{\delta}_i
\; = \;
\vec{\delta}_i^\adjoint \Fou_\nbClass \Fou_\nbClass^\adjoint \vec{\delta}_i
\; = \;
\norm{\Fou_\nbClass^\adjoint \vec{\delta}_i}_2^2.
\end{align*}
This shows that $(\vec{r}_{b_{\eig_\nbClass}})_i^2$ is an unbiased estimation of $\| \Fou_\nbClass^\adjoint \vec{\delta}_i \|_2^2$, the quantity we want to evaluate. Therefore, a possibility to estimate the optimal sampling distribution consists in filtering $L$ random signals $\vec{r}^{1}, \ldots, \vec{r}^{L}$ with the same distribution as $\vec{r}$ and average $(\vec{r}_{b_{\eig_\nbClass}}^1)_i^2, \ldots, (\vec{r}_{b_{\eig_\nbClass}}^L)_i^2$ for each $i \in \{ 1, \ldots, \nbVert \}$. The next theorem shows that if $\eig_k$ is known, then $L \geq O(\log(\nbVert))$ random vectors are sufficient to have an accurate estimation of $\| \Fou_\nbClass^\adjoint \vec{\delta}_i \|_2^2$.

In the theorem below, we consider a realistic scenario where we filter the signals with a polynomial approximation of $b_{\lambda}$. This theorem shows how this approximation affects the estimation of $\| \Fou_\nbClass^\adjoint \vec{\delta}_i \|_2^2$. We denote the polynomial filter approximating $b_{\lambda}$ by $c_{\lambda}: \Rbb \rightarrow \Rbb$. It satisfies
\begin{align}
\label{eq:def_c_lambda}
c_{\lambda} = b_{\lambda} + \hat{e}_{\lambda},
\end{align}
where $\hat{e}_{\lambda}: \Rbb \rightarrow \Rbb$ models the approximation error. We define 
\begin{align*}
\ma{E}_{\lambda} := \diag(\hat{e}_{\lambda}(\eig_1), \ldots, \hat{e}_{\lambda}(\eig_\nbVert)) \in \Rbb^{\nbVert \times \nbVert}.
\end{align*}
\begin{theorem}
\label{th:jl}
Let $\vec{r}^{1}, \ldots, \vec{r}^{L} \in \Rbb^\nbVert$ be $L$ independent zero-mean Gaussian random vectors with covariance $L^{-1} \, \ma{I}$. Denote by $\vec{r}_{c_{\lambda}}^1, \ldots, \vec{r}_{c_{\lambda}}^L \in \Rbb^\nbVert$ the signals $\vec{r}^{1}, \ldots, \vec{r}^{L}$ filtered by $c_{\lambda}$ with $\lambda>0$. Let $j^*$ be the largest integer such that $\eig_{j^*} \leq \lambda$. There exists an absolute constant $C>0$ such for any $\epsilon, \delta \in (0, 1)$, with probability at least $1-\epsilon$, the filtered signals  satisfy
\begin{align*}
(1-\delta) \, \abs{ \norm{\Fou_{j^*}^\adjoint \vec{\delta}_i }_2 - \norm{\ma{E}_{\lambda} \Fou^\adjoint \vec{\delta}_i }_2 }^2
\; \leq \;
\sum_{l=1}^L \; (\vec{r}_{c_\lambda}^l)_i^2
\; \leq \;
(1+\delta) \, \abs{ \norm{\Fou_{j^*}^\adjoint \vec{\delta}_i }_2 + \norm{\ma{E}_{\lambda} \Fou^\adjoint \vec{\delta}_i }_2 }^2,
\end{align*}
for all $i \in \{1, \ldots, \nbVert \}$, provided that
\begin{align*}
L \; \geq \; \frac{C}{\delta^2} \log \left( \frac{2\nbVert}{\epsilon} \right).
\end{align*}
\end{theorem}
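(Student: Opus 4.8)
The plan is to recognise that, for each node $i$, the quantity $\sum_{l=1}^L (\vec{r}_{c_\lambda}^l)_i^2$ is the squared Euclidean norm of a fixed vector passed through a Gaussian random projection, so that the statement is essentially the distributional Johnson--Lindenstrauss lemma applied simultaneously to $\nbVert$ vectors. Since $c_\lambda$ is a polynomial, filtering gives $\vec{r}_{c_\lambda}^l = c_\lambda(\Lap)\,\vec{r}^l$, and $c_\lambda(\Lap)$ is symmetric. Writing $\vec{m}_i := c_\lambda(\Lap)\,\vec{\delta}_i$, we get $(\vec{r}_{c_\lambda}^l)_i = \vec{\delta}_i^\adjoint c_\lambda(\Lap)\,\vec{r}^l = \vec{m}_i^\adjoint \vec{r}^l$, hence
\[
\sum_{l=1}^L (\vec{r}_{c_\lambda}^l)_i^2 \; = \; \sum_{l=1}^L (\vec{m}_i^\adjoint \vec{r}^l)^2 \; = \; \norm{\ma{G}\,\vec{m}_i}_2^2,
\]
where $\ma{G} \in \Rbb^{L \times \nbVert}$ has $(\vec{r}^l)^\adjoint$ as its $l^\th$ row, \ie\ $\ma{G}$ is a Gaussian matrix with i.i.d.\ $\mathcal{N}(0, L^{-1})$ entries.

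Next I would pin down $\norm{\vec{m}_i}_2$ in terms of the quantities appearing in the statement. Using $c_\lambda(\Lap) = \Fou\,c_\lambda(\Eig)\,\Fou^\adjoint$, the decomposition $c_\lambda(\Eig) = b_\lambda(\Eig) + \ma{E}_\lambda$ coming from \refeq{eq:def_c_lambda}, and the fact that $\Fou\,b_\lambda(\Eig)\,\Fou^\adjoint = \Fou_{j^*}\Fou_{j^*}^\adjoint$ since $b_\lambda(\eig_j) = 1$ exactly for $j \leq j^*$, the orthonormality of $\Fou$ gives $\norm{\vec{m}_i}_2 = \norm{(b_\lambda(\Eig) + \ma{E}_\lambda)\,\Fou^\adjoint\vec{\delta}_i}_2$ together with $\norm{b_\lambda(\Eig)\,\Fou^\adjoint\vec{\delta}_i}_2 = \norm{\Fou_{j^*}^\adjoint\vec{\delta}_i}_2$. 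The triangle inequality then yields
\[
\abs{\norm{\Fou_{j^*}^\adjoint\vec{\delta}_i}_2 - \norm{\ma{E}_\lambda\Fou^\adjoint\vec{\delta}_i}_2} \; \leq \; \norm{\vec{m}_i}_2 \; \leq \; \norm{\Fou_{j^*}^\adjoint\vec{\delta}_i}_2 + \norm{\ma{E}_\lambda\Fou^\adjoint\vec{\delta}_i}_2.
\]

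It then remains to prove a concentration inequality of the form $(1-\delta)\norm{\vec{m}_i}_2^2 \leq \sum_{l=1}^L (\vec{r}_{c_\lambda}^l)_i^2 \leq (1+\delta)\norm{\vec{m}_i}_2^2$ uniformly over $i$. For fixed $i$, the variables $\vec{m}_i^\adjoint\vec{r}^l$, $l = 1,\ldots,L$, are independent $\mathcal{N}(0, L^{-1}\norm{\vec{m}_i}_2^2)$, so $\sum_{l=1}^L (\vec{m}_i^\adjoint\vec{r}^l)^2$ is distributed as $L^{-1}\norm{\vec{m}_i}_2^2\,\chi^2_L$; a standard chi-squared tail bound (\eg\ Laurent--Massart) provides an absolute constant $c>0$ with
\[
\Pbb\!\left(\abs{\sum_{l=1}^L (\vec{r}_{c_\lambda}^l)_i^2 - \norm{\vec{m}_i}_2^2} > \delta\,\norm{\vec{m}_i}_2^2\right) \; \leq \; 2\,\ee^{-c L \delta^2}
\]
for all $\delta \in (0,1)$. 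A union bound over $i \in \{1,\ldots,\nbVert\}$ shows that the two-sided bound holds for all $i$ with probability at least $1 - 2\nbVert\,\ee^{-cL\delta^2}$, which is at least $1-\epsilon$ once $L \geq (c^{-1}/\delta^2)\log(2\nbVert/\epsilon)$; set $C := c^{-1}$. On this event, combining with the previous display, $\sum_l (\vec{r}_{c_\lambda}^l)_i^2 \geq (1-\delta)\norm{\vec{m}_i}_2^2 \geq (1-\delta)\abs{\norm{\Fou_{j^*}^\adjoint\vec{\delta}_i}_2 - \norm{\ma{E}_\lambda\Fou^\adjoint\vec{\delta}_i}_2}^2$, and symmetrically for the upper bound, which is the claimed inequality.

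I do not expect a serious obstacle here: the statement is the distributional Johnson--Lindenstrauss lemma for the $\nbVert$ fixed vectors $\vec{m}_1,\ldots,\vec{m}_\nbVert$, dressed up through the polynomial filter. The points needing care are the algebraic bookkeeping that separates the ideal low-pass part $\Fou_{j^*}\Fou_{j^*}^\adjoint$ from the approximation error $\ma{E}_\lambda$ via a single triangle inequality, and checking that the constant $C$ is genuinely absolute --- independent of $\nbVert$, $L$, $\delta$, $\epsilon$, the graph, and the filter --- which is guaranteed because the chi-squared tail bound used has absolute constants.
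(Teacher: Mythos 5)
Your proposal is correct and follows essentially the same route as the paper's proof: identify $\sum_l(\vec{r}_{c_\lambda}^l)_i^2$ as a concentration problem around its mean $\norm{c_\lambda(\Lap)\vec{\delta}_i}_2^2$, apply a tail bound for each $i$, take a union bound over the $\nbVert$ nodes, and finish with the triangle inequality separating $\norm{\Fou_{j^*}^\adjoint\vec{\delta}_i}_2$ from $\norm{\ma{E}_\lambda\Fou^\adjoint\vec{\delta}_i}_2$. The only difference is the concentration tool — you exploit the exact Gaussianity to get a $\chi^2_L$ tail bound, while the paper routes through subgaussian/subexponential norms and a Bernstein-type inequality — but both yield the same form of bound with an absolute constant.
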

\begin{proof}
See \ifels\else Appendix \fi \ref{app:proof_jl}.
\end{proof}

The above theorem indicates that if $\lambda \in [\eig_\nbClass, \eig_{\nbClass+1})$ and $\hat{e}$ is null, then $\sum_{l=1}^L \; (\vec{r}_{c_{\lambda_k}}^l)_i^2$ estimates $\norm{\Fou_\nbClass^\adjoint \vec{\delta}_i }_2^2$ with an error at most $\delta$ on each entry $i \in \{1, \ldots, \nbVert \}$. Recalling that the optimal sampling distribution has entries
\begin{align*}
\vec{p}^*_i = \frac{\norm{\Fou_\nbClass^\adjoint \vec{\delta}_i }_2^2}{k} = \frac{\norm{\Fou_\nbClass^\adjoint \vec{\delta}_i }_2^2}{\sum_{i=1}^\nbVert \norm{\Fou_\nbClass^\adjoint \vec{\delta}_i }_2^2},
\end{align*}
we see that $\tilde{\vec{p}} \in \Rbb^\nbVert$ with entries
\begin{align*}
\tilde{\vec{p}}_i := \frac{\sum_{l=1}^L \; (\vec{r}_{c_{\lambda_k}}^l)_i^2}{\sum_{i=1}^\nbVert\sum_{l=1}^L \; (\vec{r}_{c_{\lambda_k}}^l)_i^2}
\end{align*}
approximates the optimal sampling distribution. If we know $\eig_{\nbClass}$ and $\eig_{\nbClass+1}$, we can thus approximate $\vec{p}^*$. In order to complete the method, we now need a solution to estimate $\eig_j$ with $j=k$ or $j = k+1$.

% ===========================================
\subsection{Estimating $\lambda_\nbClass$ and $\lambda_{\nbClass+1}$}

Let $\lambda \in (0, \eig_\nbVert)$. Theorem \ref{th:jl} shows that, with probability $1-\epsilon$,
\begin{align*}
(1-\delta) \, \sum_{i=1}^\nbVert \norm{\Fou_{j^*}^\adjoint \vec{\delta}_i }_2^2
\; \leq \;
\sum_{i=1}^\nbVert \sum_{l=1}^L \; (\vec{r}_{b_\lambda}^l)_i^2
\; \leq \;
(1+\delta) \, \sum_{i=1}^\nbVert \norm{\Fou_{j^*}^\adjoint \vec{\delta}_i }_2^2,
\end{align*}
when using the filter $b_\lambda$. Noticing that
\begin{align*}
\sum_{i=1}^\nbVert \norm{\Fou_{j^*}^\adjoint \vec{\delta}_i }_2^2 = \norm{\Fou_{j^*}}_{\rm Frob}^2 = j^*,
\end{align*}
as the columns of $\Fou$ are normalised, yields
\begin{align*}
(1-\delta) \; j^*
\; \leq \;
\sum_{i=1}^\nbVert \sum_{l=1}^L \; (\vec{r}_{b_\lambda}^l)_i^2
\; \leq \;
(1+\delta) \; j^*.
\end{align*}
In other words, the total energy of the filtered signals is tightly concentrated around $j^*$, which is the largest integer such that $\eig_{j^*} \leq \lambda$. Therefore, the total energy of the filtered signals provides an estimation of the number of eigenvalues of $\Lap$ that are below $\lambda$. 

Using this phenomenon, one can obtain, by dichotomy, an interval $(\underline{\lambda}, \bar{\lambda})$ such that $\nbClass-1$ eigenvalues are below $\underline{\lambda}$ and $\nbClass$ eigenvalues are below $\bar{\lambda}$ and thus obtain an estimation of $\eig_\nbClass$. The same procedure can be used to estimate $\eig_{\nbClass+1}$. Note that we cannot filter the signals using an ideal low-pass filter in practice, so that an additional error will slightly perturb the estimation.

% ===========================================
\subsection{The complete algorithm}

We now have all the tools to design an algorithm that estimates the optimal sampling distribution. This  is summarised in Algorithm \ref{alg:optimal_distribution}. In practice, we noticed that using $L = 2 \, \log(\nbVert)$ signals is usually enough to obtain a reasonable approximation of the sampling distribution. We also only estimate $\eig_{\nbClass}$ and do not estimate $\eig_{\nbClass+1}$. Steps \ref{step:start} to \ref{step:end} of the algorithm concern the estimation of $\eig_\nbClass$ by dichotomy. The estimated optimal sampling distribution $\tilde{\vec{p}} \in \Rbb^\nbVert$ is defined in Step \ref{step:result}. Finally, we would like to mention that a better estimation of $\eig_k$ and of the sampling distribution could be obtained by running multiple times Algorithm \ref{alg:optimal_distribution} and averaging the results. In the following experiments, this algorithm is run only once but already yields good results.

\begin{algorithm}
\caption{Estimation of the optimal sampling distribution}
\label{alg:optimal_distribution}
\begin{algorithmic}[1]
% Initialisation
\Input Precision parameter $\varepsilon \in (0, 1)$, and bandlimit $\nbClass$.
\State Set $L = 2 \, \log(\nbVert)$ and draw $L$ random vectors $\vec{r}^{1}, \ldots, \vec{r}^{L} \in \Rbb^\nbVert$ as in Theorem \ref{th:jl}.
\State Estimate $\eig_\nbVert$ and set $\underline{\lambda} = 0$, $\bar{\lambda} = \eig_\nbVert$, $\lambda = \eig_\nbVert/2$, and compute $c_\lambda$ that approximates the ideal low-pass filter $b_\lambda$.
% Estimation of lambda_k
\While{${\rm round} \left(\sum_{i=1}^\nbVert \sum_{l=1}^L \; (\vec{r}_{c_\lambda}^l)_i^2 \right) \neq \nbClass$ or $\abs{\underline{\lambda}-\bar{\lambda}}>\varepsilon \cdot \bar{\lambda}$}
\label{step:start}
	\If {${\rm round} \left(\sum_{i=1}^\nbVert \sum_{l=1}^L \; (\vec{r}_{c_\lambda}^l)_i^2 \right) \geq \nbClass$}
		\State Set $\bar{\lambda} = \lambda$.
	\Else 
		\State Set $\underline{\lambda} = \lambda$.
\EndIf
\State Set $\lambda = (\underline{\lambda}+\bar{\lambda})/2$, and compute $c_\lambda$ that approximates the ideal low-pass filter $b_\lambda$.
\EndWhile \label{step:end}
\Output Set $\tilde{\vec{p}}_i = \left(\sum_{l=1}^L \; (\vec{r}_{c_\lambda}^l)_i^2 \right)/\left(\sum_{i=1}^\nbVert \sum_{l=1}^L \; (\vec{r}_{c_\lambda}^l)_i^2 \right)$. \label{step:result}
\end{algorithmic}
\end{algorithm}
%

% ===========================================
% EXPERIMENTS
% ===========================================
\section{Experiments}
\label{sec:experiments}

In this section, we run several experiments to illustrate the above theoretical findings. First we show how the sampling distribution affects the number of measurements required to ensure that the RIP holds. Then, we show how the reconstruction quality is affected with the choice of $g$ and $\reg$ in \refeq{eq:practical_decoder}.

All our experiments are done using \MODIF{five} different types of graph, all available in the GSP toolbox \cite{perraudin14} and presented in Fig.~\ref{fig:graphs}. We use a) different community-type graphs of size $\nbVert = 1000$, b) the graph representing the Minnesota road network of size $\nbVert = 2642$, c) the graph of the Stanford bunny of size $\nbVert = 2503$, \ADDED{d) the unweighted path graph of size $\nbVert = 1000$ and e) a binary tree of depth $9$ and size $\nbVert = 1023$. We recall that each node in the unweighted path graph is connected to its left and right neighbours with weight $1$, except for the two nodes at the boundaries which have only one neighbour.} We use the combinatorial Laplacian in all experiments. All samplings are done in the conditions of Theorem~\ref{th:rip}, \ie, with replacement. Finally, the reconstructions are obtained by solving~\refeq{eq:exact_solution} using the {\sf mldivide} function of Matlab. For the graphs and functions $g$ considered, we noticed that it was faster to use this function than solving~\refeq{eq:exact_solution} by conjugate gradient.

\begin{figure}
\centering
\begin{minipage}{.19\linewidth} \centering \small Community graph \end{minipage}
\begin{minipage}{.19\linewidth} \centering \small Minnesota graph \end{minipage}
\begin{minipage}{.19\linewidth} \centering \small Bunny graph \end{minipage}
\begin{minipage}{.19\linewidth} \centering \small Path graph \end{minipage}
\begin{minipage}{.19\linewidth} \centering \small Binary tree \end{minipage}\\
\includegraphics[width=.19\linewidth, keepaspectratio]{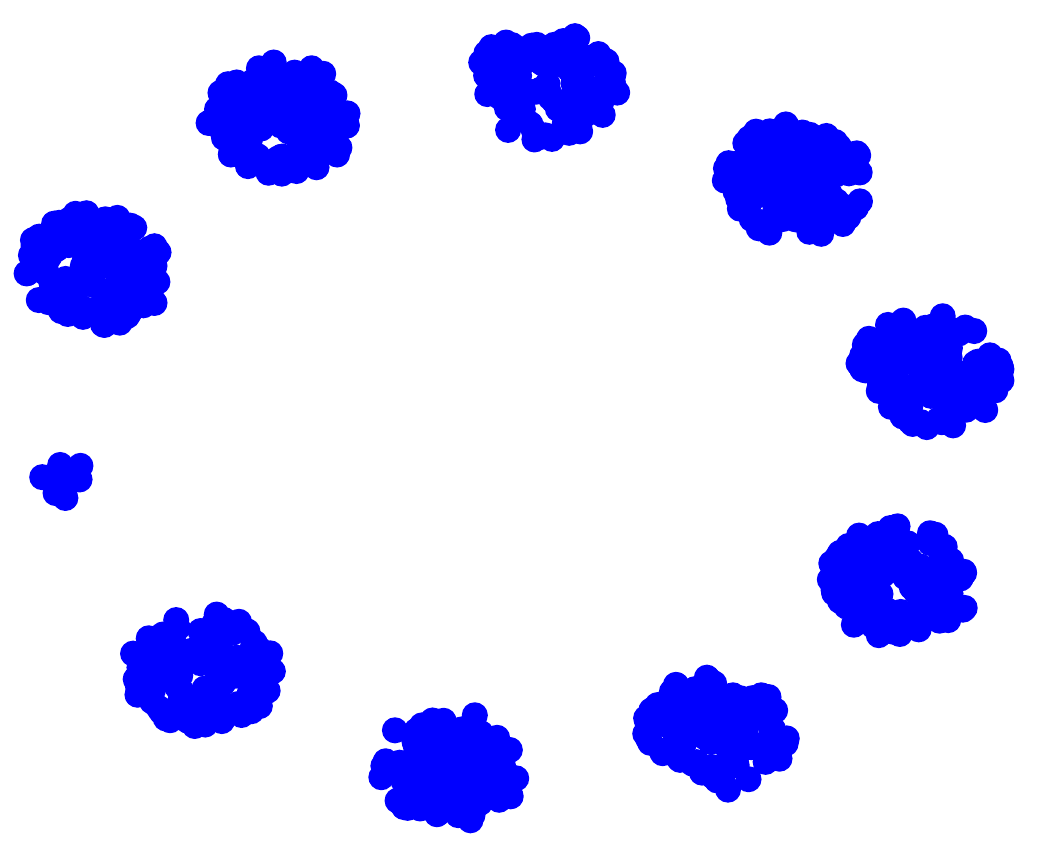}
\includegraphics[width=.19\linewidth, keepaspectratio]{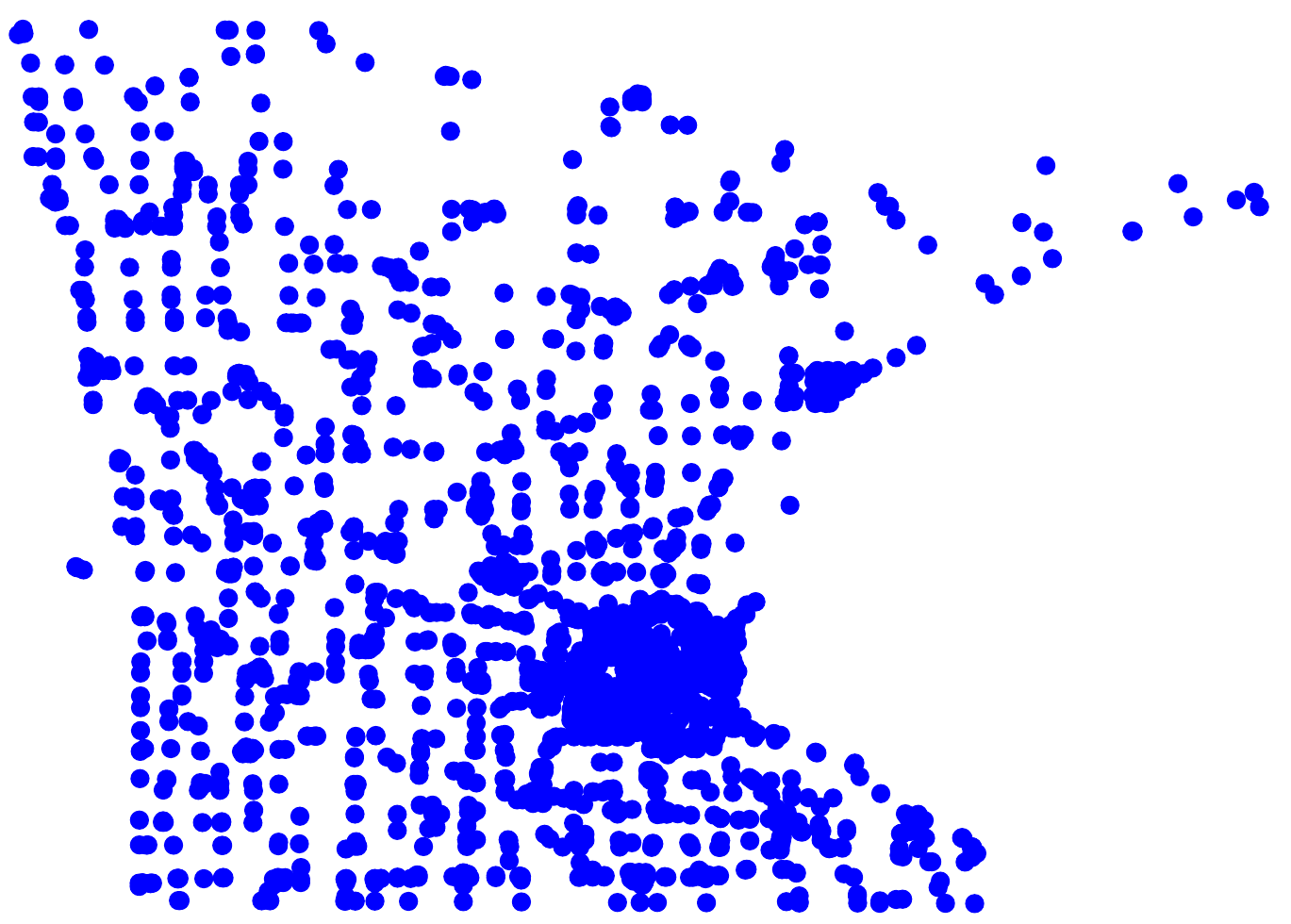}
\includegraphics[width=.19\linewidth, keepaspectratio]{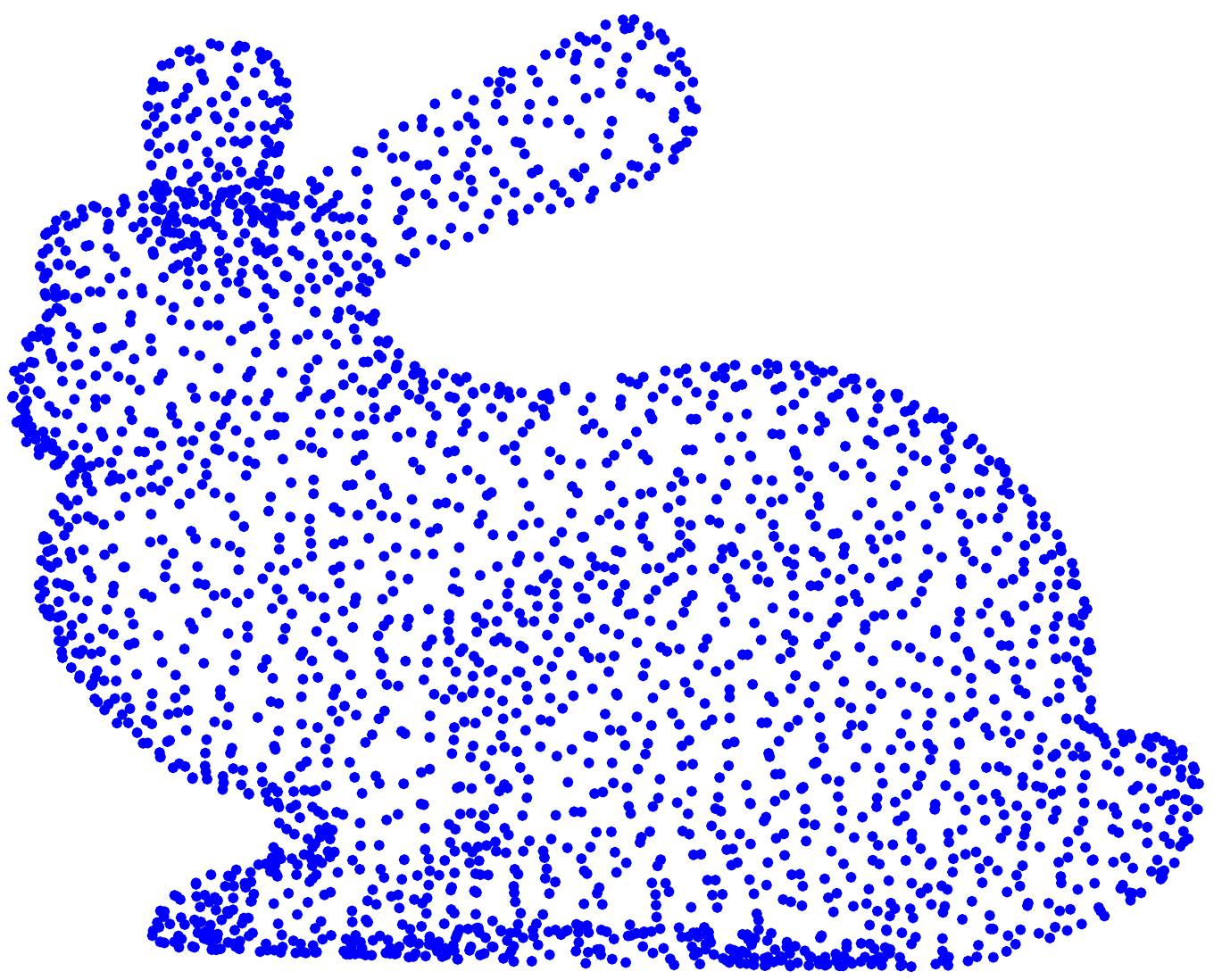}
\includegraphics[width=.19\linewidth, keepaspectratio]{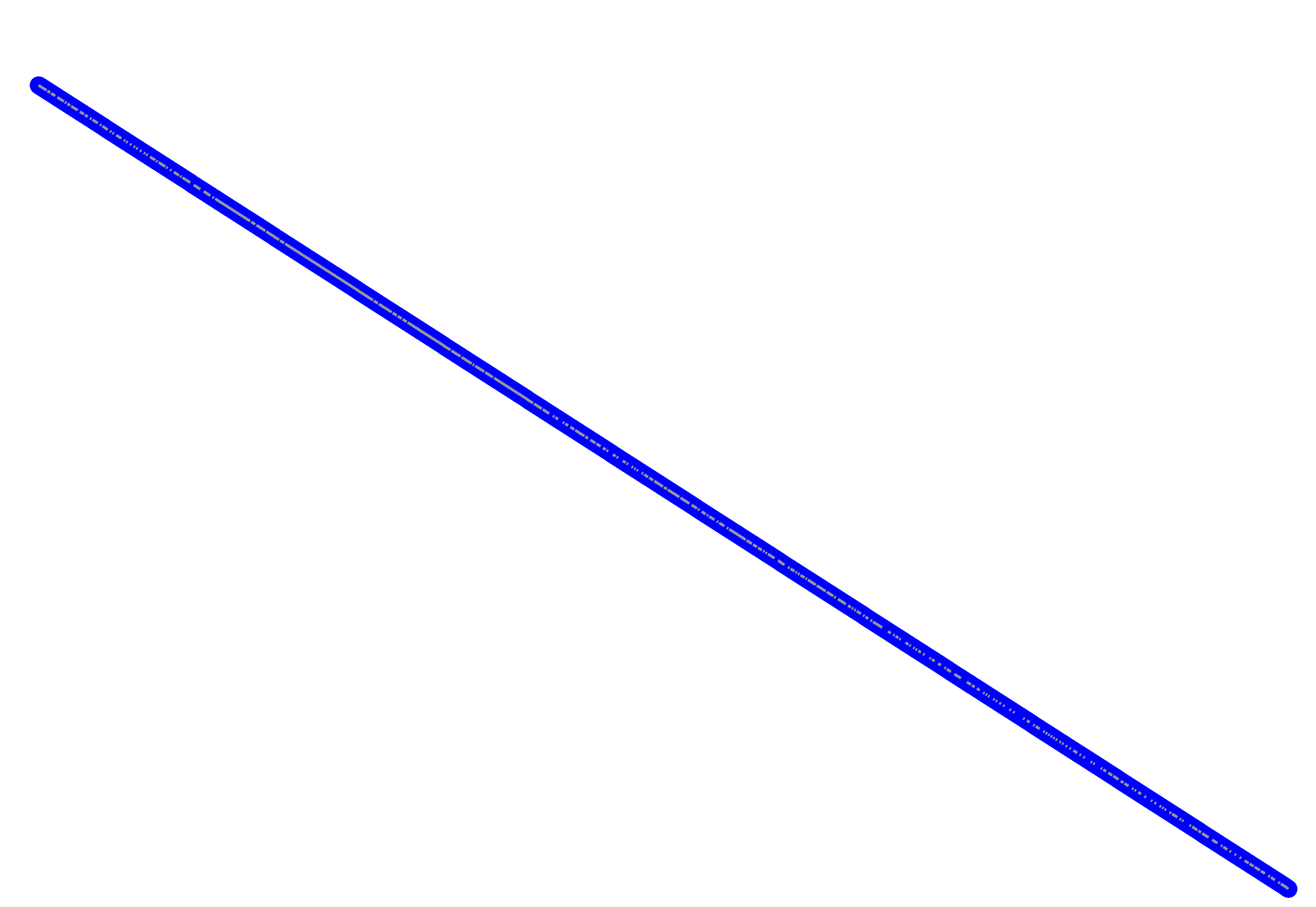}
\includegraphics[width=.19\linewidth, keepaspectratio]{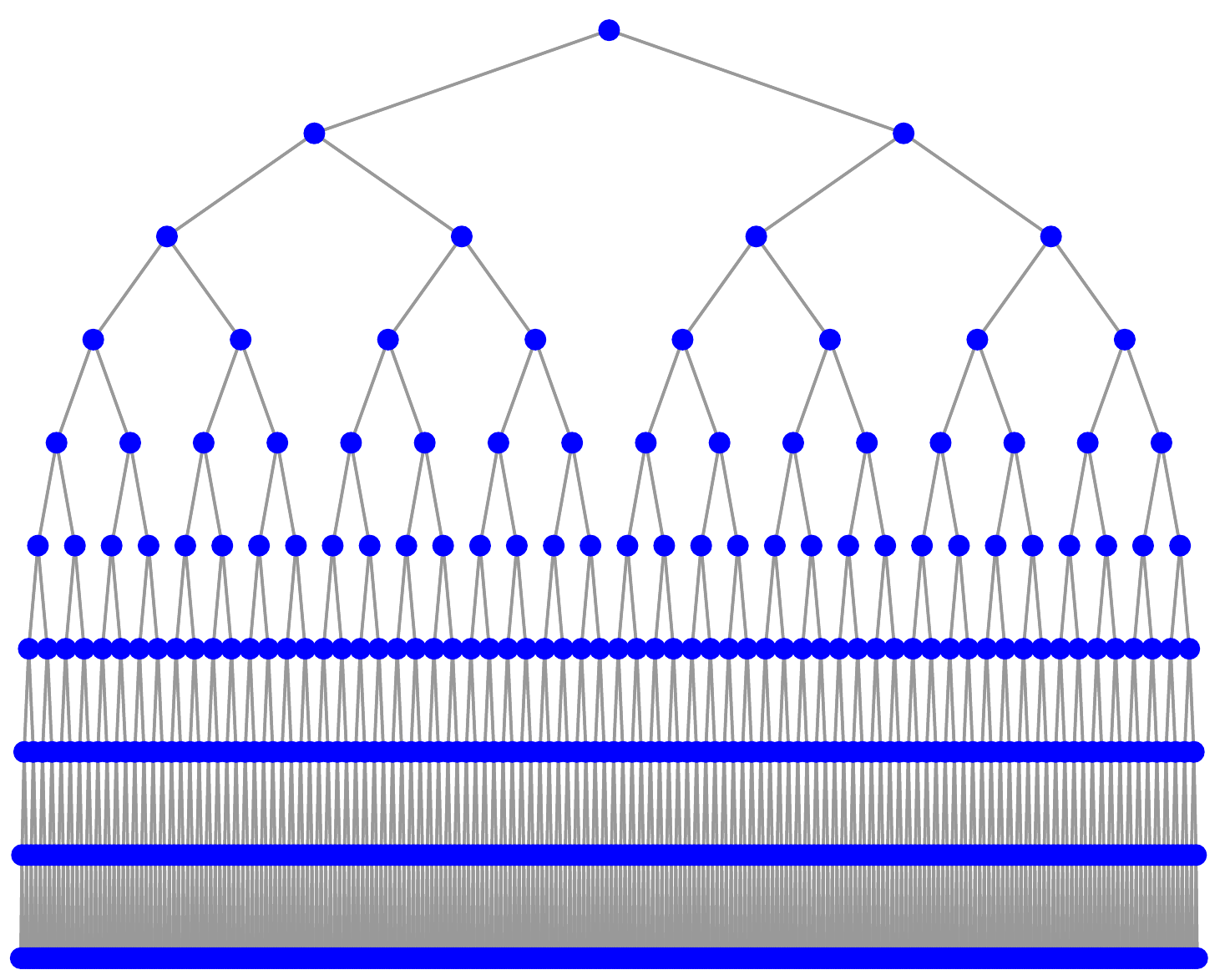}
\caption{\label{fig:graphs} The \MODIF{five} different graphs used in the simulations.}
\end{figure}
%

% ===========================================
\subsection{Effect of the sampling distribution on $\nbVertRed$}

In this first part, we study how the sampling distributions affect the minimum number of measurements required to satisfy the RIP. All experiments are repeated for three different sampling distributions: a) the uniform distribution $\vec{\pi}$, b) the optimal distribution $\vec{p}^*$, and c) the estimated optimal distribution $\tilde{\vec{p}} \in \Rbb^\nbVert$ computed using Algorithm \ref{alg:optimal_distribution}.

% ---------------------------------------------------------------------------
\subsubsection{Using community graphs}

We conduct a first set of experiments using five types of community graph, denoted by $\CCal_1, \ldots, \CCal_5$. They all have $10$ communities. To study the effect of the size of the communities on the sampling distribution, we choose to build these graphs with $9$ communities of (approximately) equal size and reduce the size of last community: 
\begin{itemize}
\item the graphs of type $\CCal_1$ have $10$ communities of size $100$;
\item the graphs of type $\CCal_2$ have $1$ community of size $50$, $8$ communities of size $105$, and $1$ community of size $110$;
\item the graphs of type $\CCal_3$ have $1$ community of size $25$, $8$ communities of size $108$, and $1$ community of size $111$;
\item the graphs of type $\CCal_4$ have $1$ community of size $17$, $8$ communities of size $109$, and $1$ community of size $111$;
\item the graphs of type $\CCal_5$ have $1$ community of size $13$, $8$ communities of size $109$, and $1$ community of size $115$.
\end{itemize}

For each pair of graph-type, $j \in \{1, \ldots, 5\}$, and sampling distribution, $\prob \in \{\vec{\pi}, \vec{p}^*, \tilde{\vec{p}}\}$, we generate a graph of type $\CCal_j$, compute $\Fou_{10}$ and the lower RIP constant
\begin{align}
\label{eq:lower_rip_constant}
\underline{\delta}_{10} := 1 - \inf_{\substack{\sig \in \spann(\Fou_{10}) \\ \norm{\sig}_2=1}} \left\{ \inv{m} \; \norm{\Meas \ma{P}^{-1/2} \sig}_2^2 \right\},
\end{align}
for different numbers of measurements $\nbVertRed$. Note that to compute $\underline{\delta}_{10}$, one just needs to notice that 
\begin{align*}
\underline{\delta}_{10} = 1 - \inv{\nbVertRed} \; \lambda_{\rm min} \left( \Fou_{10}^\adjoint \ma{P}^{-1/2} \Meas^\adjoint \Meas \ma{P}^{-1/2} \Fou_{10}\right).
\end{align*}
We compute $\underline{\delta}_{10}$ for $500$ independent draws of the matrix $\Meas$. When conducting the experiments with the estimated optimal distribution $\tilde{\vec{p}}$, we re-estimate this distribution at each of the $500$ trials.

We present in Fig.~\ref{fig:eig_community} the probability that $\underline{\delta}_{10}$ is less than $0.995$, estimated over the $500$ trials, as a function of $\nbVertRed$. Let $m_{j, \prob}^*$ be the number of measurements required to reach a probability of, \eg, $\Pbb(\underline{\delta}_{10} \leq 0.995) = 0.9$ for the pair $(j, \prob)$ of graph-type and sampling distribution. Theorem~\ref{th:rip} predicts that $m_{j, \prob}^*$ scales linearly with $(\cumCoh_{\prob}^{10})^2$. 
\begin{itemize}
\item For the uniform distribution $\vec{\pi}$, the first figure from the left in Fig.~\ref{fig:eig_community} indicates the value of $(\cumCoh_{\vec{\pi}}^{10})^2(j)$, $j=1, \ldots, 5$ for the five different types of graph. We have $(\cumCoh_{\vec{\pi}}^{10})^2(1) \leq \ldots \leq (\cumCoh_{\vec{\pi}}^{10})^2(5)$ and $m_{1, \vec{\pi}}^* \leq m_{2, \vec{\pi}}^* \leq \ldots \leq m_{5, \vec{\pi}}^*$, in accordance with Theorem~\ref{th:rip}. 
\item For the optimal sampling distribution $\vec{p}^*$, we have $(\cumCoh_{\vec{p}^*}^{10})^2 = 10$. Therefore $m_{j, \vec{p}^*}^*$ must be identical for all graph-types, as observed in the second panel of Fig.~\ref{fig:eig_community}. 
\item For the estimated optimal sampling distribution $\tilde{\vec{p}}$, the last figure in Fig.~\ref{fig:eig_community} shows that the performance is identical for all graph-types, as with $\vec{p}^*$. Furthermore, we attained almost the same performance with $\tilde{\vec{p}}$ and $\vec{p}^*$, confirming the quality of the estimation provided by Algorithm~\ref{alg:optimal_distribution}
\end{itemize}
\begin{figure}
\centering
\begin{minipage}{.32\linewidth} \centering \small \hspace{2mm} Uniform distribution $\vec{\pi}$ \end{minipage}
\begin{minipage}{.32\linewidth} \centering \small \hspace{2mm} Optimal distribution $\vec{p}^*$ \end{minipage}
\begin{minipage}{.32\linewidth} \centering \small \hspace{2mm} Estimated distribution $\tilde{\vec{p}}$ \end{minipage}\\
\includegraphics[width=.32\linewidth, keepaspectratio]{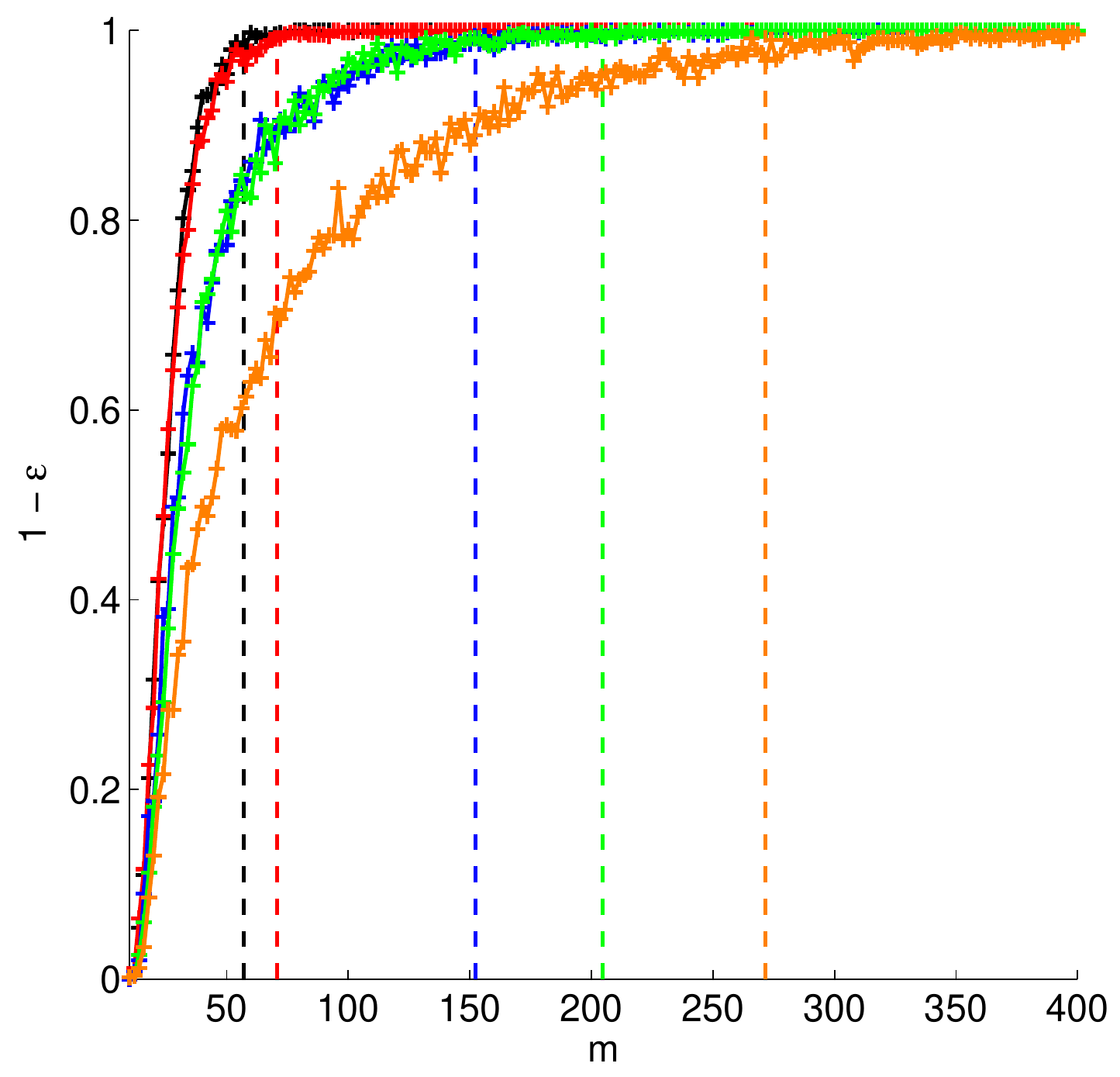}
\includegraphics[width=.32\linewidth, keepaspectratio]{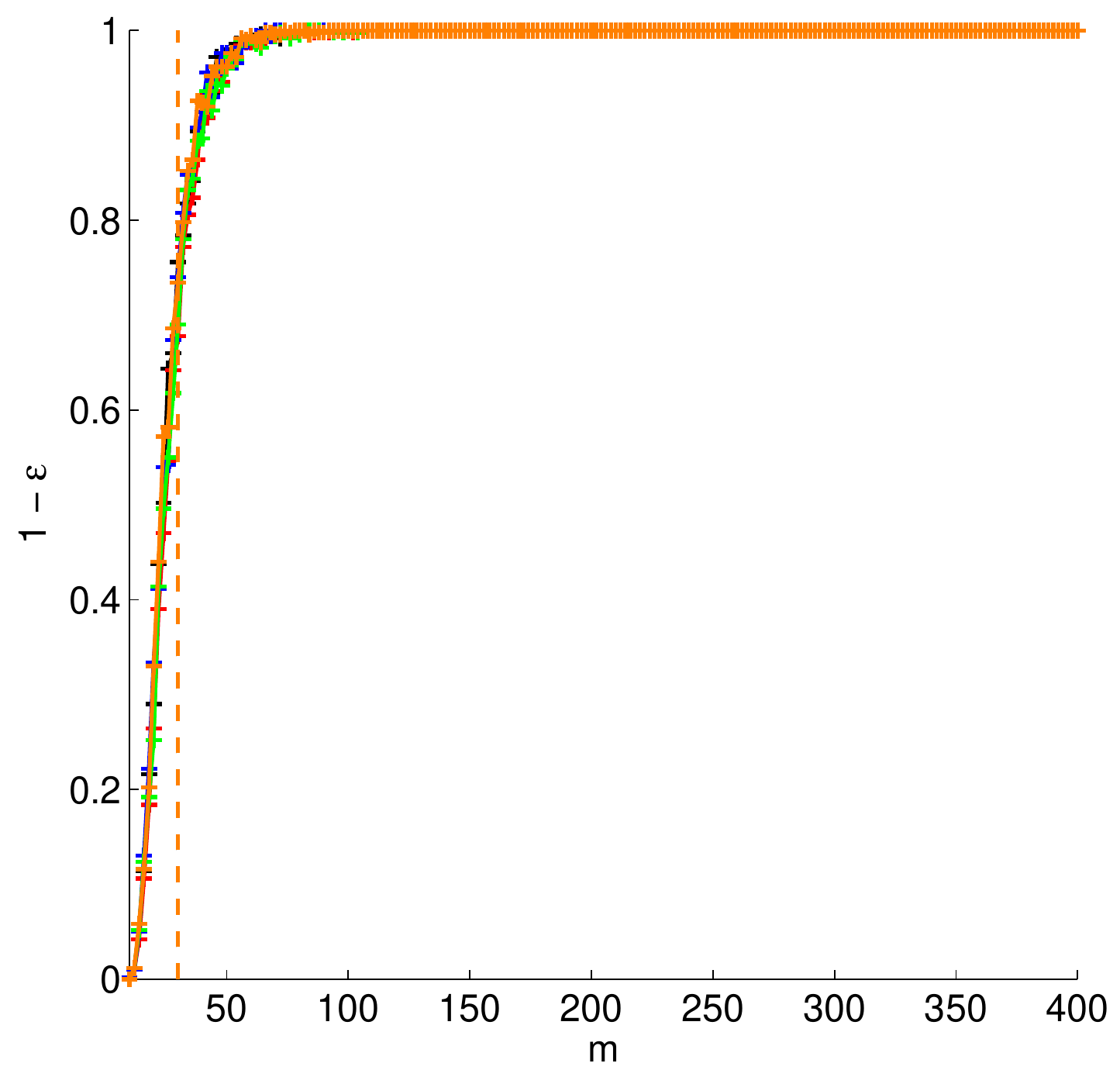}
\includegraphics[width=.32\linewidth, keepaspectratio]{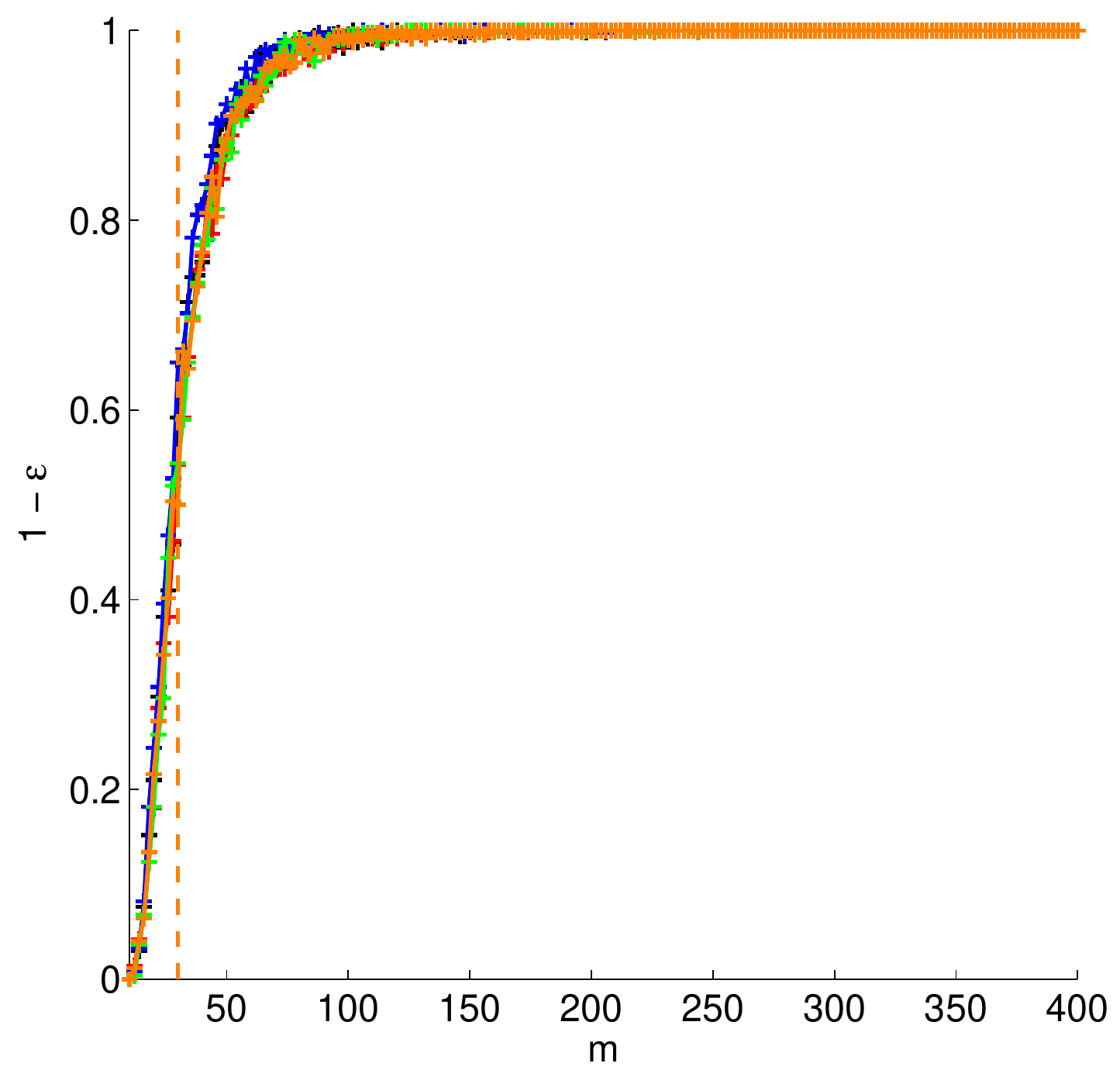}
\caption{\label{fig:eig_community} Probability that $\underline{\delta}_{10}$ is less than $0.995$ as a function of $\nbVertRed$ for $5$ different types of community graph: $\CCal_1$ in black, $\CCal_2$ in red, $\CCal_3$ in blue, $\CCal_4$ in green, $\CCal_5$ in orange. Left panel: the dashed vertical lines indicate the value of $3 \cdot (\cumCoh_{\vec{\pi}}^{10})^2$ for each type of graph. Middle and right panels: the dashed vertical lines indicate the value $3 \cdot (\cumCoh_{\vec{p}^*}^{10})^2 = 3 \cdot 10$.}
\end{figure}
%

% ---------------------------------------------------------------------------
\subsubsection{Using the Minnesota and bunny graphs}

To confirm the results observed above, we repeat the same experiments but using \MODIF{four other graphs: the Minnesota, the bunny and path graphs, and the binary tree. For the first three graphs, the experiments are performed for $k$-bandlimited signals with band-limits $10$ and $100$, \ie, we compute $\underline{\delta}_{k}$ - defined as in \refeq{eq:lower_rip_constant} - with $\Fou_{k}$.} \ADDED{For the binary tree, we set the band-limits at $16$ and $64$. These choices are due to the fact that some eigenvalues have a multiplicity larger than $1$ for this last graph. These choices ensure that $\eig_{\nbClass} < \eig_{\nbClass+1}$, as required in our assumptions.}

\MODIF{We present in Fig.~\ref{fig:eig_minne_bunny} the probability that $\underline{\delta}_{\nbClass}$ is less than $0.995$, estimated over $500$ draws of $\Meas$, as a function of $\nbVertRed$.}

\MODIF{For the Minnesota, the bunny and path graphs at $\nbClass = 10$, we remark that all distributions yield essentially the same result. The advantage of using the distributions $\vec{p}^*$ or $\tilde{\vec{p}}$ is more obvious at $k=100$ for the Minnesota and the bunny graphs. Note that for the bunny graph, we reach only a probability of $0.036$ at $\nbVertRed=2000$ with the uniform distribution, whereas $\nbVertRed=600$ measurements are sufficient to reach a probability $1$ with $\vec{p}^*$. Uniform sampling is not working for the bunny graph at $k=100$ because there exist few eigenmodes whose energy is highly concentrated on few nodes. In other words, we have $\norm{\Fou_{100}^\adjoint \delta_i}_2 \approx 1$ for few nodes $i$. Finally, for the path graph at $k=100$ and for the binary tree, we notice that the uniform distribution and the optimal distribution $\vec{p}^*$ have the same performance while the estimated optimal distribution $\tilde{\vec{p}}$ performs slightly worse. However, we noticed that these differences decrease when increasing the number $L$ of random signals in Algorithm~\eqref{alg:optimal_distribution}.}

\begin{figure}
\centering
\begin{minipage} {.242\linewidth} \centering \scriptsize \hspace{2mm} Minnesota - $k = 10$ \end{minipage}
\begin{minipage} {.242\linewidth} \centering \scriptsize \hspace{2mm} Bunny - $k=10$ \end{minipage}
\begin{minipage} {.242\linewidth} \centering \scriptsize \hspace{2mm} Path graph - $k=10$ \end{minipage}
\begin{minipage} {.242\linewidth} \centering \scriptsize \hspace{2mm} Binary tree - $k=16$ \end{minipage} \\
\includegraphics[width=.242\linewidth, keepaspectratio]{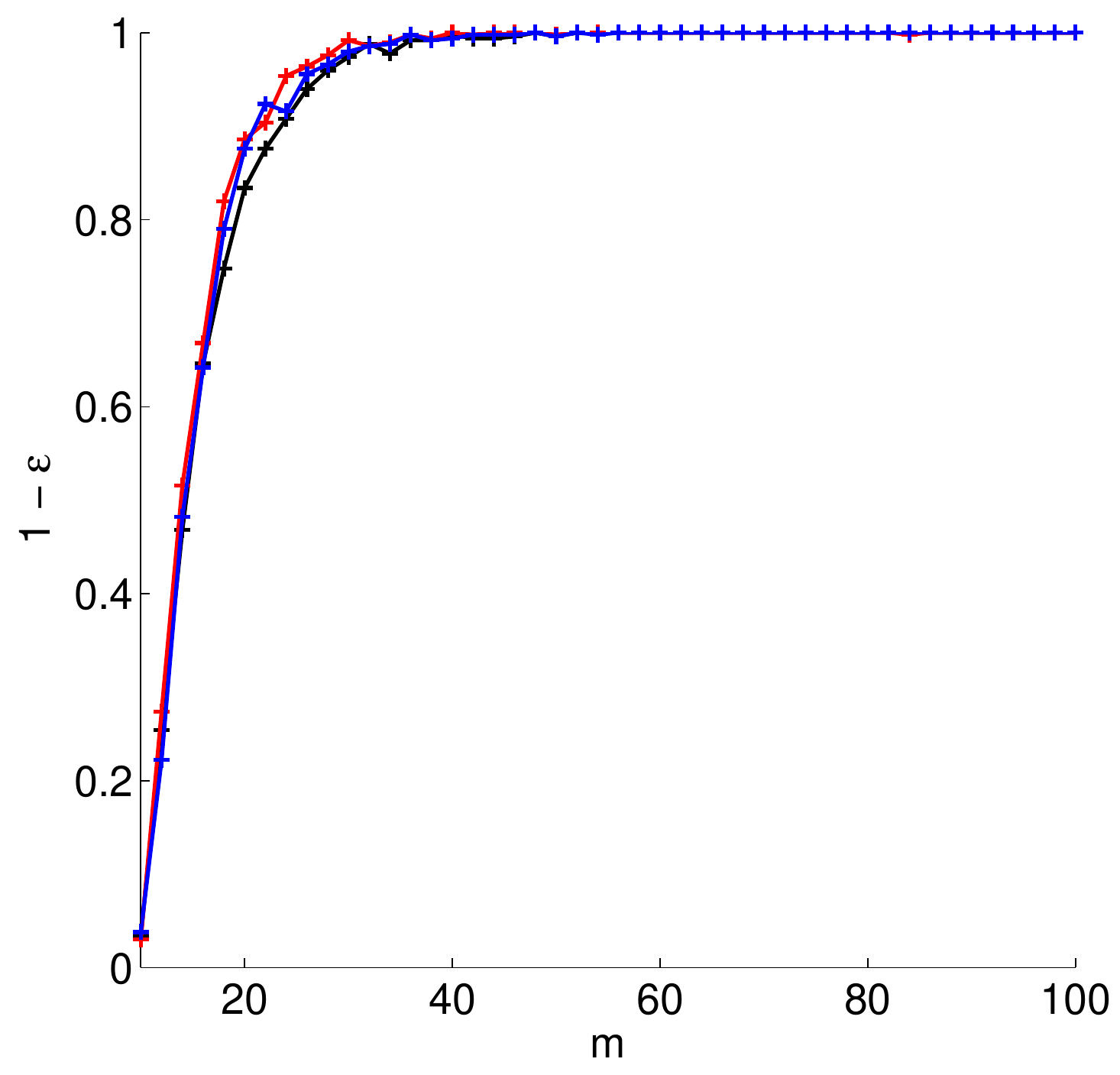}
\includegraphics[width=.242\linewidth, keepaspectratio]{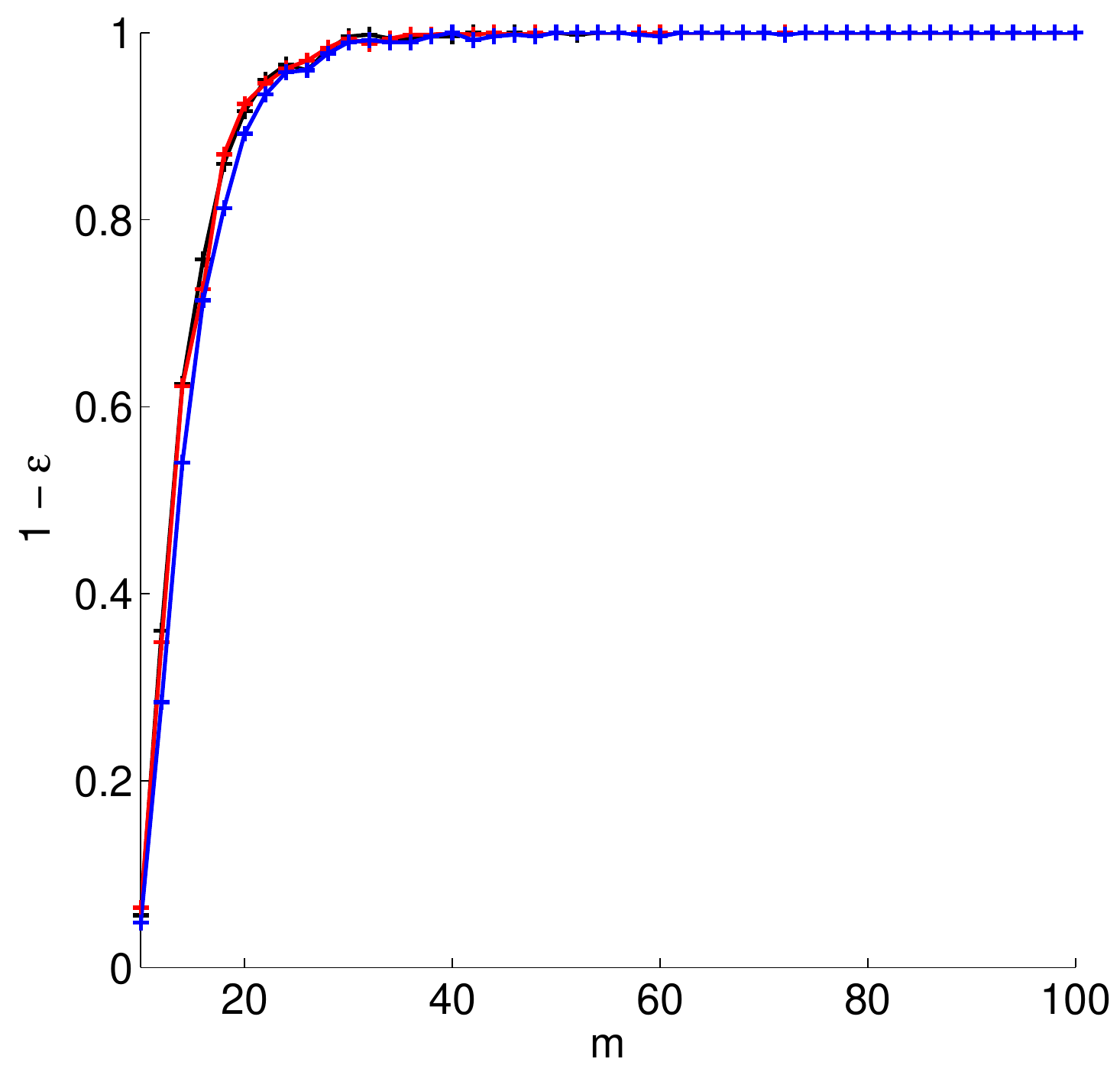}
\includegraphics[width=.242\linewidth, keepaspectratio]{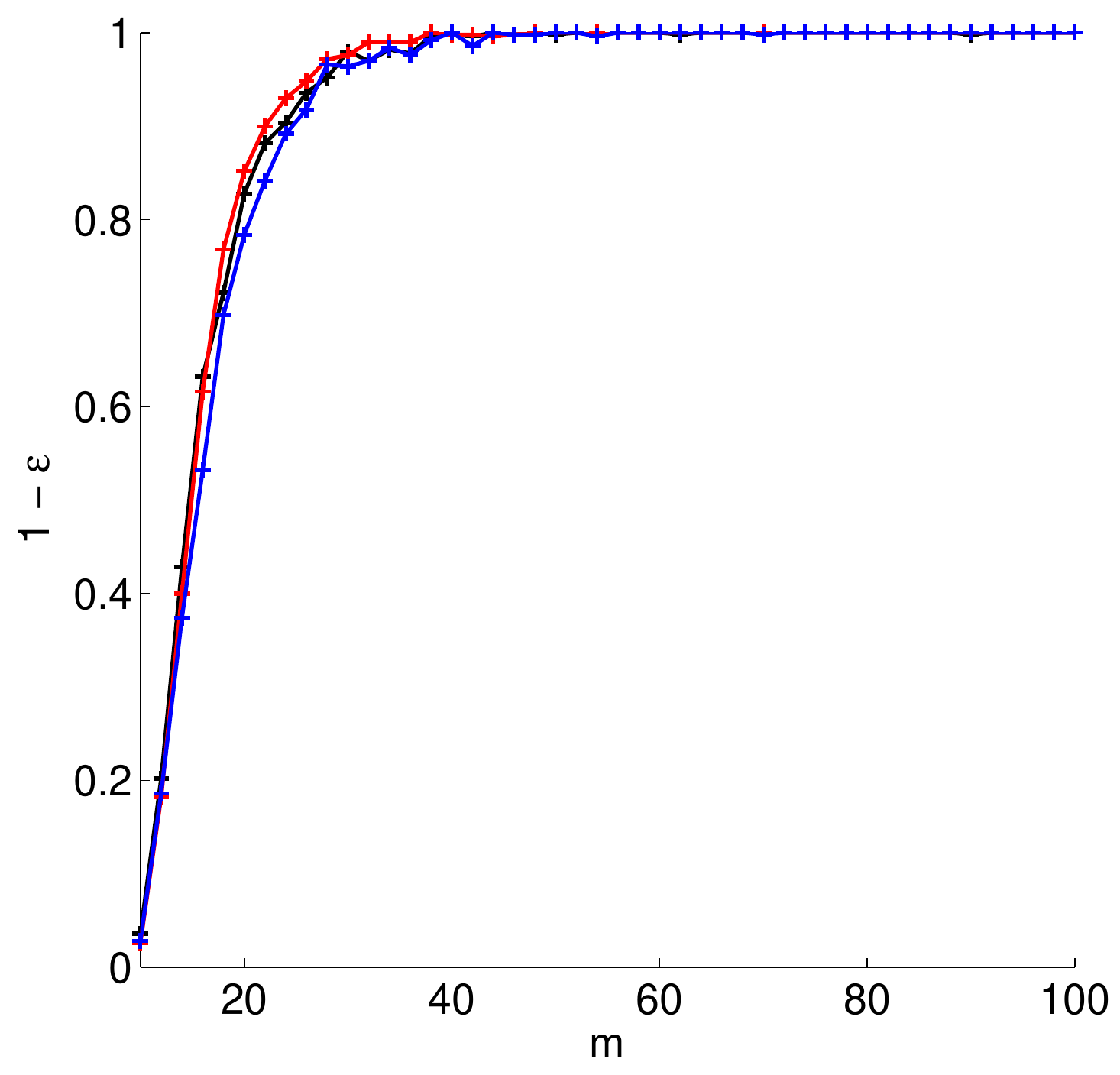}
\includegraphics[width=.242\linewidth, keepaspectratio]{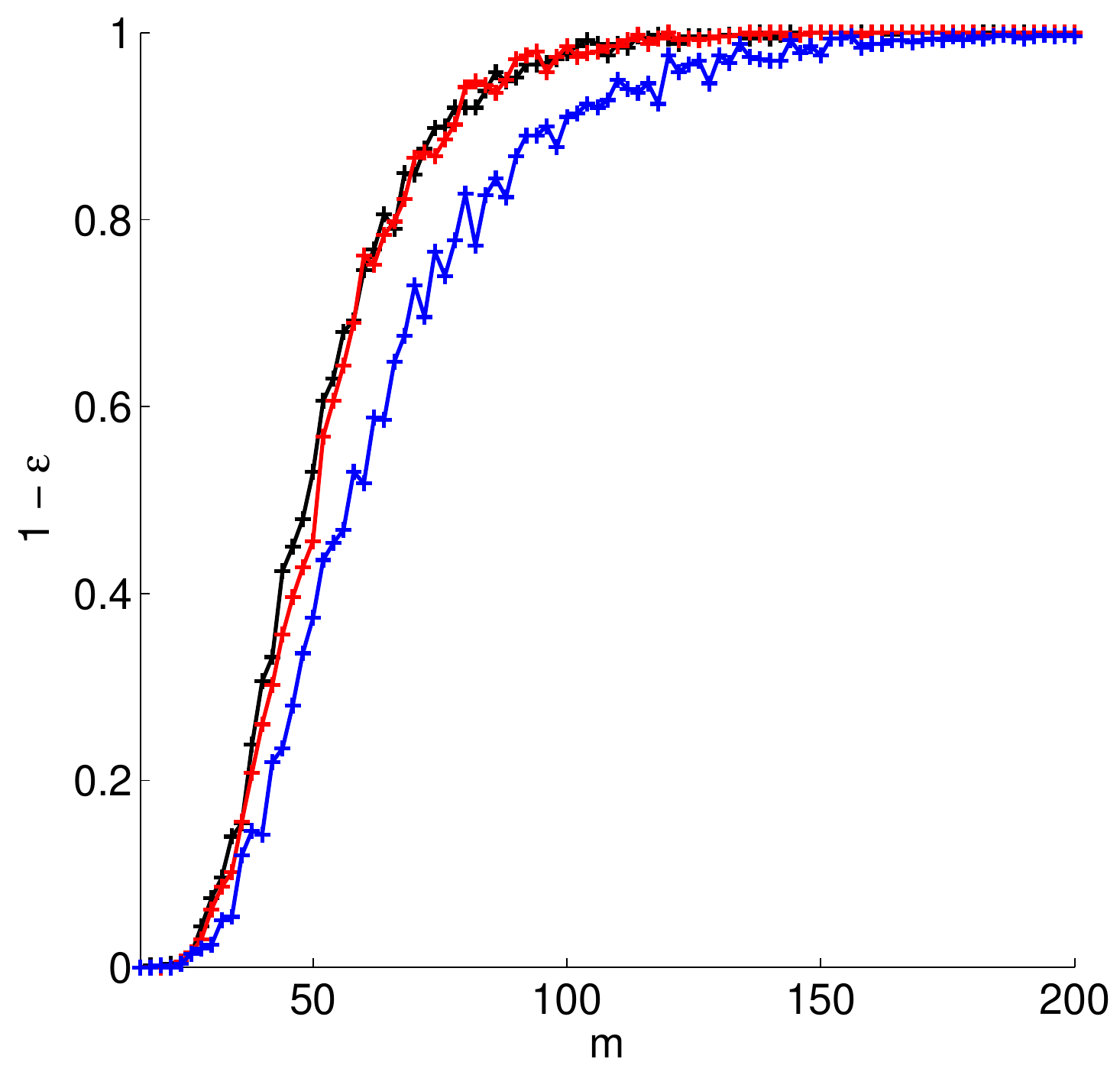} \\
\begin{minipage} {.242\linewidth} \centering \scriptsize \hspace{2mm} Minnesota - $k = 100$ \end{minipage}
\begin{minipage} {.242\linewidth} \centering \scriptsize \hspace{2mm} Bunny - $k=100$ \end{minipage}
\begin{minipage} {.242\linewidth} \centering \scriptsize \hspace{2mm} Path graph - $k=100$ \end{minipage}
\begin{minipage} {.242\linewidth} \centering \scriptsize \hspace{2mm} Binary tree - $k=64$ \end{minipage}\\
\includegraphics[width=.242\linewidth, keepaspectratio]{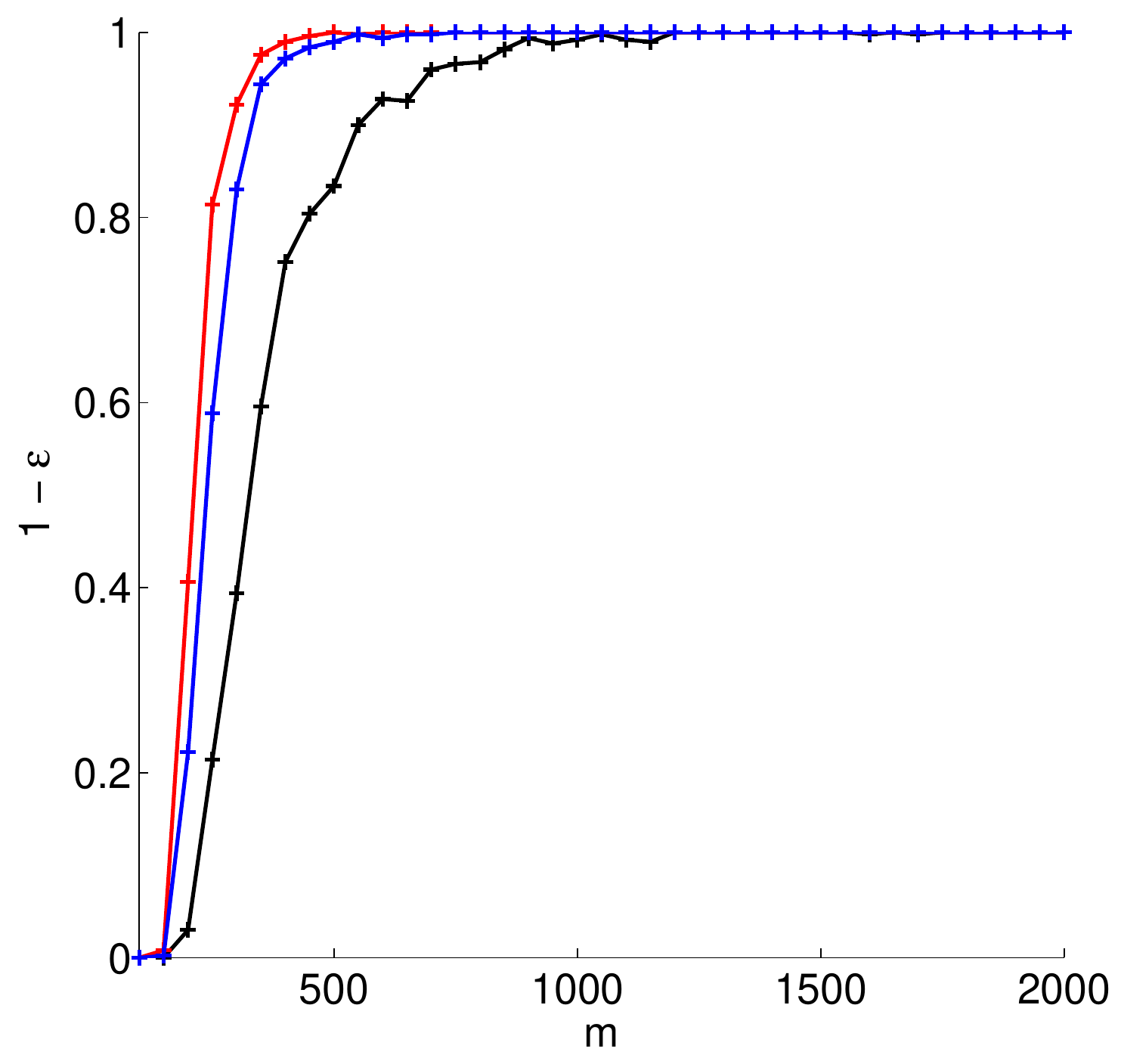}
\includegraphics[width=.242\linewidth, keepaspectratio]{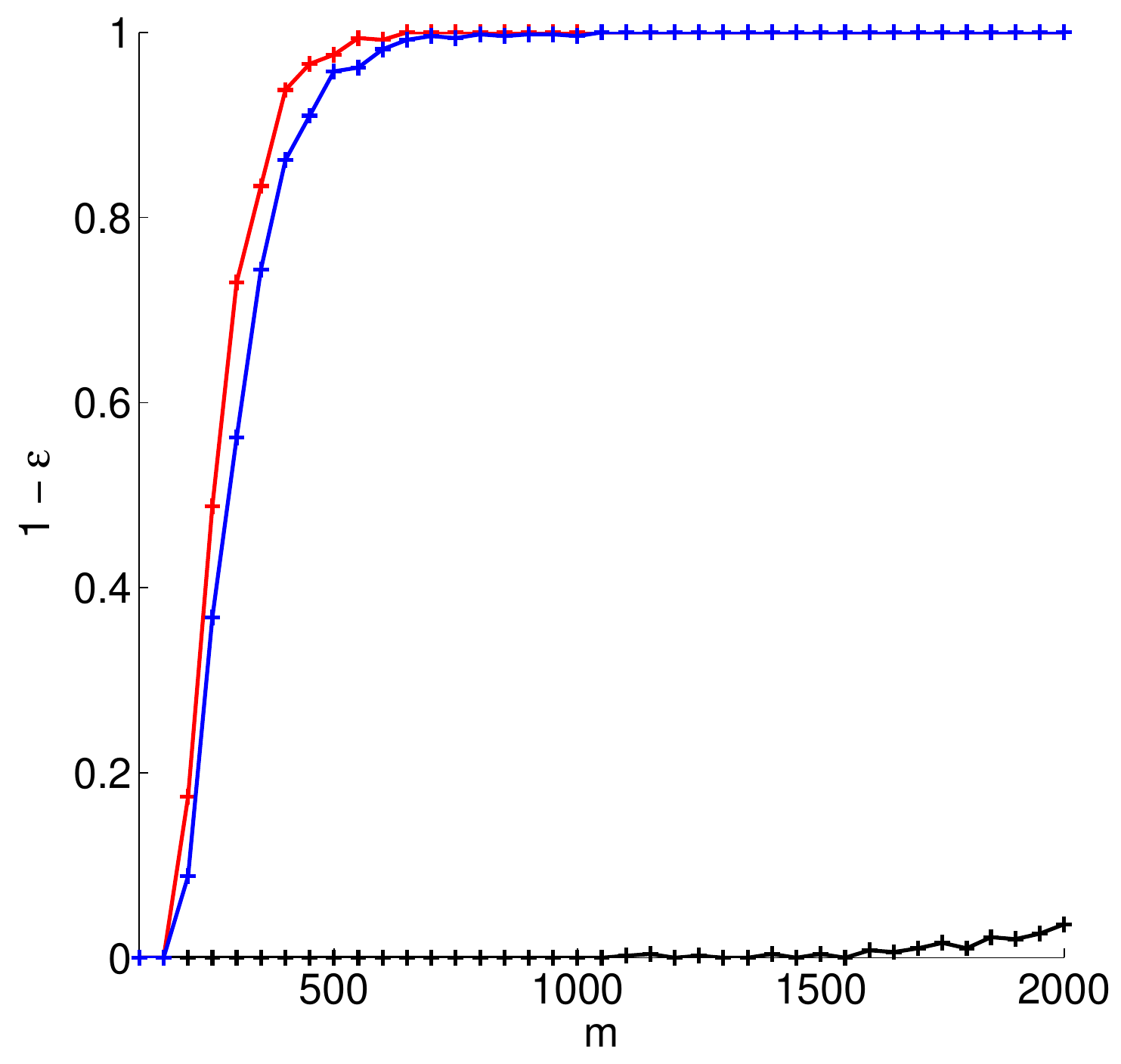}
\includegraphics[width=.242\linewidth, keepaspectratio]{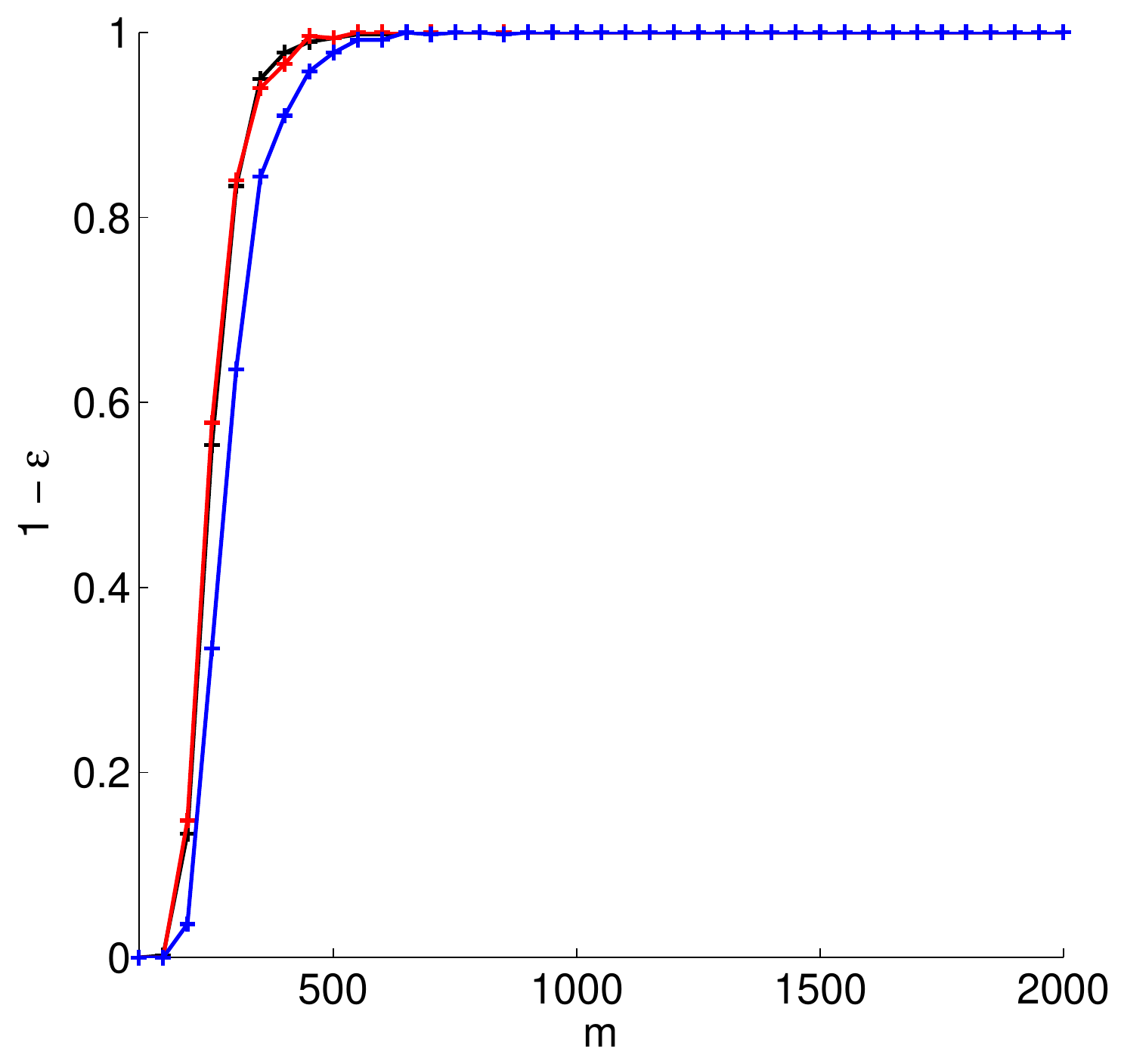}
\includegraphics[width=.242\linewidth, keepaspectratio]{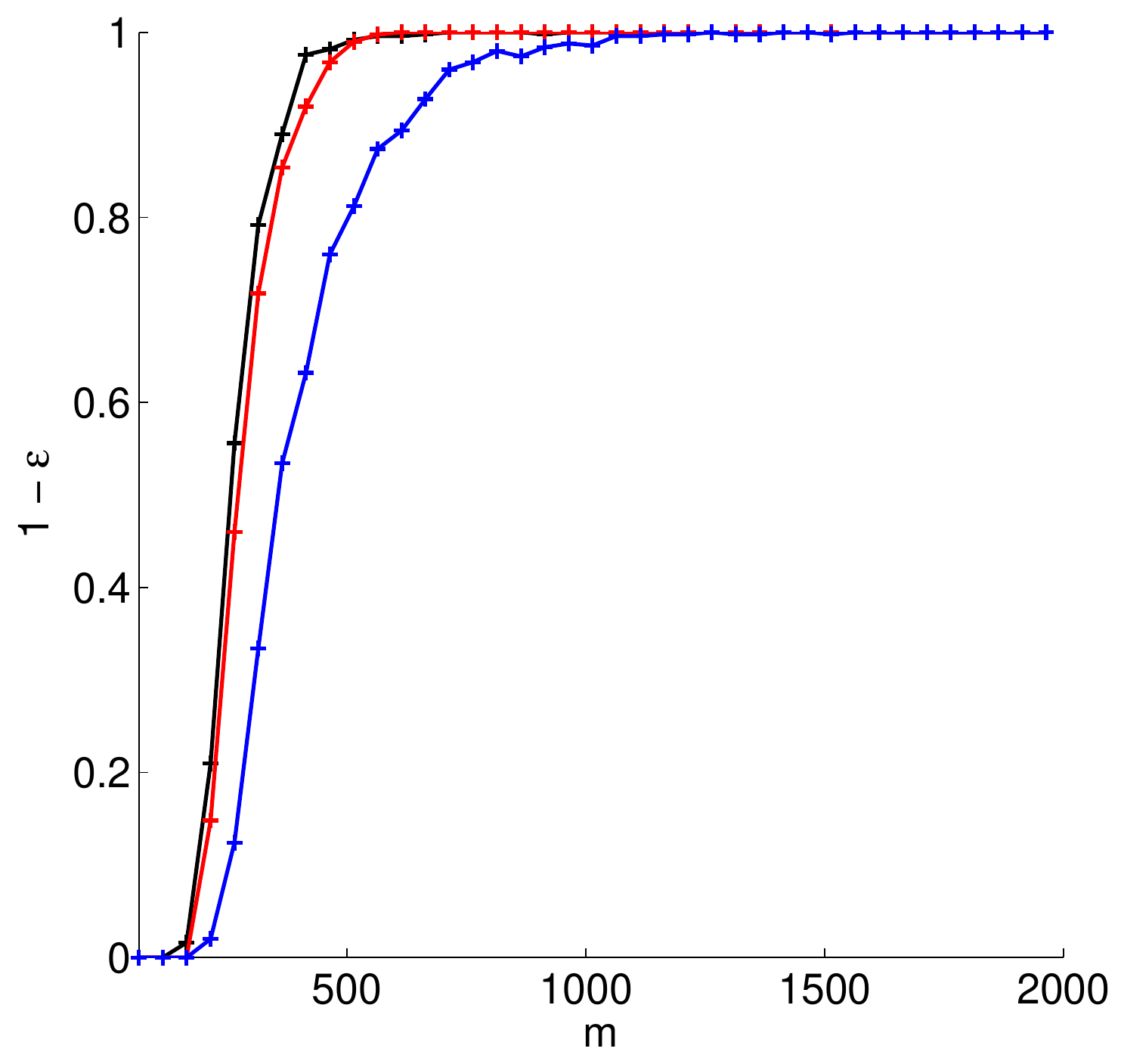}
\caption{\label{fig:eig_minne_bunny} Probability that $\underline{\delta}_k$ is less than $0.995$ as a function of $\nbVertRed$. The curve in black indicates the result for the uniform distribution. The curve in red indicates the result for the optimal distribution. The curve in blue indicates the result for the estimated optimal distribution. \MODIF{The panels on the top row show the results at $k=10$ for the Minnesota, bunny and path graphs, and at $k=16$ for the binary tree. The panels on the bottom row show the results at $k=100$ for the Minnesota, bunny and path graphs, and at $k=64$ for the binary tree.}}
\end{figure}
%

% ---------------------------------------------------------------------------
\subsubsection{Examples of optimal and estimated sampling distributions}

For illustration, we present some examples of sampling distributions in Fig.~\ref{fig:sampling_distribution} for \MODIF{five} of the graphs used above. The top panels in Fig.~\ref{fig:sampling_distribution} show the optimal sampling distribution computed with $\Fou_k$. The bottom panels show the estimated sampling distribution obtained with Algorithm \ref{alg:optimal_distribution}. 

\ADDED{For the community, Minnesota and bunny graphs, we notice that the estimated sampling distribution $\tilde{\vec{p}}$ and the optimal one $\vec{p}^*$ are quite similar. We observe more differences for the path graph and the binary tree. This explains the slight differences of performance in the previous experiments. We recall that these differences decrease when increasing the number $L$ of random signals in Algorithm~\ref{alg:optimal_distribution}.}

\ADDED{It is interesting to notice that for the path graph, the optimal sampling distribution is essentially constant except for the nodes at the boundaries that are less connected than the other nodes and need to be sampled with higher probability. For the binary tree, the probability of sampling a node is only determined by its depth in the tree - all nodes at a given depth are equally important - and the deeper the node is, the higher the probability of sampling this node should be - it is easier to predict the value at one node from the values bore by its children than its parents.}

\begin{figure}
\centering
\hspace{1.5mm}
\begin{minipage} {.19\linewidth}\centering \hspace{2mm} \scriptsize Community graph $\CCal_5$ \\ $k = 10$ \end{minipage}
\begin{minipage} {.19\linewidth} \centering \scriptsize \hspace{0mm} Minnesota graph \\ $k = 100$\end{minipage}
\begin{minipage} {.19\linewidth} \centering \scriptsize \hspace{0mm} Bunny graph \\ \hspace{0mm} $k = 100$\end{minipage}
\begin{minipage} {.19\linewidth} \centering \scriptsize Path graph \\ $k = 100$\end{minipage}
\begin{minipage} {.19\linewidth} \centering \scriptsize Binary tree \\ $k = 64$\end{minipage}\\
\begin{sideways} \hspace{1mm}\tiny Optimal sampling  \end{sideways}
\includegraphics[height=25mm, keepaspectratio]{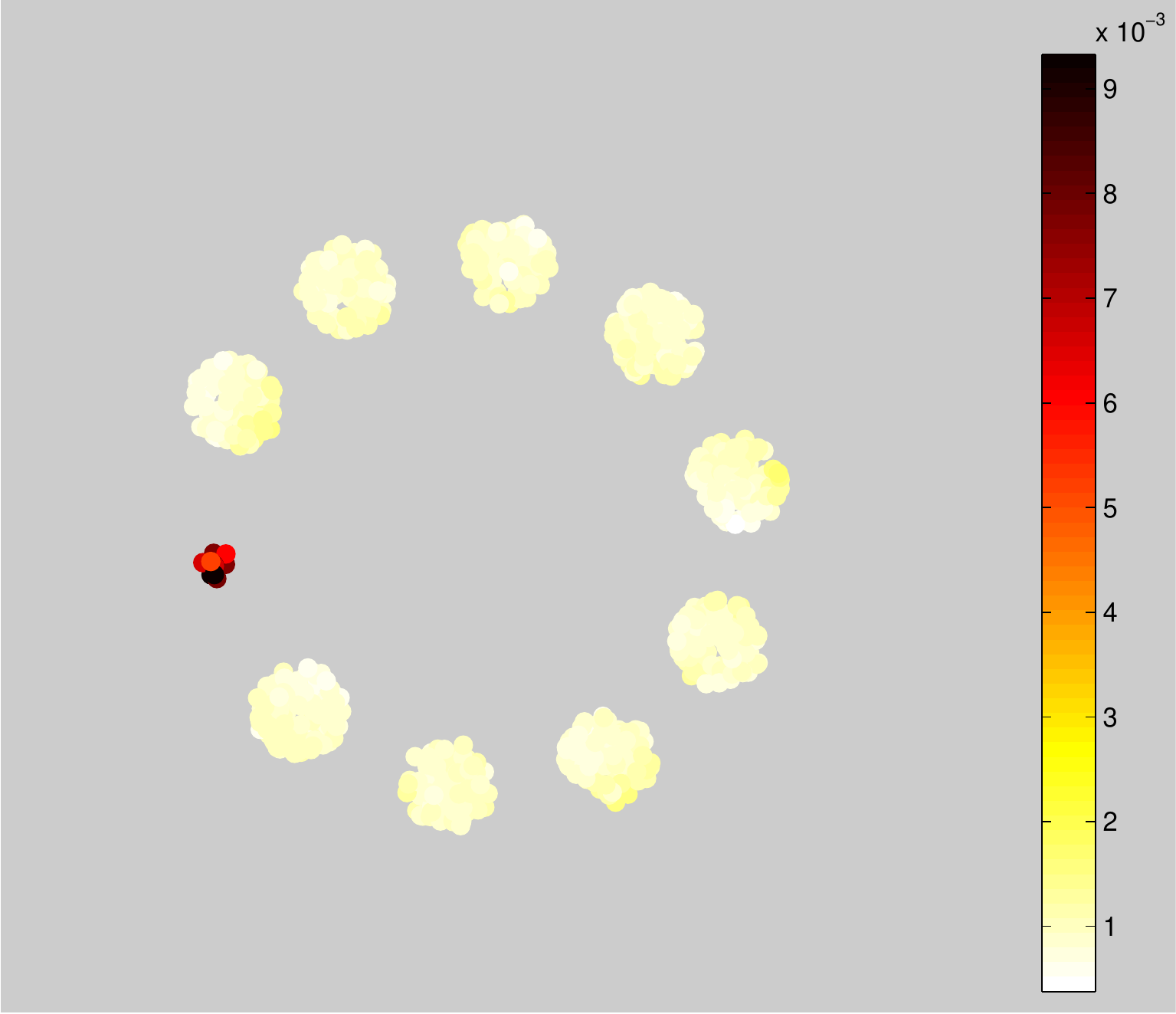}
\includegraphics[height=25mm, keepaspectratio]{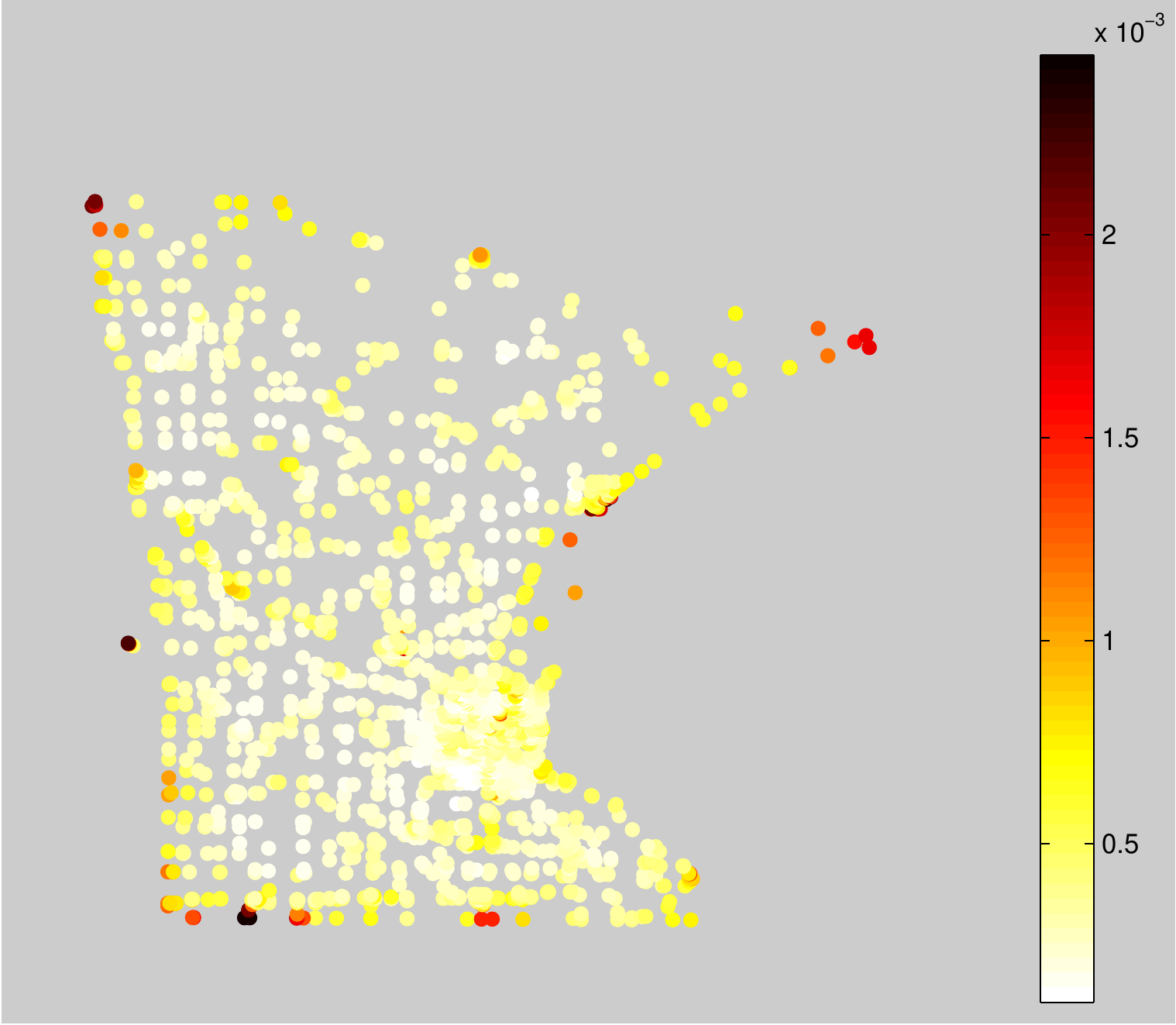}
\includegraphics[height=25mm, keepaspectratio]{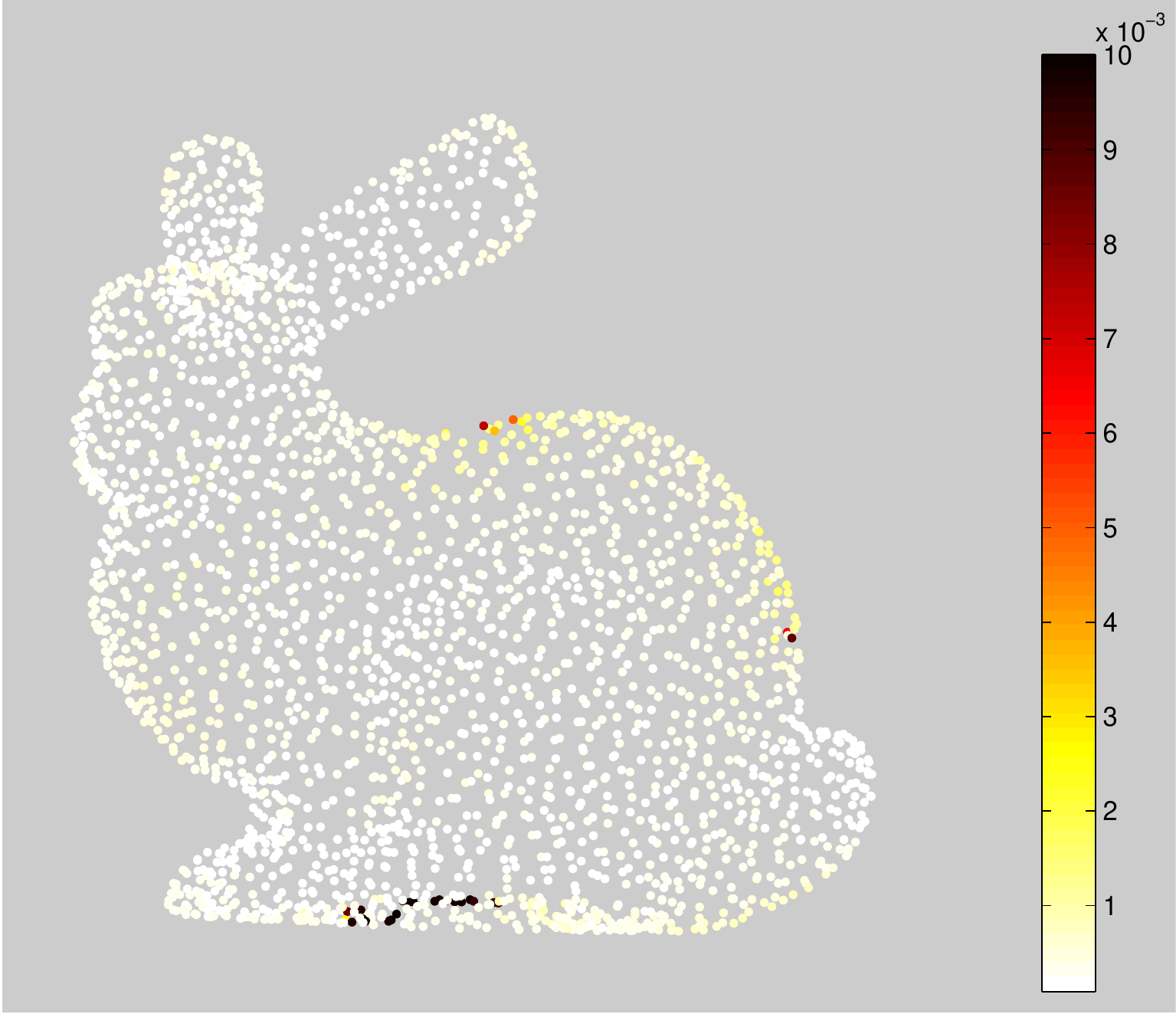}
\includegraphics[height=25mm, keepaspectratio]{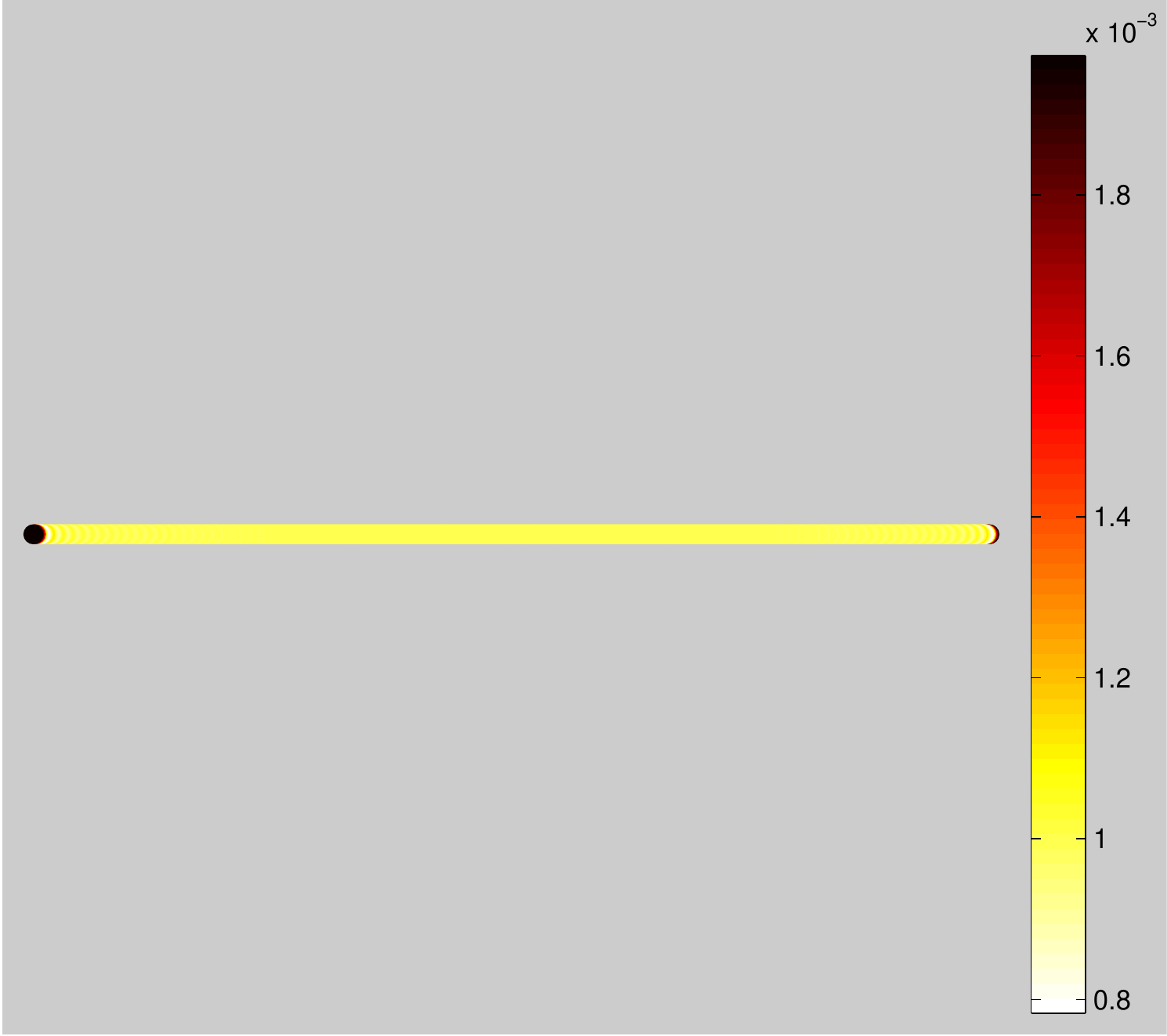}
\includegraphics[height=25mm, keepaspectratio]{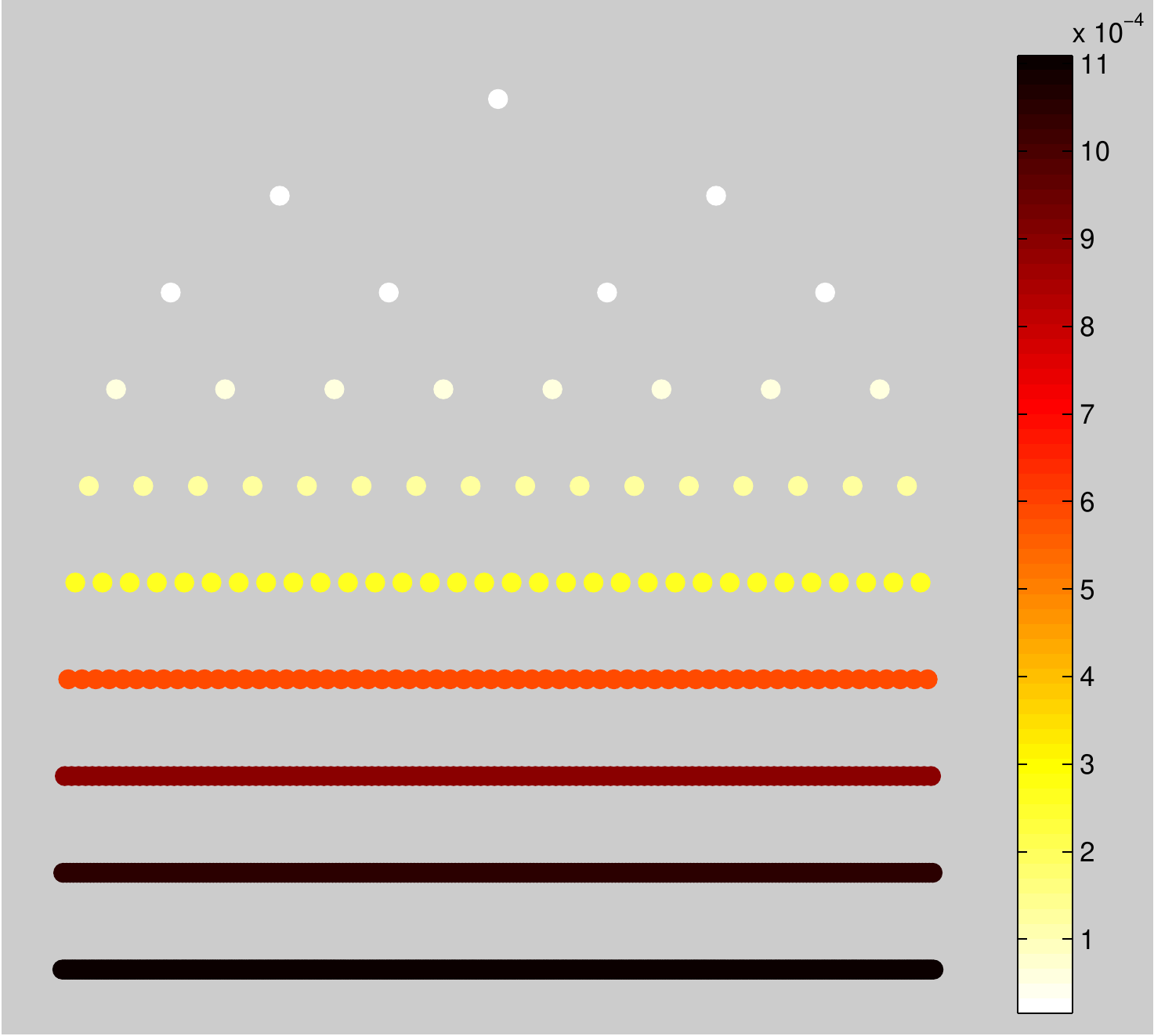}\\
\begin{sideways} \tiny  Estimated sampling \end{sideways}
\includegraphics[height=25mm, keepaspectratio]{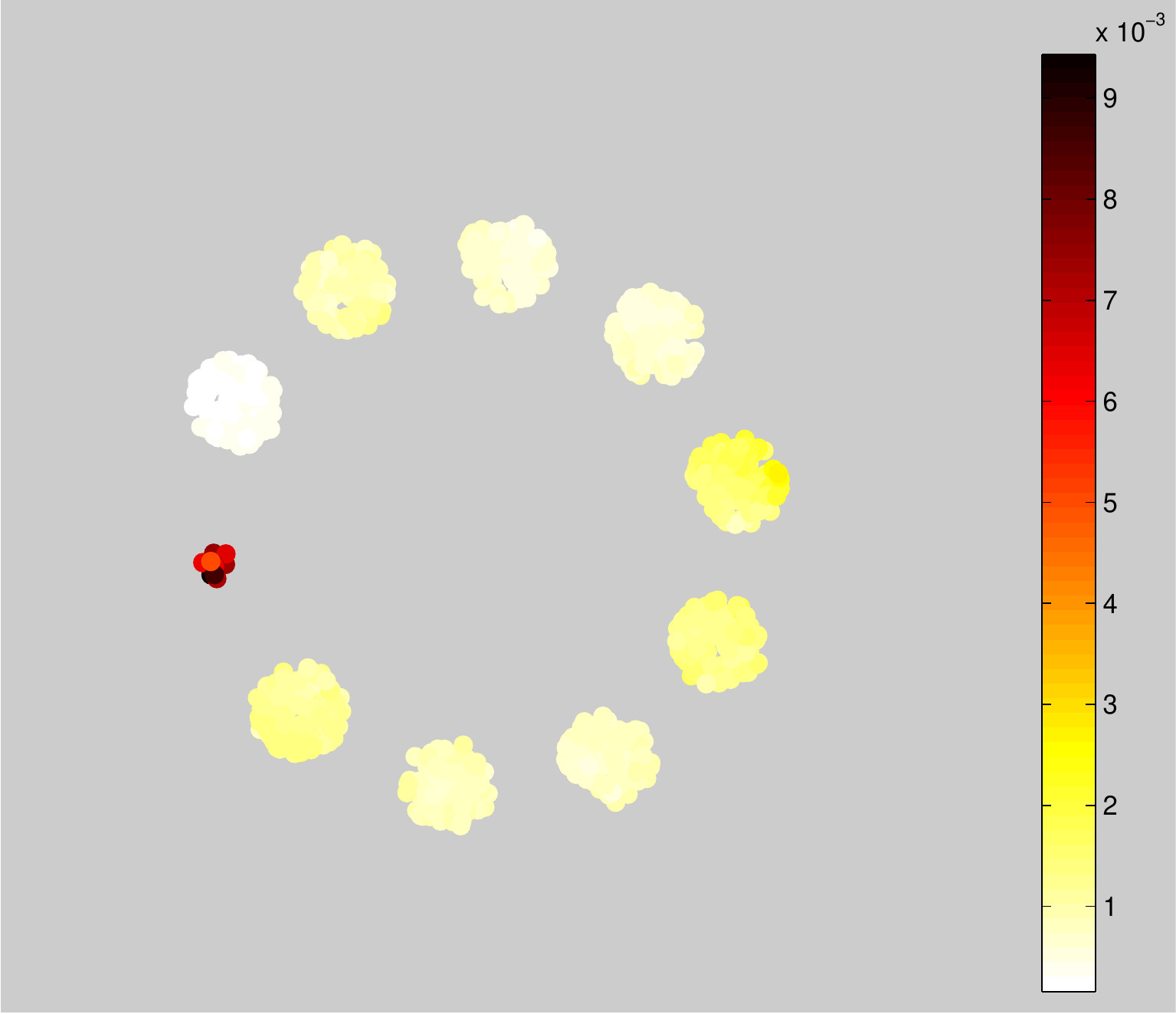}
\includegraphics[height=25mm, keepaspectratio]{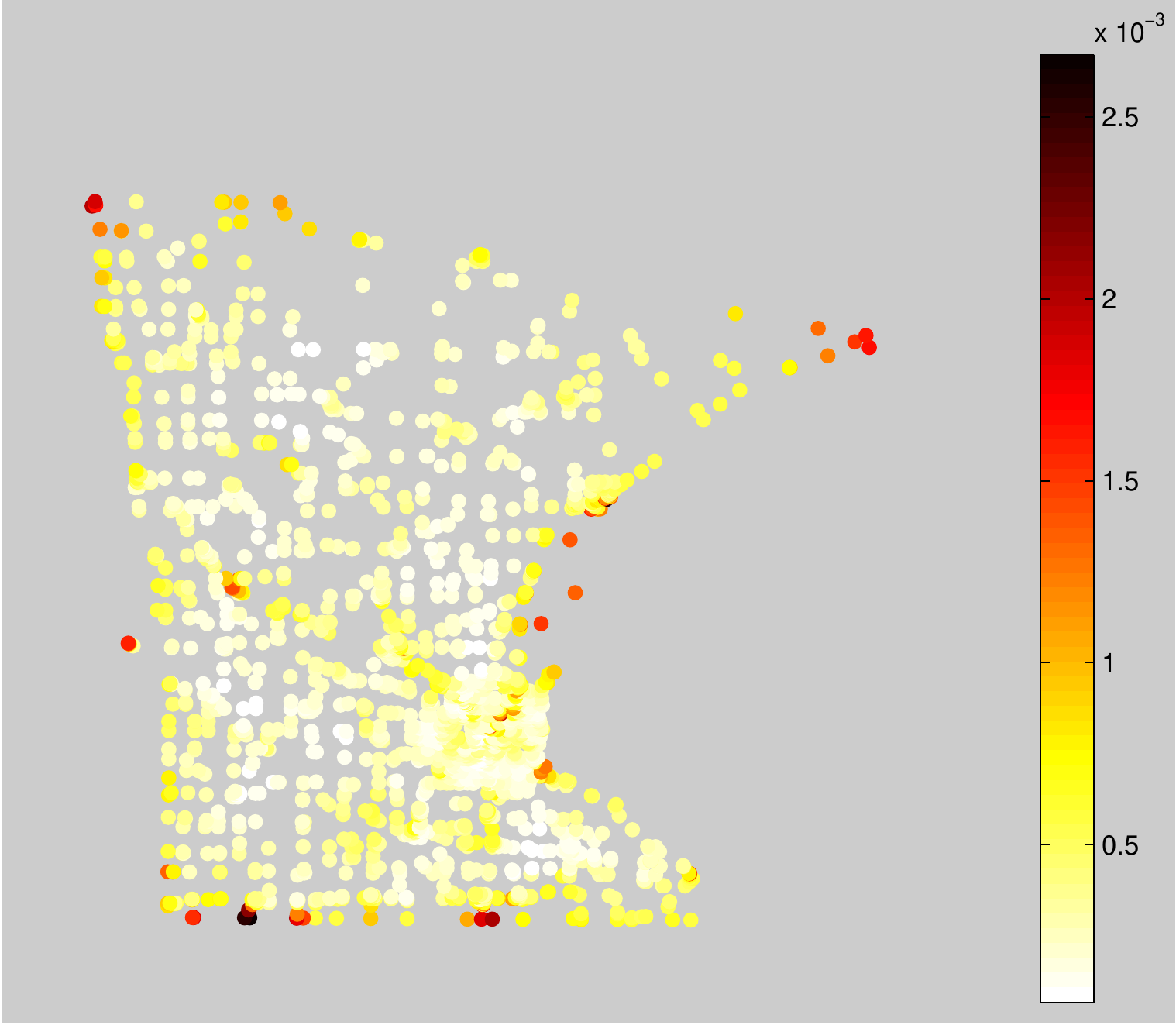}
\includegraphics[height=25mm, keepaspectratio]{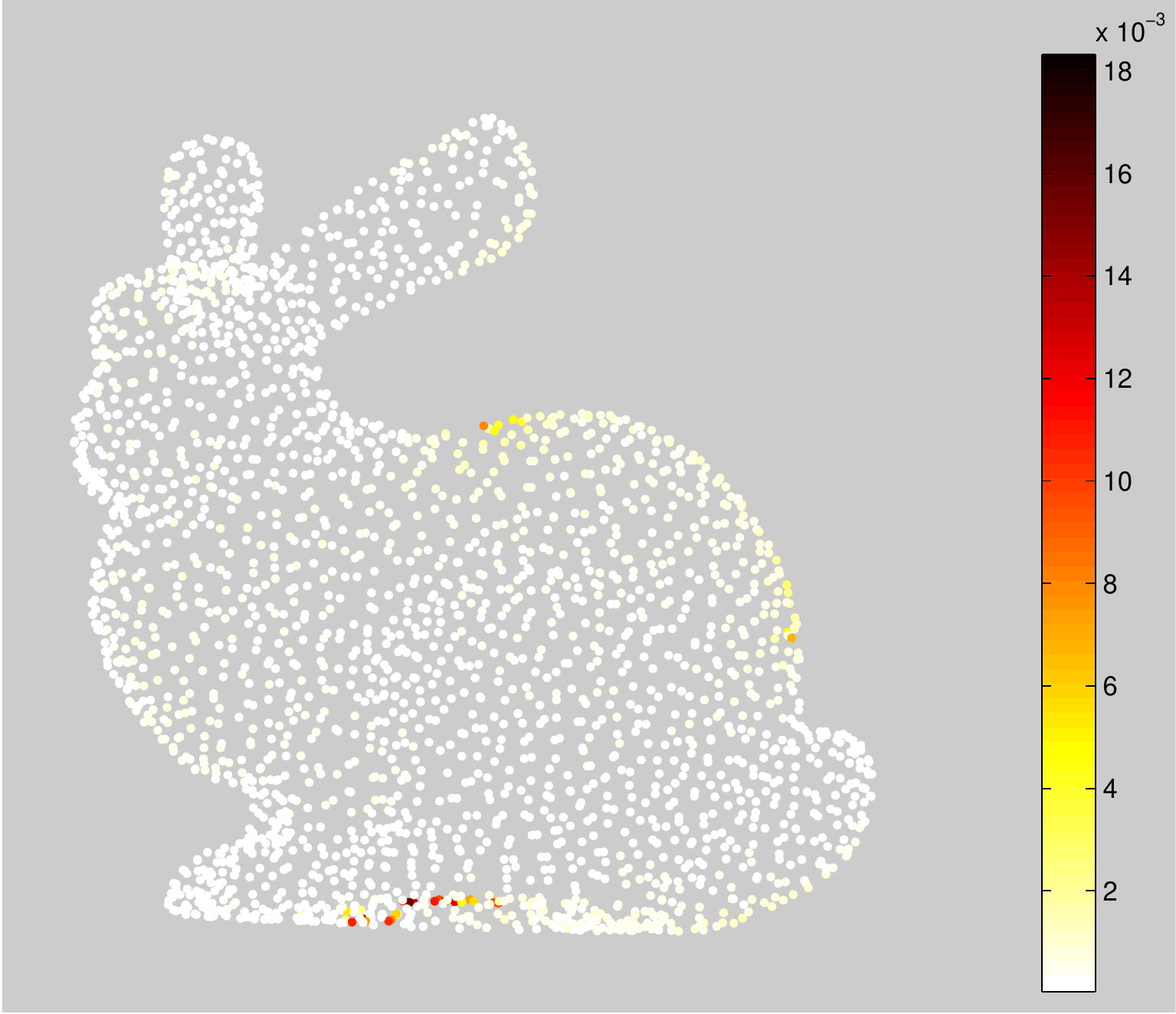}
\includegraphics[height=25mm, keepaspectratio]{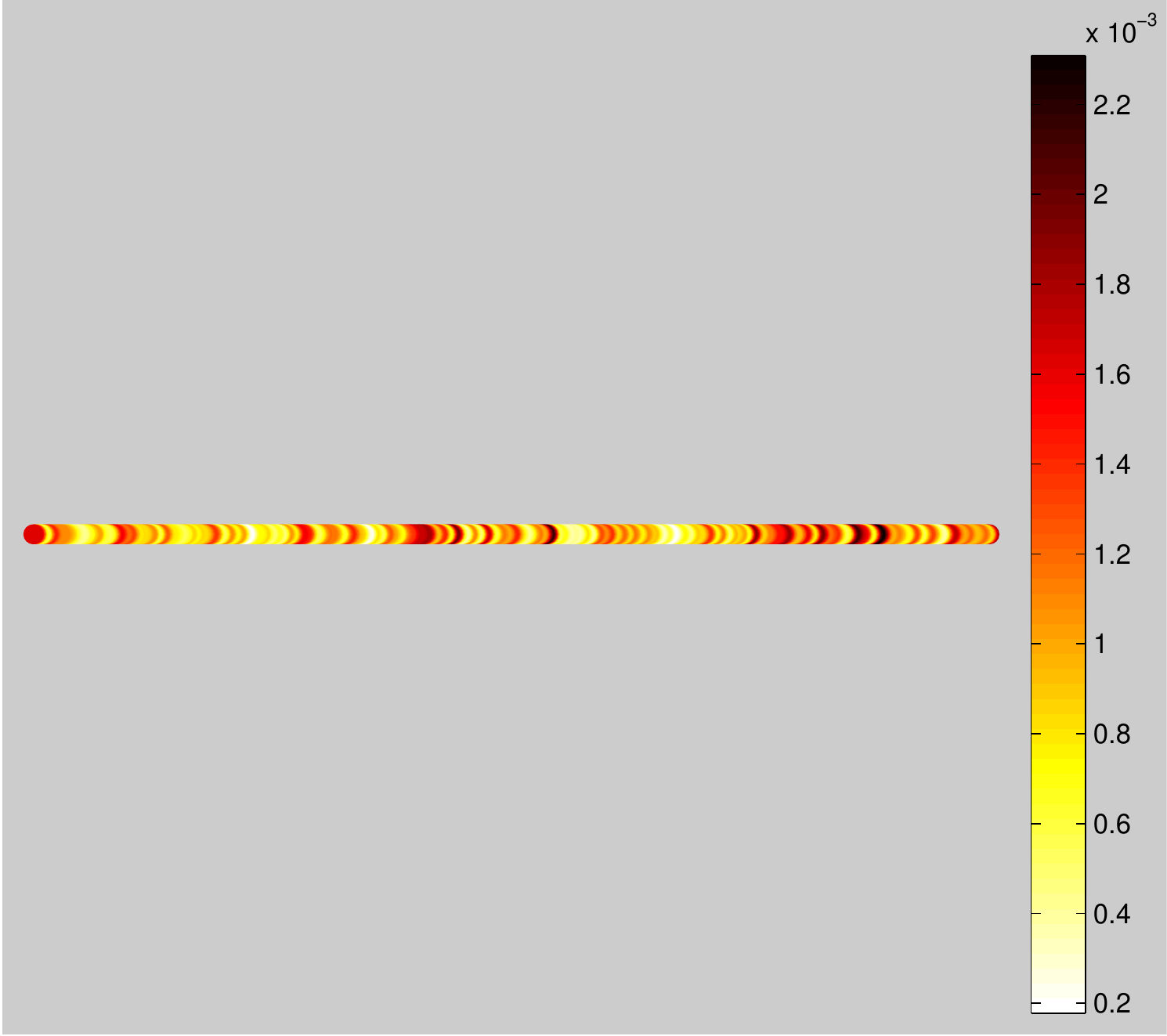}
\includegraphics[height=25mm, keepaspectratio]{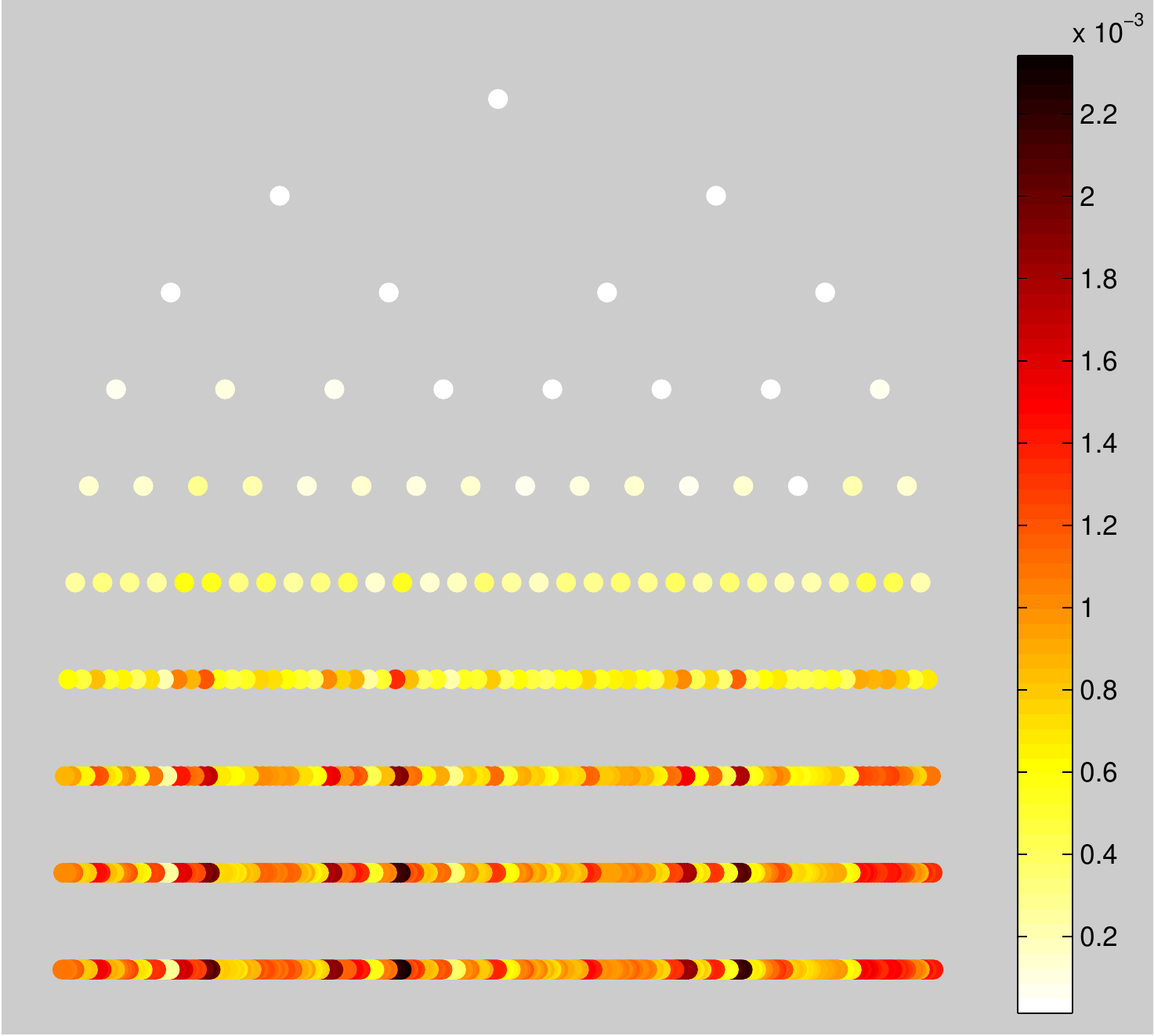}
\caption{\label{fig:sampling_distribution} Optimal and estimated optimal sampling distributions for \MODIF{five} different graphs. Nodes in black are sampled with a higher probability than nodes in white.}
\end{figure}
%

% ===========================================
\subsection{Reconstruction of $k$-bandlimited signals}

In this second part, we study experimentally the performance of the decoder \refeq{eq:practical_decoder}. All experiments are repeated for 3 different graphs: a community graph of type $\CCal_5$, the Minnesota graph and the bunny graph. We consider the recovery of $k$-bandlimited signals with band-limit $\nbClass=10$. We take $\nbVertRed = 200$ measurements using the estimated optimal distribution $\tilde{\vec{p}}$. The experiments are conducted with and without noise on the measurements. In the presence of noise, the random noise vector $\err$ follows a zero-mean Gaussian distribution\footnote{For nodes sampled multiple times, the realisation of the noise is thus different each time the same node is sampled. The noise vector $\err$ contains no duplicated entry.} of variance $\sigma^2$. The values of $\sigma$ used are $\{0, 1.5 \cdot 10^{-3}, 3.7 \cdot 10^{-3}, 8.8 \cdot 10^{-3}, 2.1 \cdot 10^{-2}, 5.0 \cdot 10^{-2} \}$. The signals are reconstructed by solving \refeq{eq:practical_decoder} for different values of the regularisation parameter $\reg$ and different functions $g$. For the community graph and the bunny graph, the regularisation parameter $\reg$ varies between $10^{-3}$ and $10^2$. For the Minnesota graph, it varies between $10^{-1}$ and $10^{10}$. For each $\sigma$, $10$ independent random signals of unit norm are drawn, sampled and reconstructed using all possible pairs $(\reg, g)$. Then, we compute the mean reconstruction errors\footnote{See Theorem \ref{th:practical_decoder} for the definition of $\vec{\alpha}^*$ and $\vec{\beta}^*$.} $\norm{\sig^* - \sig}_2$, $\norm{\vec{\alpha}^* - \sig}_2$ and $\norm{\vec{\beta}^*}_2$ over these $10$ signals. In our experiments, the distribution $\tilde{\vec{p}}$ is re-estimated with Algorithm~\ref{alg:optimal_distribution} each time a new signal $\sig$ is drawn.

We present the mean reconstruction errors obtained in the absence of noise on the measurements in Fig.~\ref{fig:reconstruction_nonoise}. In this set of experiments, we reconstruct the signals using $g(\Lap) = \Lap$, then $g(\Lap) = \Lap^2$, and finally $g(\Lap) = \Lap^4$. Before describing these results, we recall that the ratio $g(\eig_{10})/g(\eig_{11})$ decreases as the power of $\Lap$ increases. We observe that all reconstruction errors, $\norm{\sig^* - \sig}_2$, $\norm{\vec{\alpha}^* - \sig}_2$ and $\norm{\vec{\beta}^*}_2$ decrease when the ratio $g(\eig_\nbClass)/g(\eig_{\nbClass+1})$ in the range of small $\reg$, as predicted by the upper bounds on these errors in Theorem~\ref{th:practical_decoder}.

\begin{figure}
\centering
\begin{minipage} {.32\linewidth} \centering \scriptsize \hspace{2mm} Community graph $\CCal_5$ \end{minipage}
\begin{minipage} {.32\linewidth} \centering \scriptsize \hspace{2mm} Bunny graph \end{minipage}
\begin{minipage} {.32\linewidth} \centering \scriptsize \hspace{2mm} Minnesota graph \end{minipage}\\
\includegraphics[width=.32\linewidth, keepaspectratio]{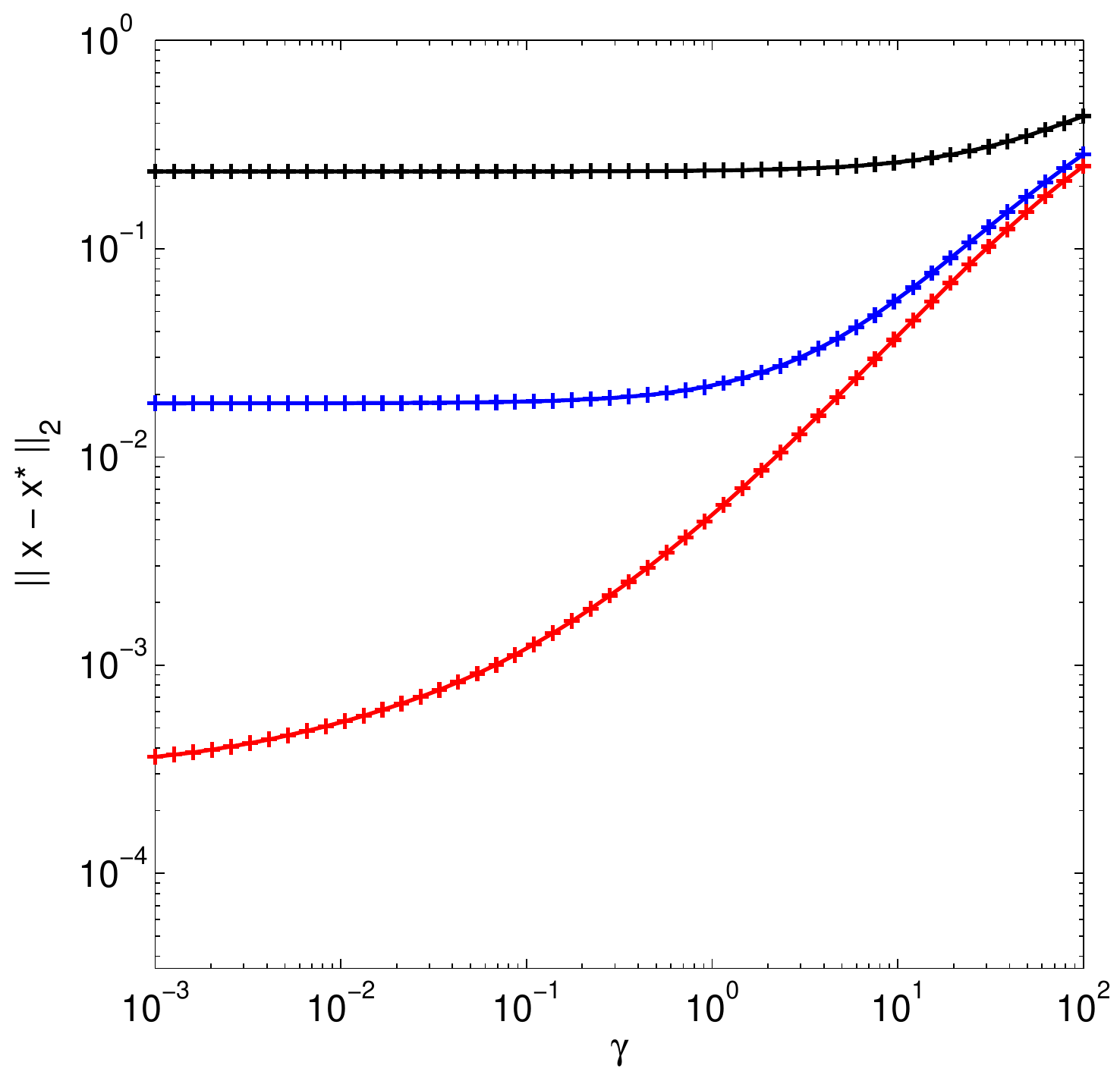}
\includegraphics[width=.32\linewidth, keepaspectratio]{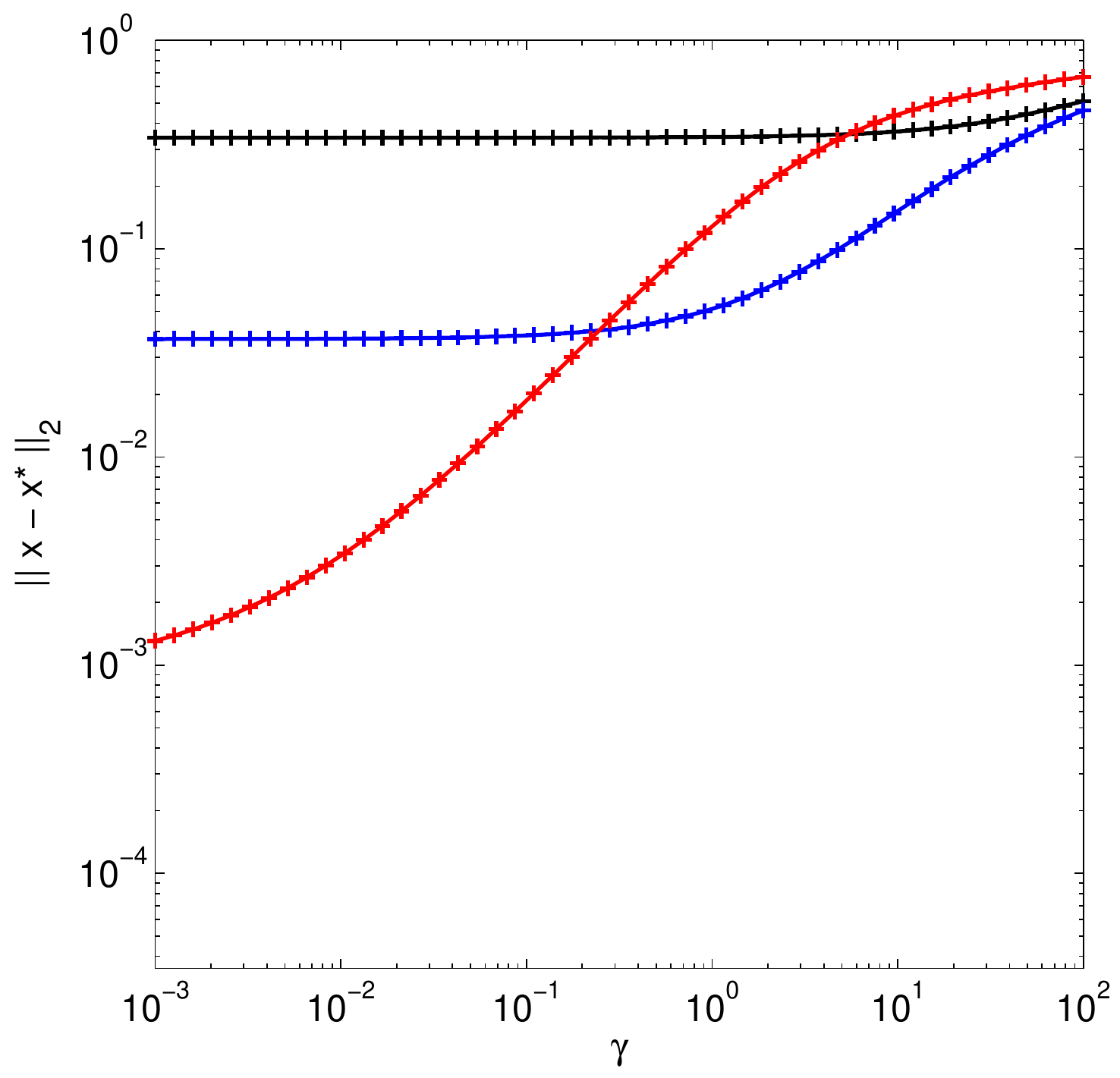}
\includegraphics[width=.32\linewidth, keepaspectratio]{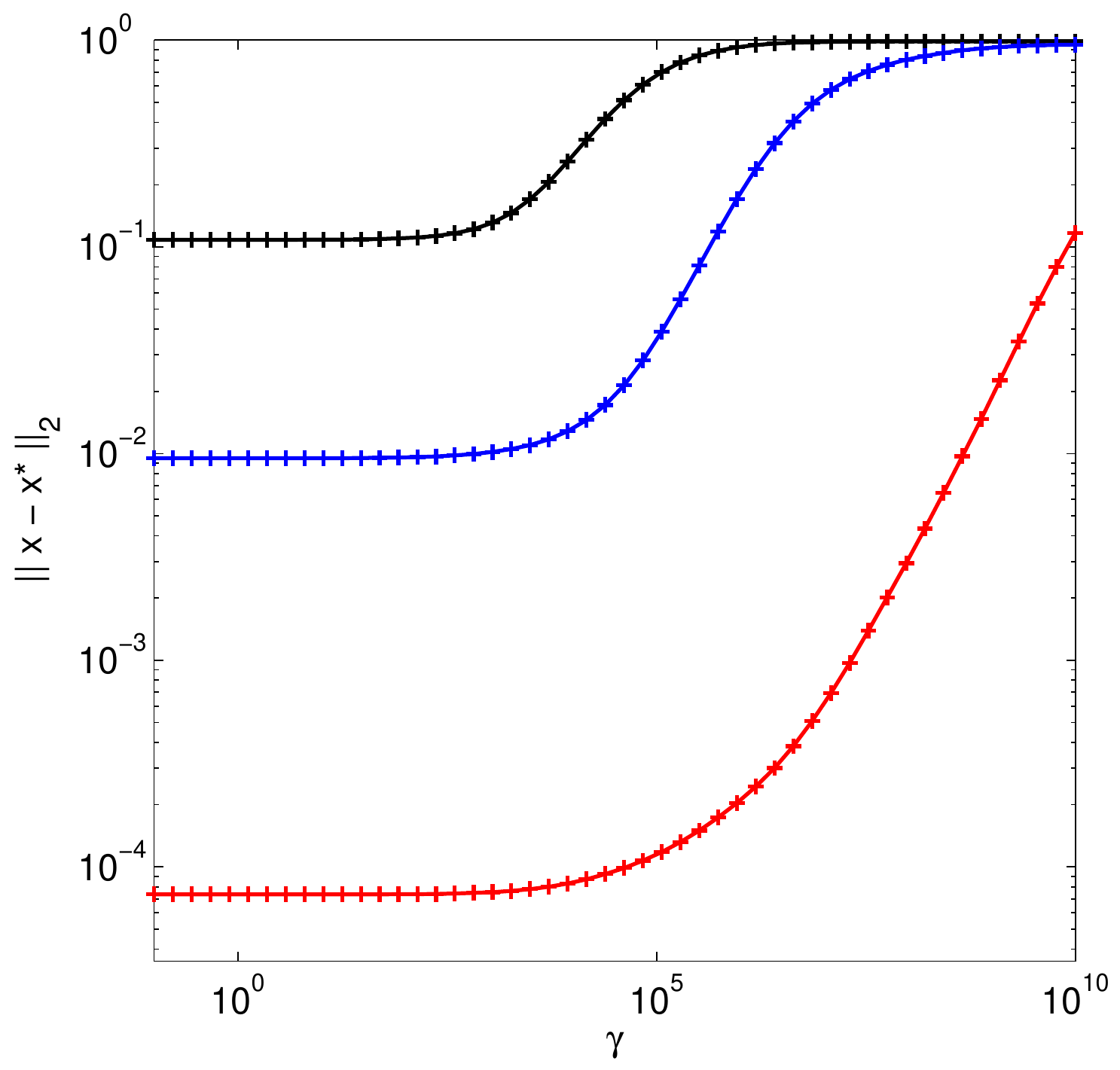}\\
\includegraphics[width=.32\linewidth, keepaspectratio]{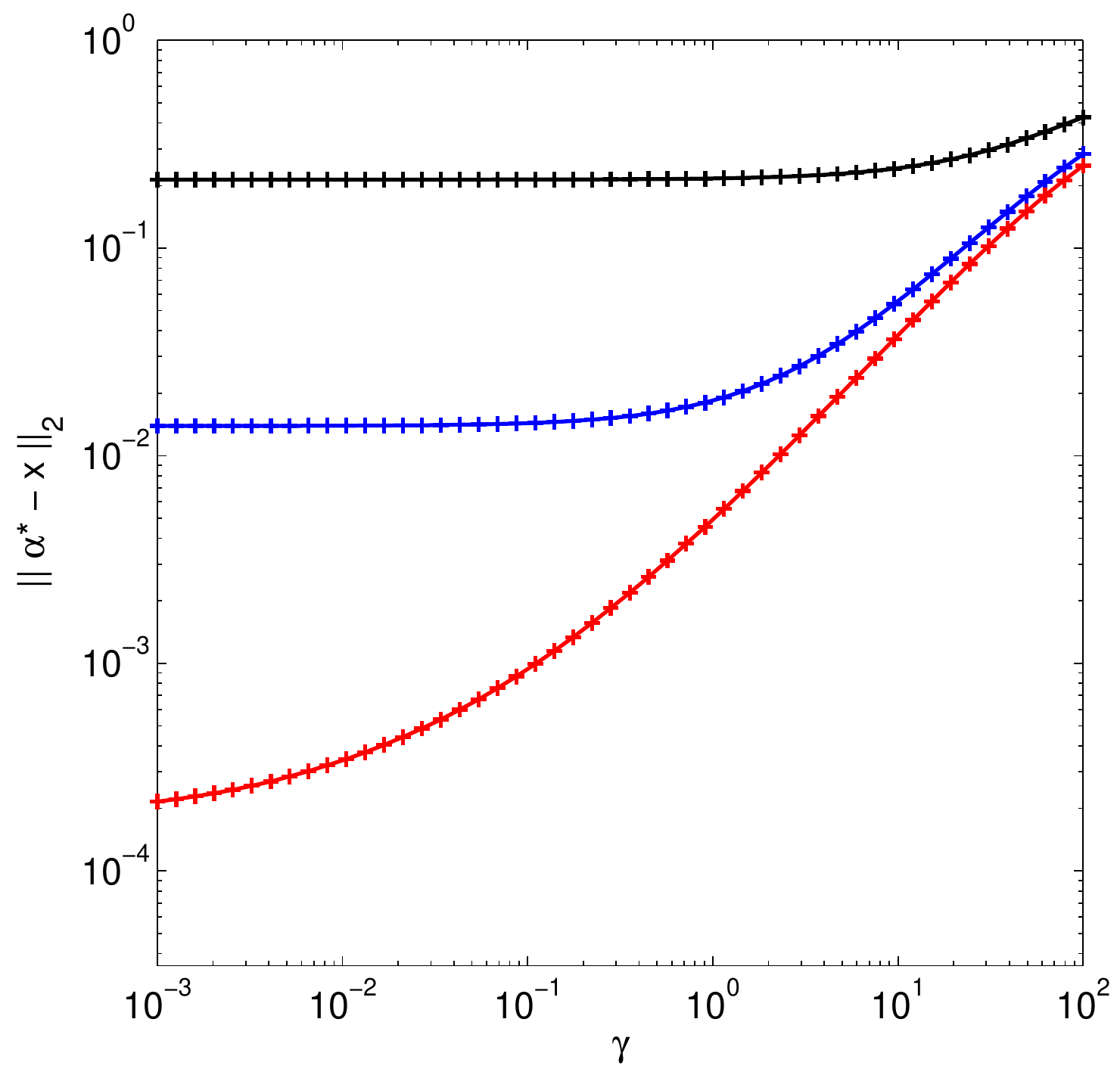}
\includegraphics[width=.32\linewidth, keepaspectratio]{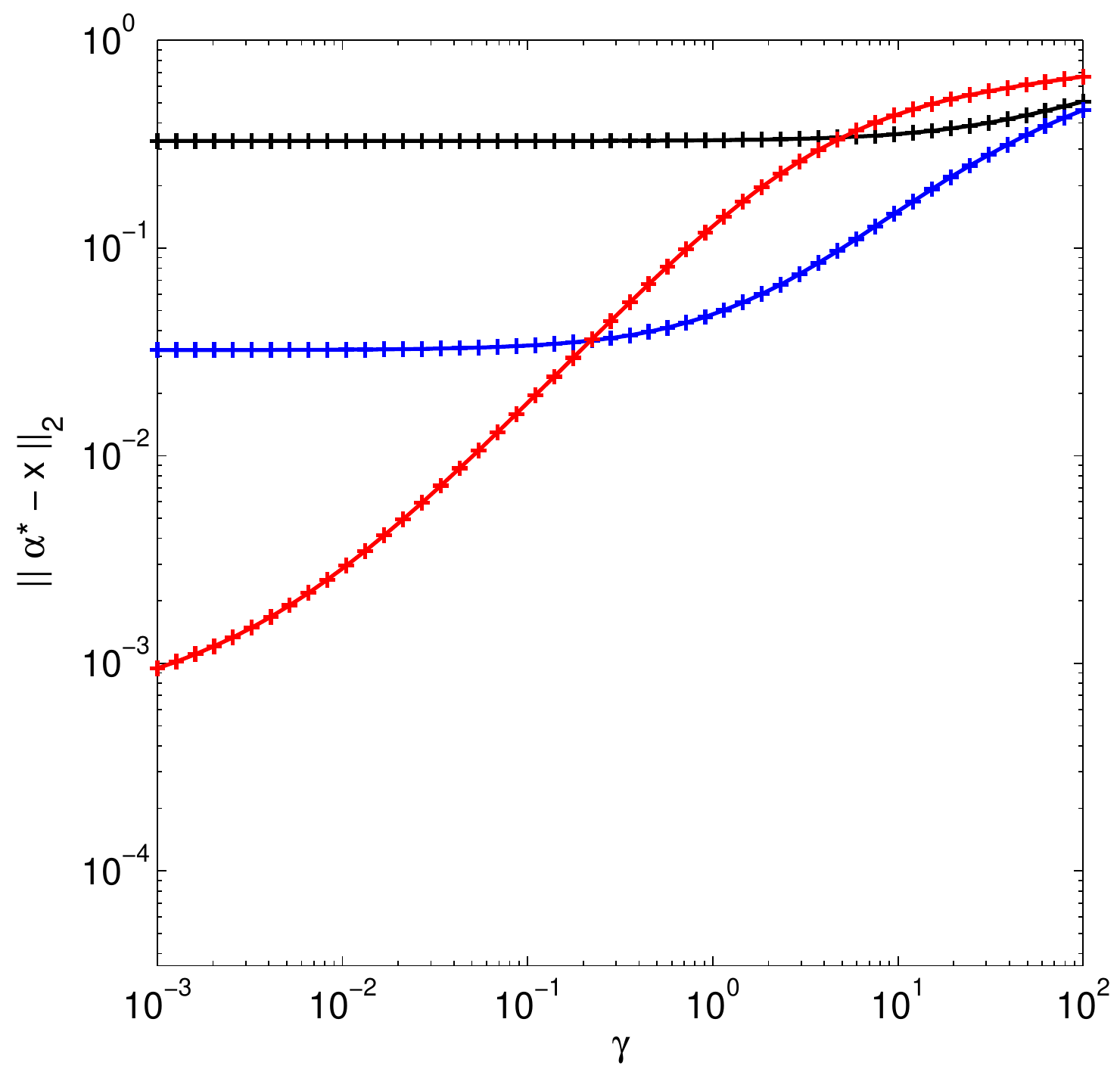}
\includegraphics[width=.32\linewidth, keepaspectratio]{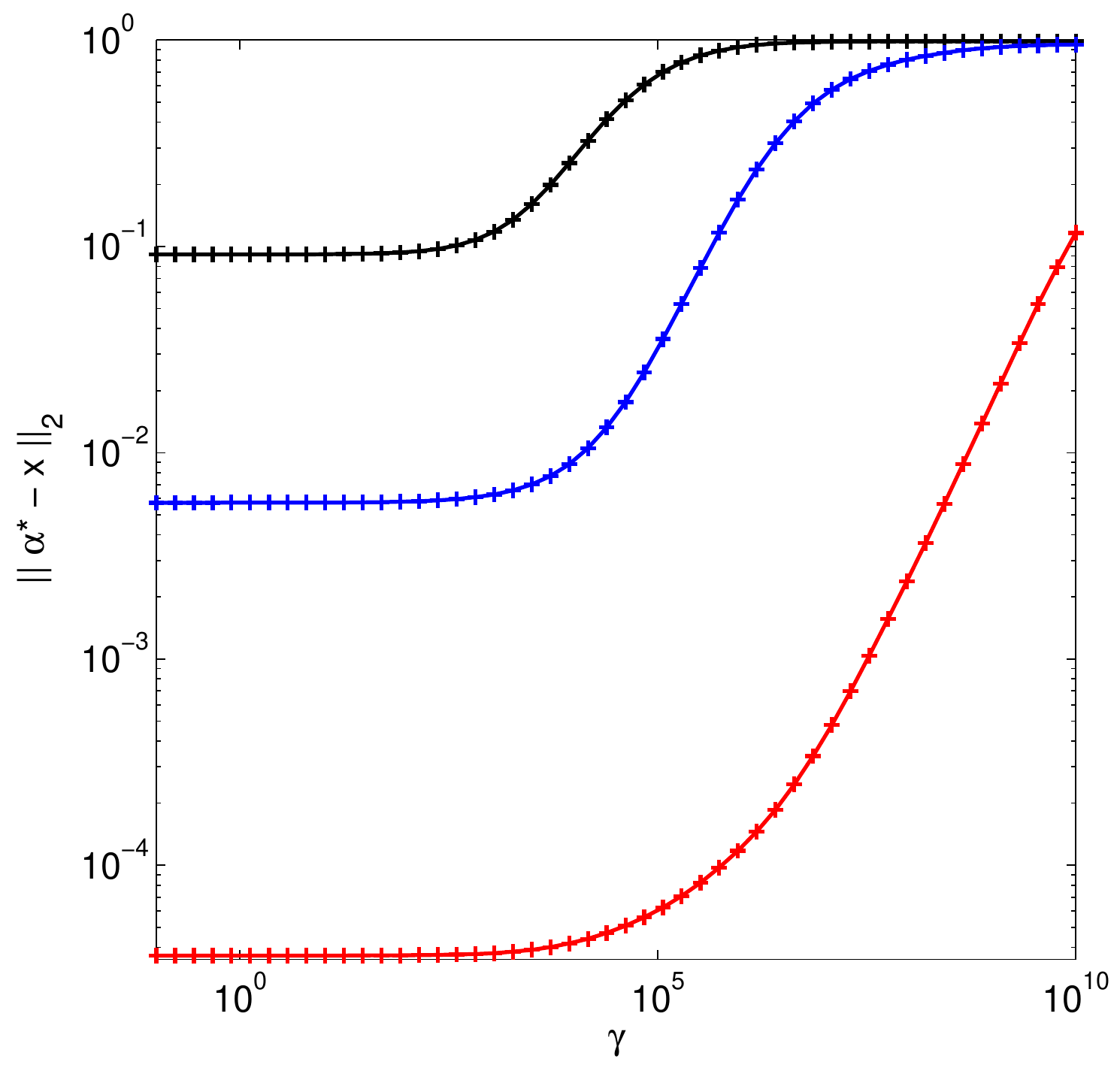}\\
\includegraphics[width=.32\linewidth, keepaspectratio]{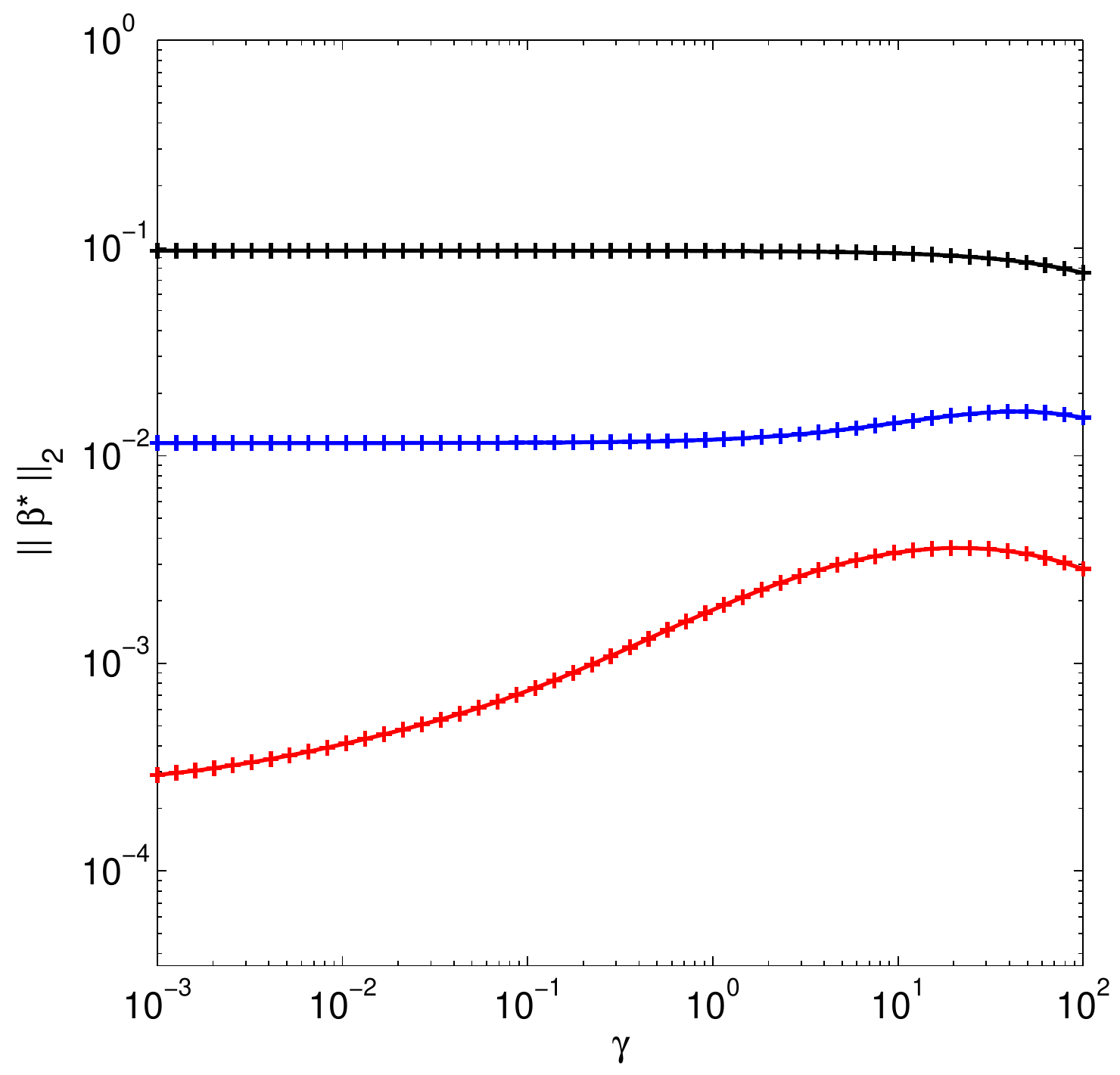}
\includegraphics[width=.32\linewidth, keepaspectratio]{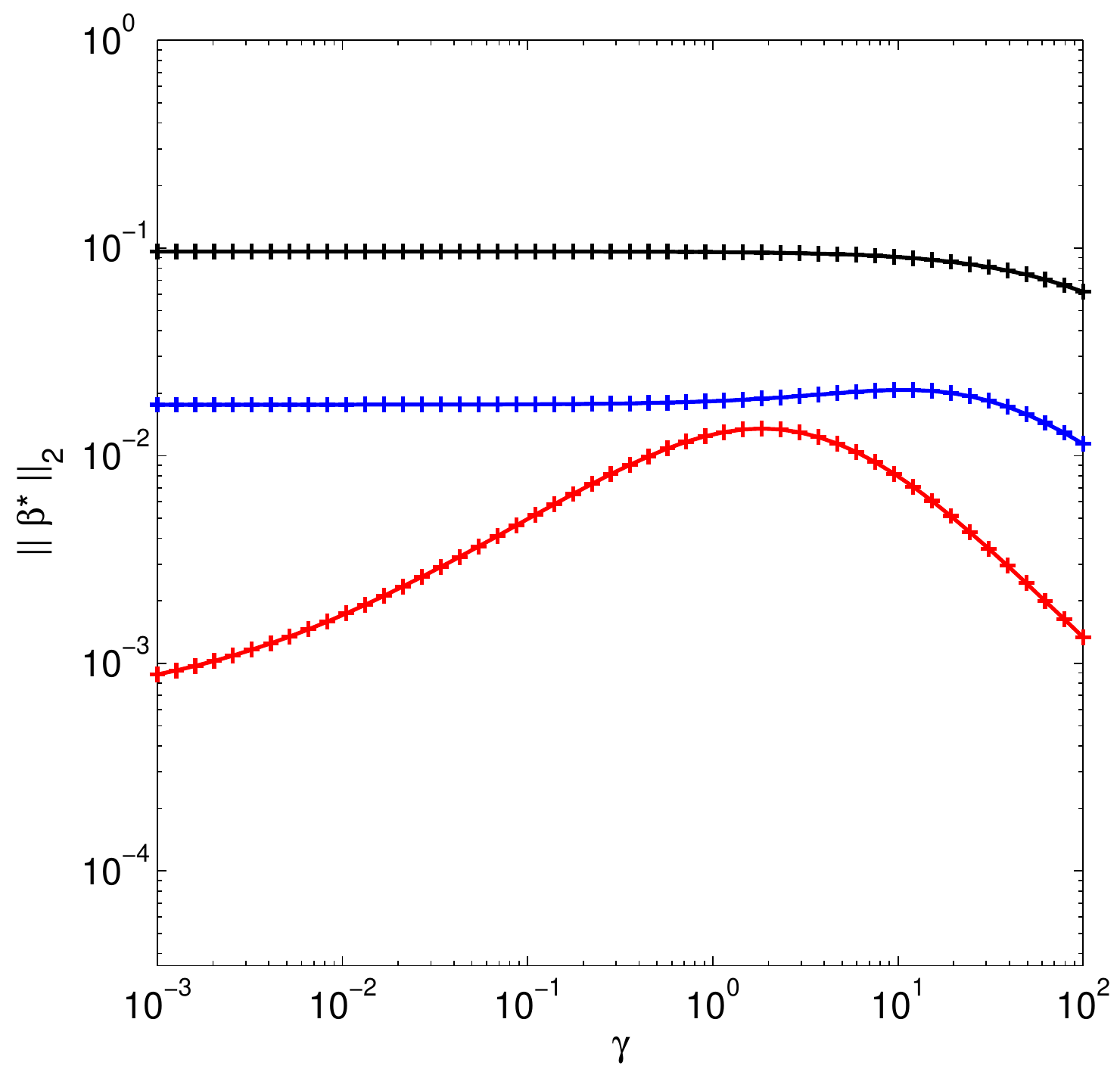}
\includegraphics[width=.32\linewidth, keepaspectratio]{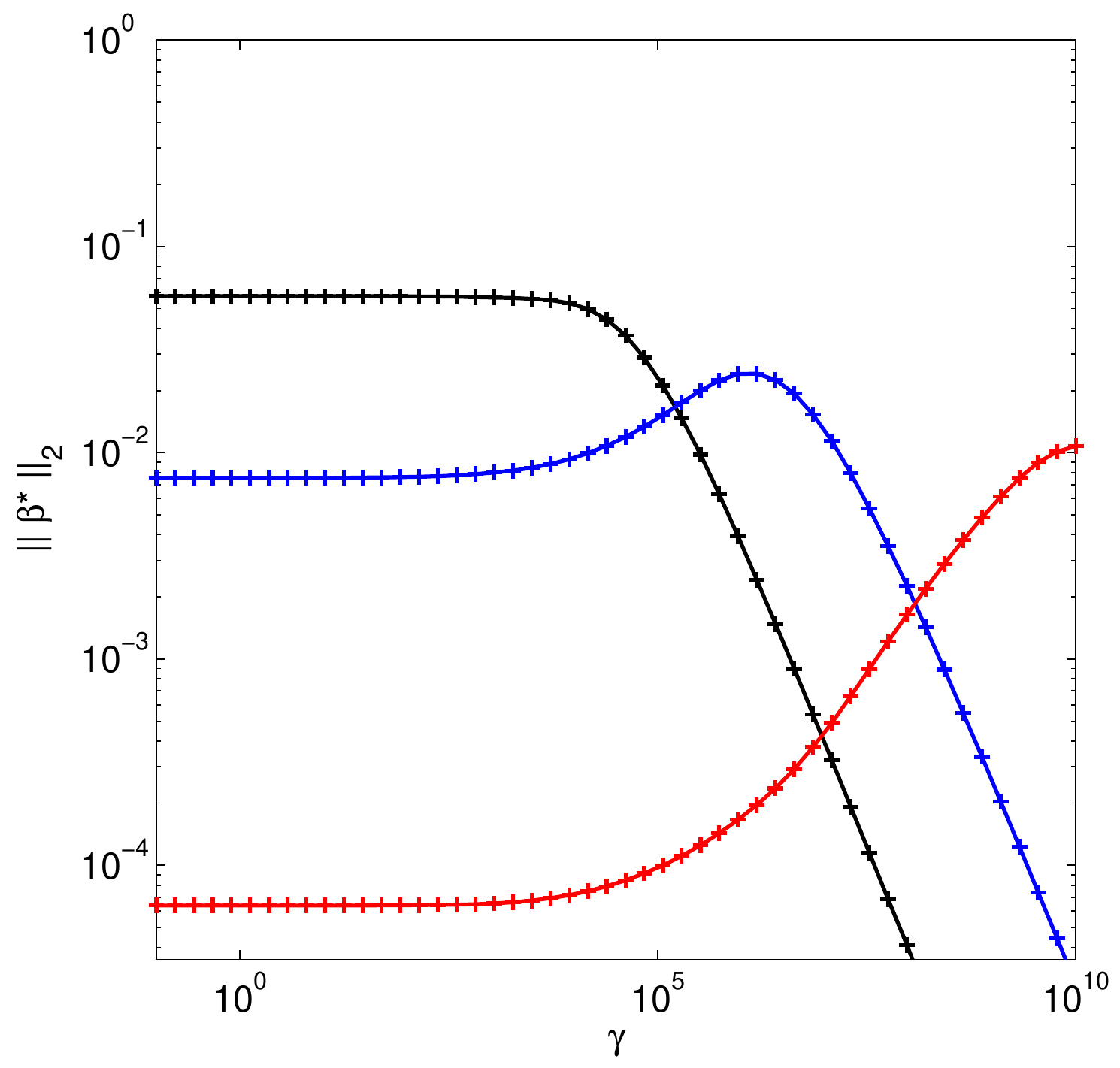}
\caption{\label{fig:reconstruction_nonoise} Mean reconstruction errors of $10$-bandlimited signals as a function of $\reg$. The simulations are performed in the absence of noise. The black curves indicate the results with $g(\Lap) = \Lap$. The blue curves indicate the results with $g(\Lap) = \Lap^2$. The red curves indicate the results with $g(\Lap) = \Lap^4$. The first, second and third columns show the results for a community graph of type $\CCal_5$, the bunny graph, and the Minnesota graph, respectively. The first, second and third rows show the mean reconstruction errors $\norm{\sig^* - \sig}_2$, $\norm{\vec{\alpha}^* - \sig}_2$ and $\norm{\vec{\beta}^*}_2$, respectively.}
\end{figure}

We present the mean reconstruction errors obtained in the presence of noise on the measurements in Fig.~\ref{fig:reconstruction_noise}. In this set of experiments, we reconstruct the signals using $g(\Lap) = \Lap^4$. As expected the best regularisation parameter $\reg$ increases with the noise level.

\begin{figure}
\centering
\begin{minipage} {.32\linewidth} \centering \scriptsize \hspace{2mm} Community graph $\CCal_5$ \end{minipage}
\begin{minipage} {.32\linewidth} \centering \scriptsize \hspace{2mm} Bunny graph \end{minipage}
\begin{minipage} {.32\linewidth} \centering \scriptsize \hspace{2mm} Minnesota graph \end{minipage}\\
\includegraphics[width=.32\linewidth, keepaspectratio]{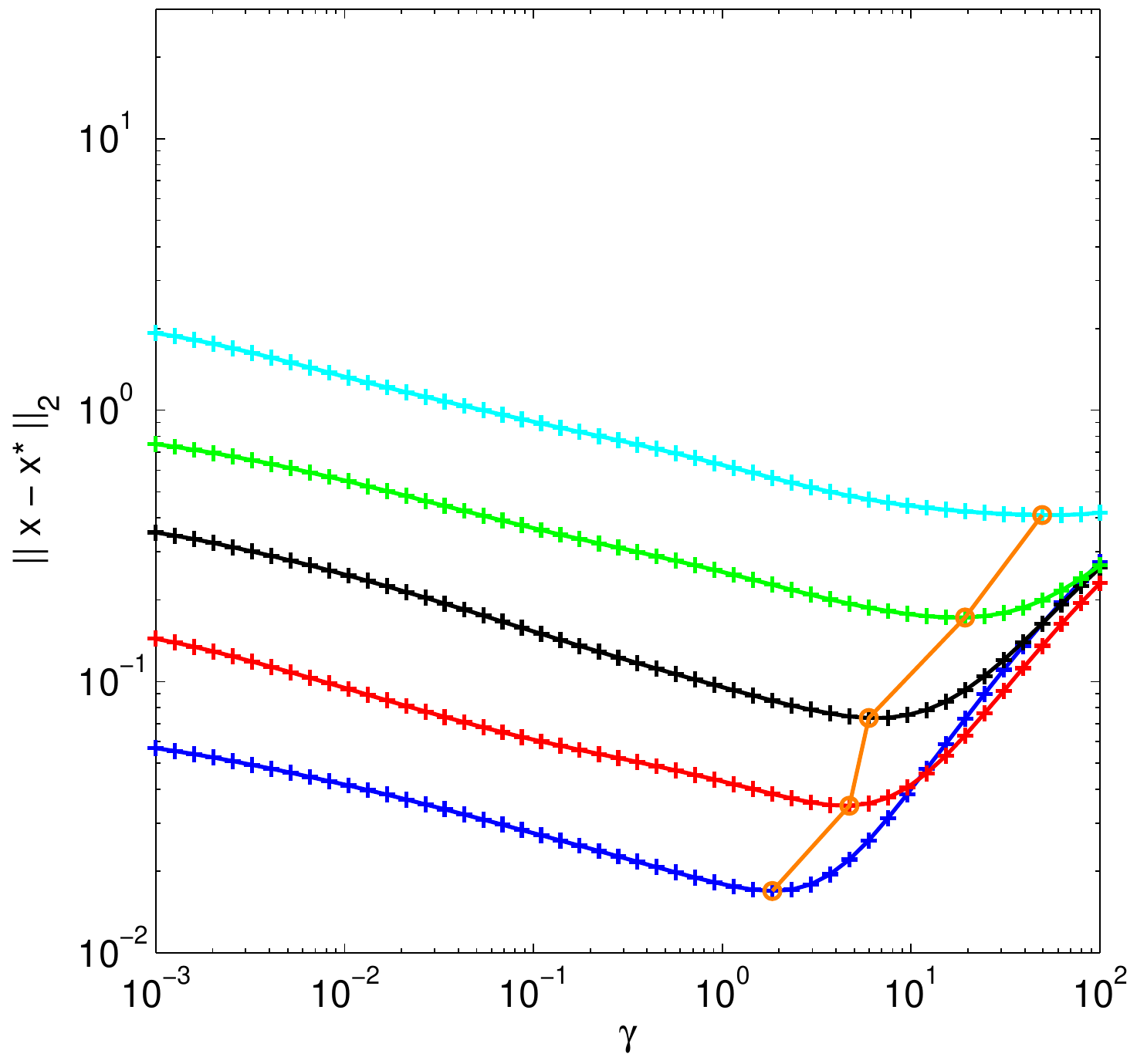}
\includegraphics[width=.32\linewidth, keepaspectratio]{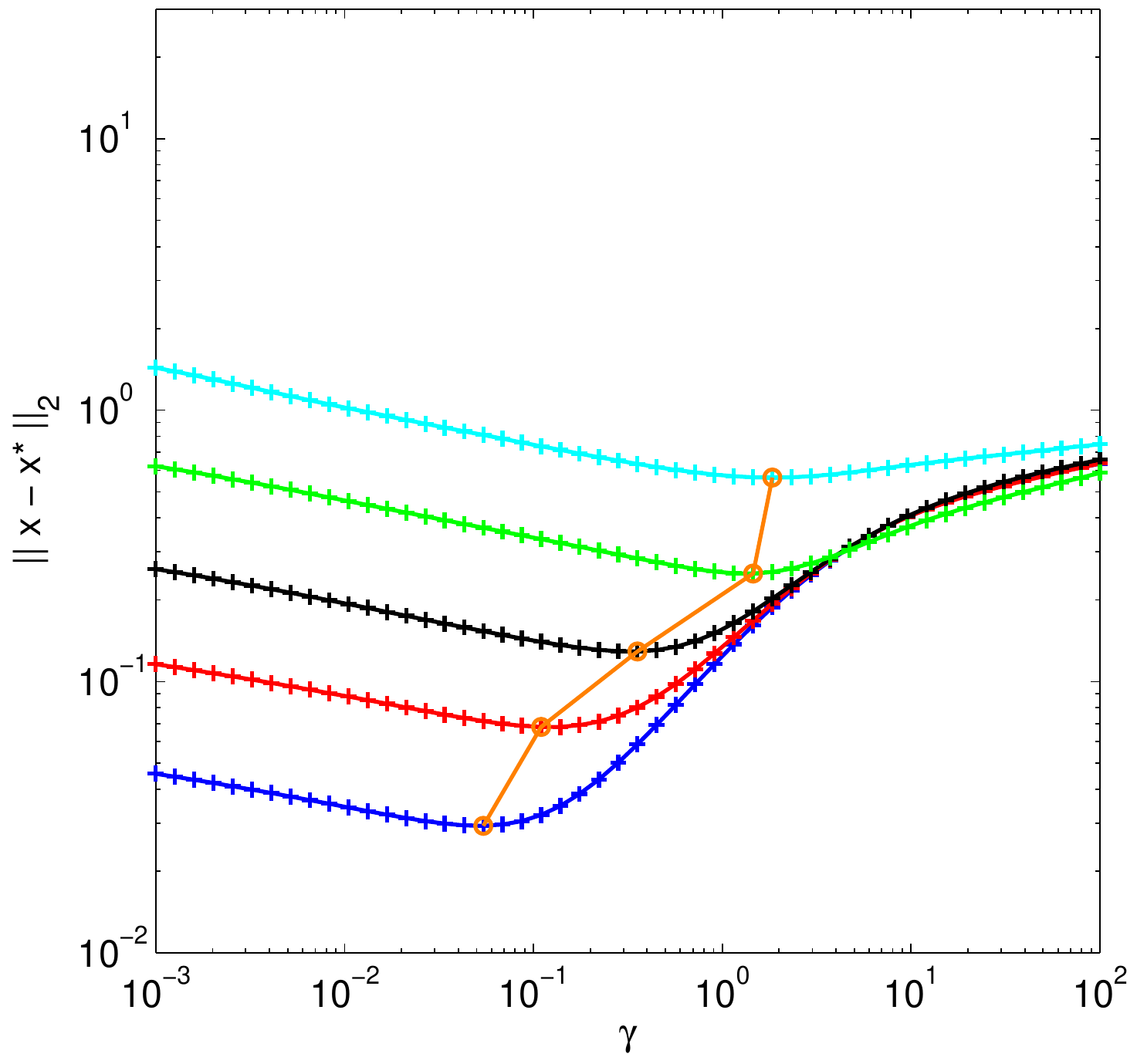}
\includegraphics[width=.32\linewidth, keepaspectratio]{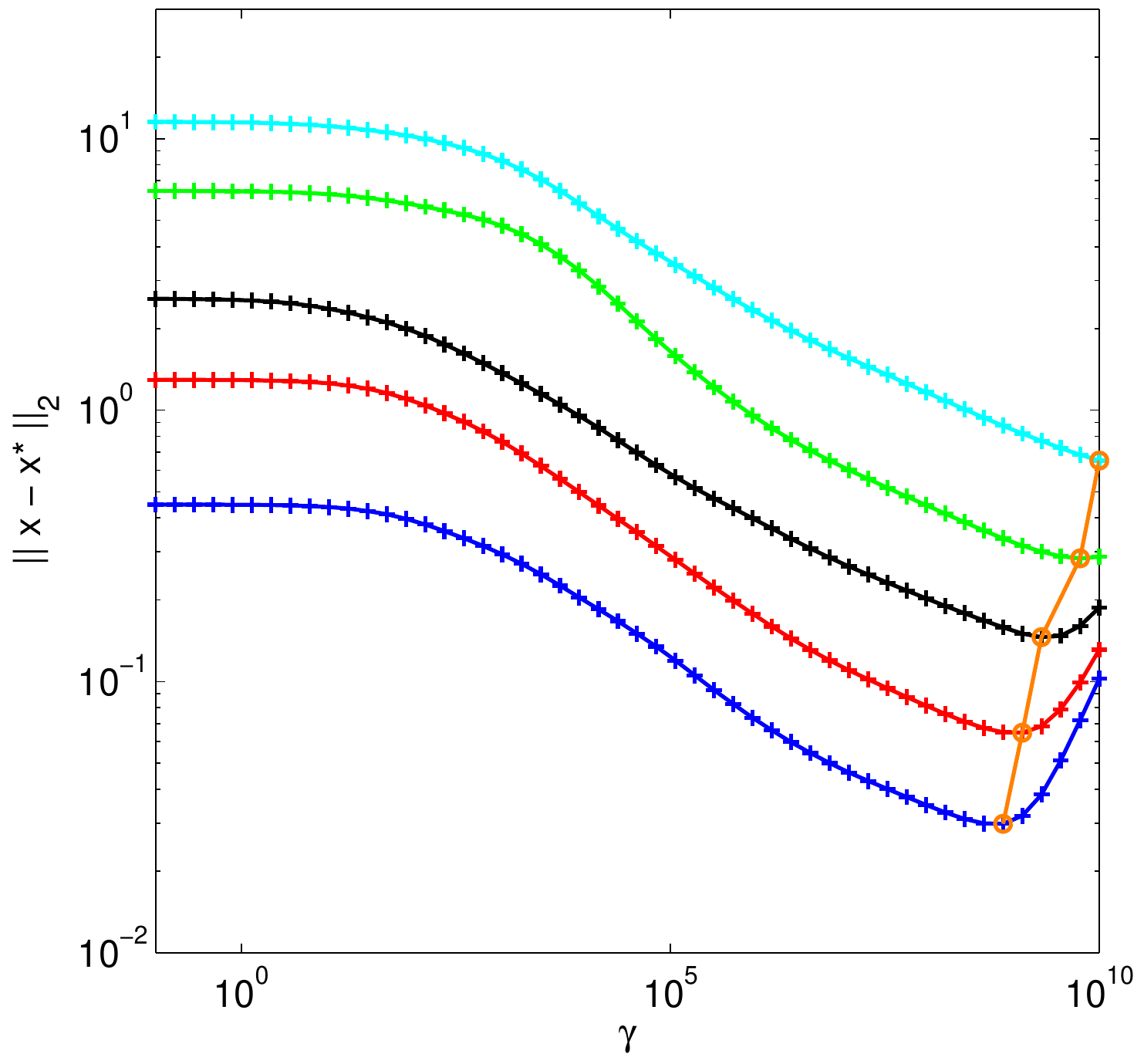}\\
\caption{\label{fig:reconstruction_noise} Mean reconstruction error $\norm{\sig^* - \sig}_2$ of $10$-bandlimited signals as a function of $\reg$ with $g(\Lap) = \Lap^4$. The simulations are performed in presence of noise. The standard deviation of the noise is $0.0015$ (blue), $0.0037$ (red), $0.0088$ (black), $0.0210$ (green), $0.0500$ (cyan). The best reconstruction errors are indicated by orange circles. The first, second and third columns show the results for a community graph of type $\CCal_5$, the bunny graph, and the Minnesota graph, respectively.}
\end{figure}
%

% ===========================================
\subsection{Illustration: sampling of a real image}

We finish this experimental section with an example of image sampling using the developed theory.

For this illustration, we use the photo of \emph{Lac d'Emosson} in Switzerland presented in Fig.~\ref{fig:original_image} This RGB image contains $4288 \times 2848$ pixels. We divide this image into patches of $8 \times 8$ pixels, thus obtaining $536 \times 356$ patches of $64$ pixels per RGB channel. Let us denote each patch by $\vec{q}_{i,j,l} \in \Rbb^{64}$ with $i \in \{1, \ldots, 536\}$, $j \in \{1, \ldots, 356\}$, and $l \in \{1, 2, 3\}$. The pair of indices $(i,j)$ encodes the spatial location of the patch and $l$ encodes the color channel. Using these patches, we build the following matrix
\begin{align*}
\ma{X} := \left(
\begin{array}{cccccc}
\vec{q}_{1,1,1} & \vec{q}_{1,2,1} & \ldots & \vec{q}_{2,1,1} & \ldots & \vec{q}_{536,356,1} \\ 
\vec{q}_{1,1,2} & \vec{q}_{1,2,2} & \ldots & \vec{q}_{2,1,2} & \ldots & \vec{q}_{536,356,2} \\ 
\vec{q}_{1,1,3} & \vec{q}_{1,2,3} & \ldots & \vec{q}_{2,1,3} & \ldots & \vec{q}_{536,356,3} \\ 
\end{array}
\right) \in \Rbb^{ 192 \times \nbVert},
\end{align*}
where $\nbVert = 190816$. Each column of $\ma{X}$ represents a color patch of the original image at a given position.

\begin{figure}
\centering
\subfigure[]{\label{fig:original_image}\includegraphics[height=.34\linewidth, keepaspectratio]{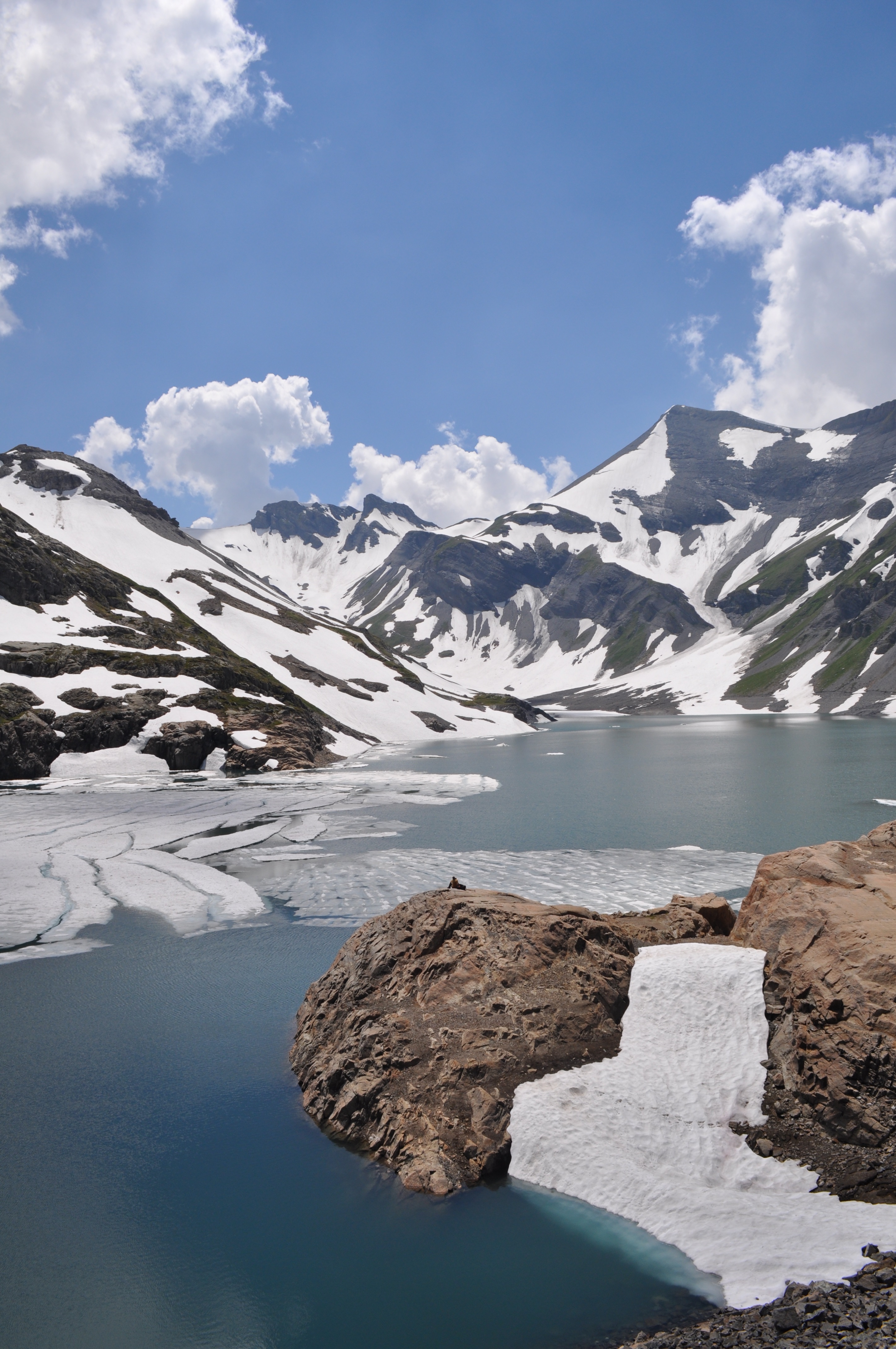}}
\subfigure[]{\label{fig:sampling_distribution_image}\includegraphics[height=.35\linewidth, keepaspectratio]{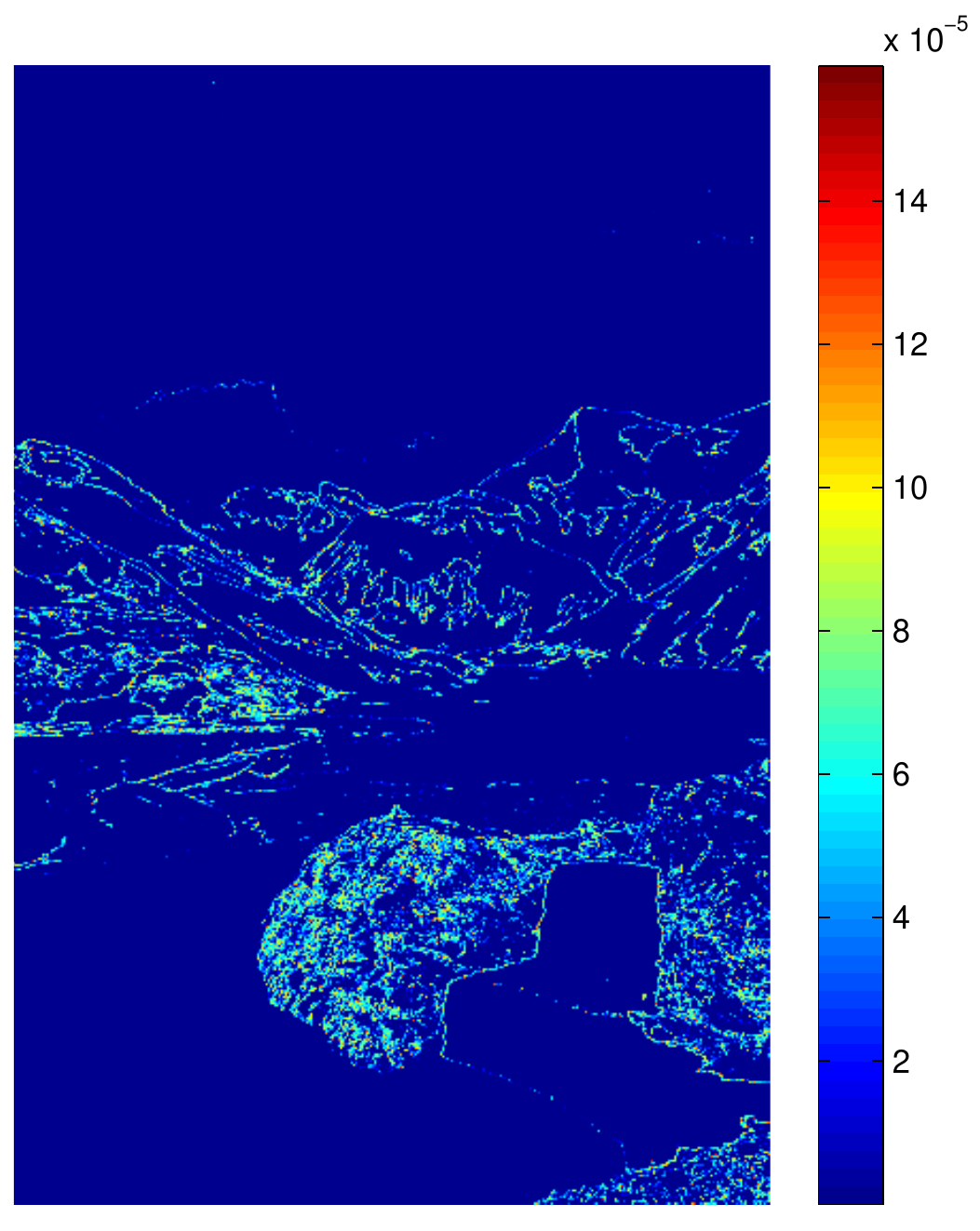}}
\subfigure[]{\label{fig:sampled_image_nonunif}\includegraphics[height=.34\linewidth, keepaspectratio]{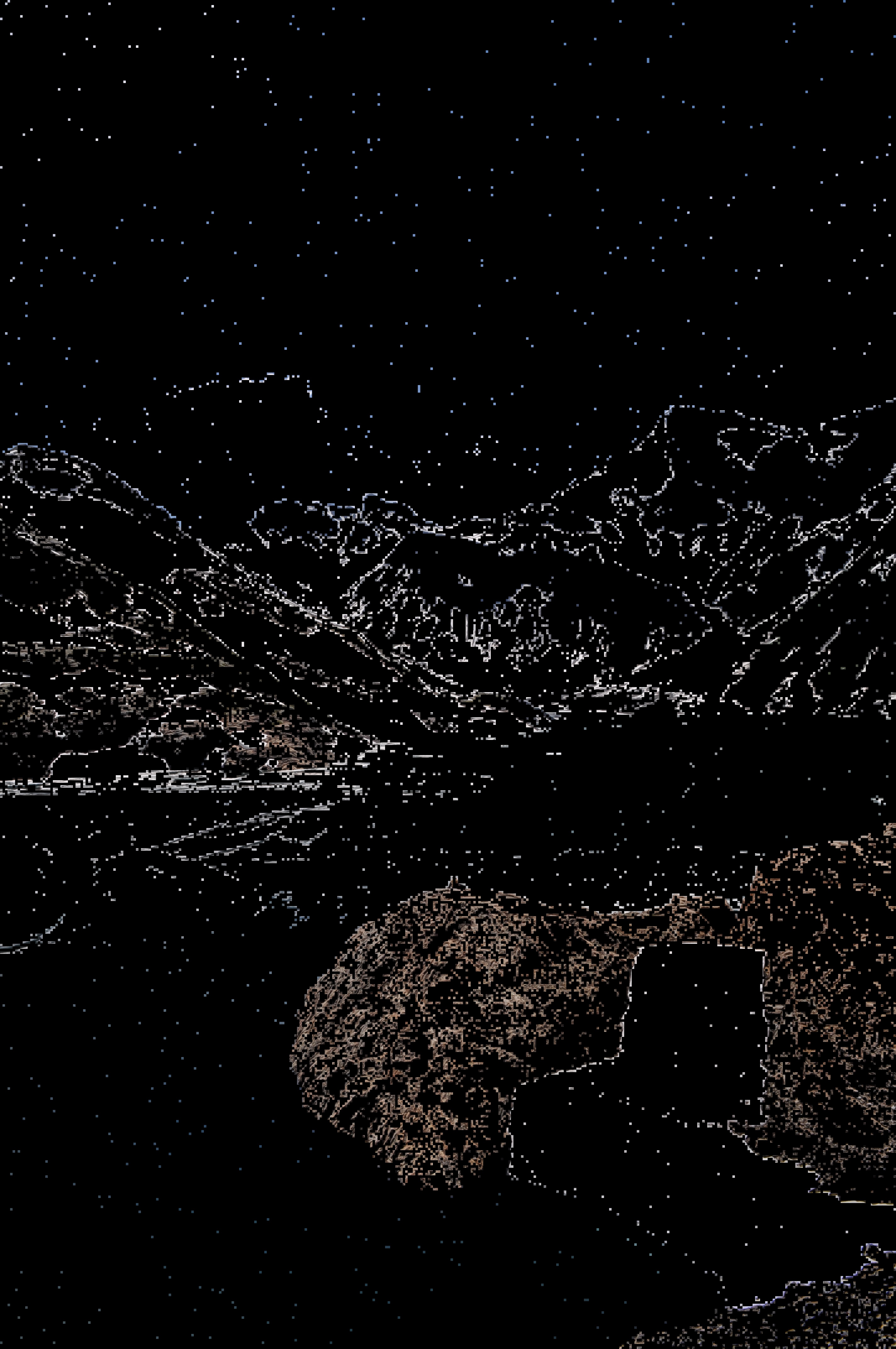}}
\subfigure[]{\label{fig:sampled_image_unif}\includegraphics[height=.34\linewidth, keepaspectratio]{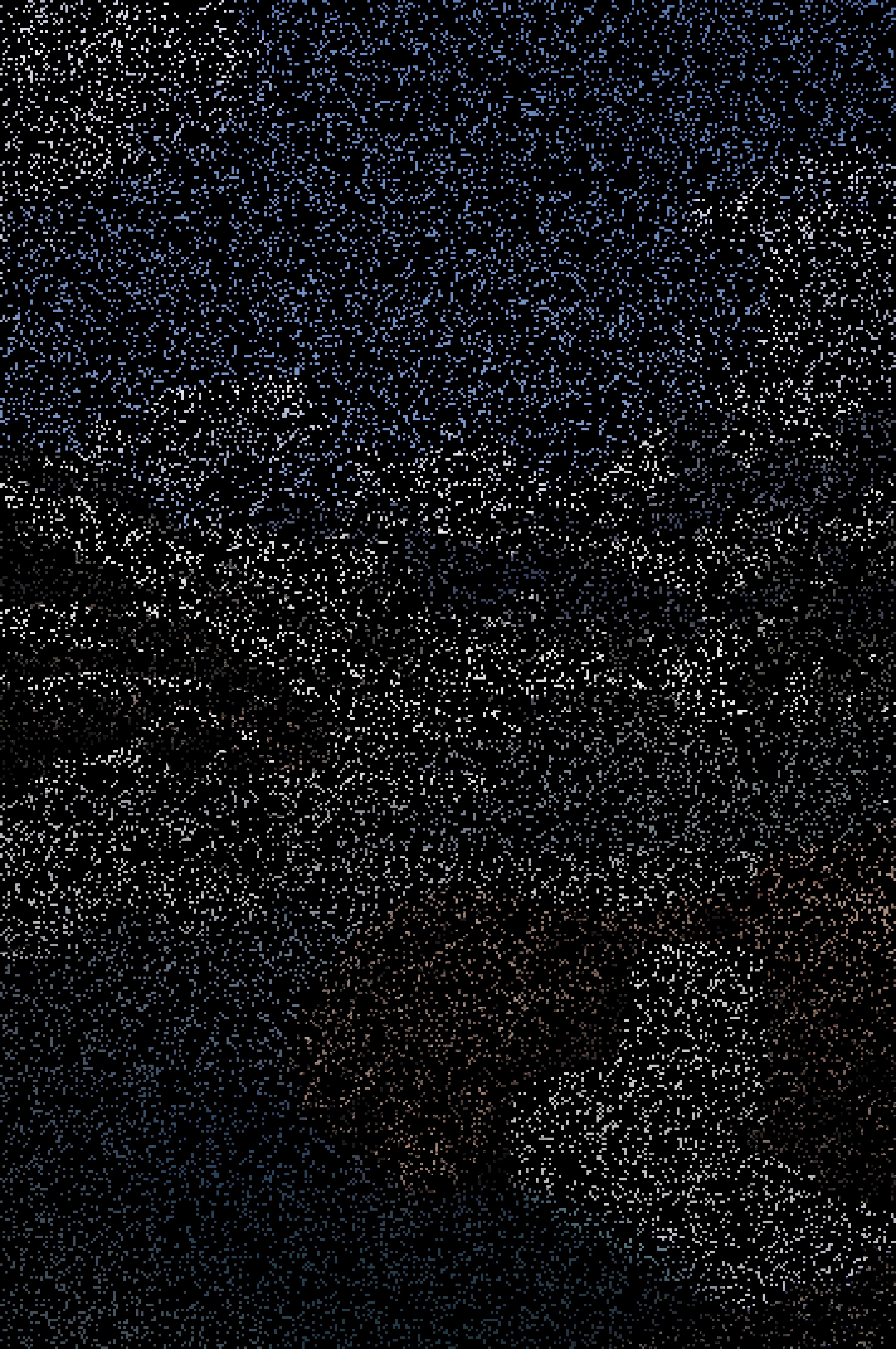}}
\caption{a) original image; b) estimated optimal sampling distribution $\tilde{\vec{p}}$; c) sampled image using $\tilde{\vec{p}}$; d) sampled image using the uniform sampling distribution $\vec{\pi}$. The sampled images are obtained using the same number of measurements.}
\end{figure}

We continue by building a graph modelling the similarity between the columns of $\ma{X}$. Let $\sig_i \in \Rbb^{192}$ be the $i^\th$ column-vector of the matrix $\ma{X}$. For each $i \in \{1, \ldots, \nbVert \}$, we search for the $20$ nearest neighbours of $\sig_i$ among all other columns of $\ma{X}$. Let $\sig_j \in \Rbb^{192}$ be a vector connected to $\sig_i$. The weight $\ma{W}_{ij}$ of the weighted adjacency matrix $\ma{W} \in \Rbb^{\nbVert \times \nbVert}$ satisfies
\begin{align*}
\ma{W}_{ij} := \exp \left(- \frac{\norm{\sig_i  - \sig_j}_2^2}{2\sigma^2} \right),
\end{align*}
where $\sigma > 0$ is the standard deviation of all Euclidean distances between pairs of connected columns (patches). We then symmetrise the matrix $\ma{W}$. Each column of $\ma{X}$ is thus connected to at least $20$ other columns after symmetrisation. We finish the construction of the graph by computing the combinatorial Laplacian $\Lap \in \Rbb^{\nbVert \times \nbVert}$ associated to $\ma{W}_{ij}$.

We sample $\ma{X}$ by measuring about $15\%$ of its $\nbVert$ columns: $\nbVertRed = 28622 \approx 0.15 \, \nbVert$. First, we estimate the optimal sampling distribution $\tilde{\vec{p}}$ for $\nbClass = 9541 \approx \nbVertRed/3$ with Algorithm~\ref{alg:optimal_distribution}. It takes about $4$ minutes to compute $\tilde{\vec{p}}$ using Matlab on a laptop with a 2,8 GHz Intel Core i7 and 16GB of RAM. In comparison, we tried to compute $\vec{p}^*$ exactly by computing $\Fou_{9541}$ but stopped Matlab after $30$ minutes of computations. The estimated sampling distribution is presented in Fig.~\ref{fig:sampling_distribution_image}. Then, we build the sampling matrix $\Meas$ by drawing at random $\nbVertRed$ independent indices from $\{1, \ldots, \nbVert \}$ according to $\tilde{\vec{p}}$. Note that the \emph{effective sampling rate}, \ie, once the indices sampled multiple times are removed, is about $7.6\%$, only. The sampled columns are denoted by $\ma{Y} \in \Rbb^{192 \times \nbVertRed}$ and satisfy $\ma{Y}^\adjoint = \Meas \ma{X}^\adjoint$. 

We present in Fig.~\ref{fig:sampled_image_nonunif} the sampled image, where all non-sampled pixels appear in black. We remark that the regions where many patches are similar (sky, lake, snow) are very sparsely sampled. This can be explained as follows. The patches in such a region being all similar, one can fill this region by copying a single representative patch. In practice this is done via the Laplacian matrix, which encodes the similarities between the patches, by solving \refeq{eq:practical_decoder}.

We reconstruct the image by solving \refeq{eq:practical_decoder} for each column of $\ma{Y}$ with $\reg = 1$ and $g(\Lap) = \Lap$. Reconstructing the image takes about $3$ minutes by solving~\refeq{eq:exact_solution} using the {\sf mldivide} function of Matlab. We show the reconstructed image in Fig.~\ref{fig:reconstruction_image}. One can notice that we obtain a very accurate reconstruction of the original image. The SNR between the original and the reconstructed images is $27.76$ dB. As a comparison, we present in Fig.~\ref{fig:reconstruction_image} the reconstructed image from, again, $\nbVertRed = 28622 \approx 0.15 \, \nbVert$ measurements but obtained using the uniform sampling distribution. The effective sampling ratio in this case is about $14\%$. The associated sampled image is presented in Fig.~\ref{fig:sampled_image_unif}. The SNR between the original and the reconstructed images is $27.10$ dB. The estimated optimal sampling distribution $\tilde{\vec{p}}$ allows us to attain a better image quality with an effective sampling ratio almost twice smaller.

\begin{figure}
\centering
\begin{minipage}{.32\linewidth} \centering \scriptsize Original \end{minipage}
\begin{minipage}{.32\linewidth}  \centering \scriptsize Reconstructed (sampling with $\tilde{\vec{p}}$)\end{minipage}
\begin{minipage}{.32\linewidth}  \centering \scriptsize Reconstructed (sampling with $\vec{\pi}$)\end{minipage}
\includegraphics[width=.32\linewidth, keepaspectratio]{original.jpg}
\includegraphics[width=.32\linewidth, keepaspectratio]{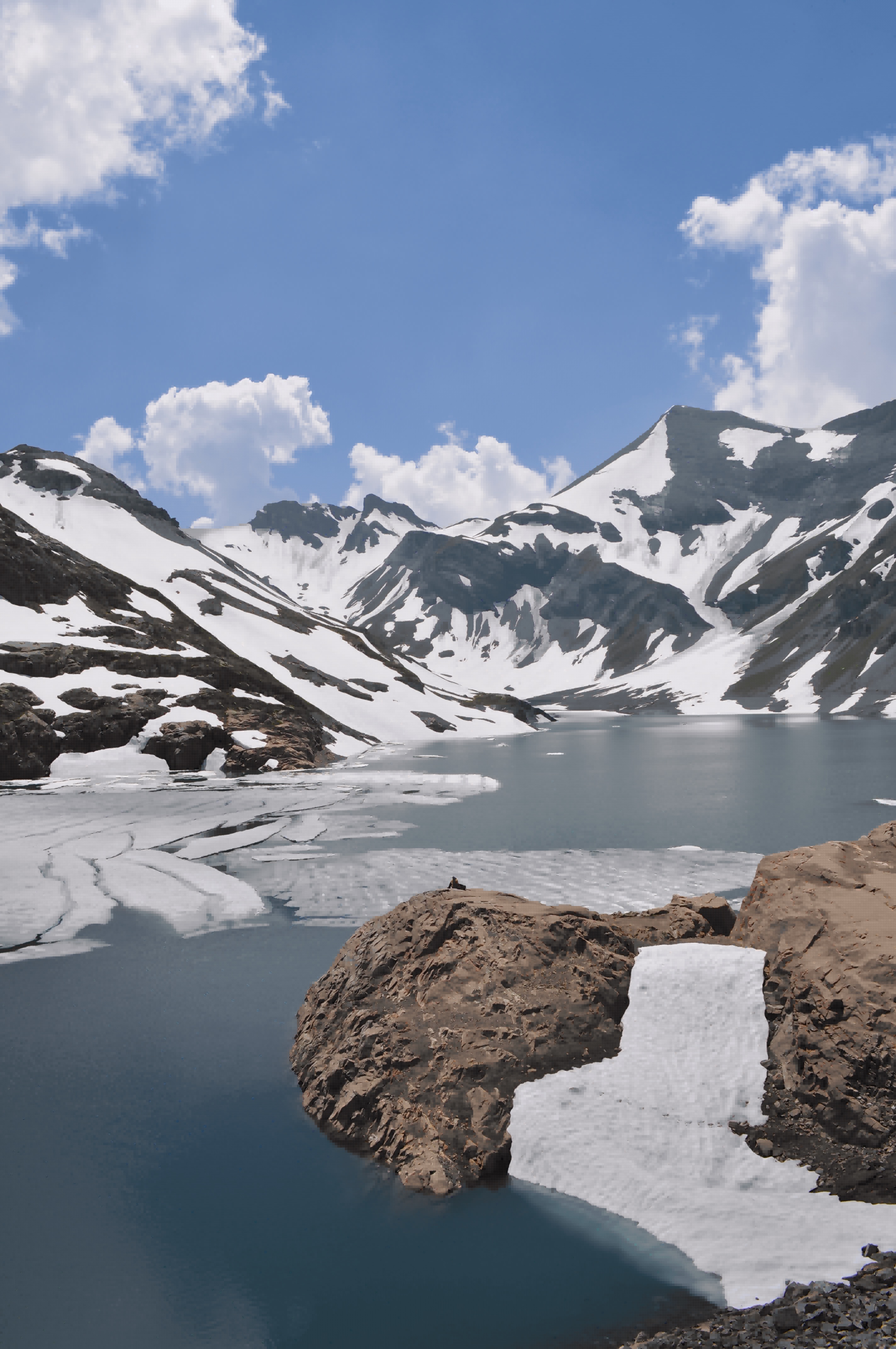}
\includegraphics[width=.32\linewidth, keepaspectratio]{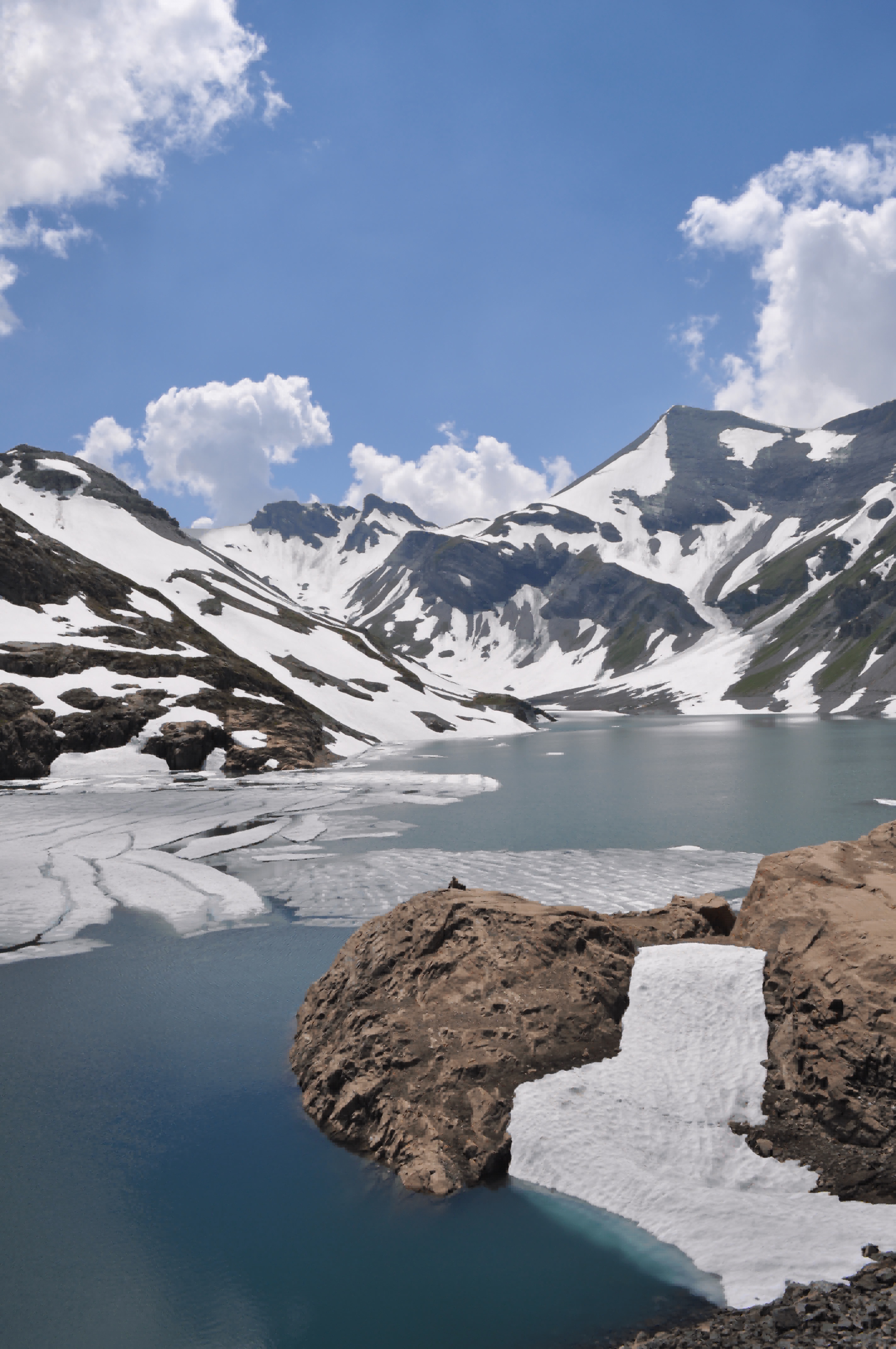}
\caption{\label{fig:reconstruction_image} From left to right: original image; reconstructed image from the measurements obtained with $\tilde{\vec{p}}$ (the reconstruction SNR is $27.76$ dB); reconstructed image from the measurements obtained with $\vec{\pi}$ (the reconstruction SNR is $27.10$ dB).}
\end{figure}
%

% ===========================================
% CONCLUSION
% ===========================================
\section{Conclusion and perspectives}
\label{sec:conclusion}
We proposed two efficient sampling procedures for $\nbClass$-bandlimited signals defined on the nodes of a graph $\Graph$. The performance of these sampling techniques is governed by the graph weighted coherence, which characterises the interaction between the sampling distribution and the localisation of the first $\nbClass$ Fourier modes over the nodes of $\Graph$. For regular graph with non-localised Fourier modes and a uniform sampling distribution, we proved that $O(\nbClass\log{\nbClass})$ samples are sufficient to embed the set of $\nbClass$-bandlimited signals. For arbitrary graphs, uniform sampling might perform very poorly. In such cases, we proved that it is always possible to adapt the sampling distribution to the structure of the graph and reach optimal sampling conditions. We designed an algorithm to estimate the optimal sampling distribution rapidly. Finally, we proposed an efficient decoder that provides accurate and stable reconstruction of $k$-bandlimited signals from their samples.

We believe that the sampling method developed in this work can be used to speed up computations in multiple applications using graph models. Let us take the example of the fast robust PCA method proposed in \cite{shahid15}. In this work, the authors consider the case where one has access to two graphs $\Graph_1$ and $\Graph_2$ that respectively model the similarities between the rows and the columns of a matrix $\ma{X}$. In this context, they propose an optimisation technique that provides a low-rank approximation of $\ma{X}$. We denote this low-rank approximation by $\ma{X}^*$. The intuition is that the left singular vectors and the right singular vectors of $\ma{X}^*$ live respectively in the span of the first eigenvectors of $\Lap_1$ and $\Lap_2$, the Laplacians associated to $\Graph_1$ and $\Graph_2$. Therefore, the singular vectors of $\ma{X}^*$ can be drastically subsampled using our sampling method. The low-rank matrix $\ma{X}^*$ can be reconstructed from a subset of its rows and columns. Instead of estimating $\ma{X}^*$ from the entire matrix $\ma{X}$, one could thus first reduce the dimension of the problem by selecting a small subset of the rows and columns of $\ma{X}$. 

In semi-supervised learning, a small subset of nodes are labeled and the goal is to infer the label of all nodes. Advances in sampling of graph signals give insight on which nodes should be preferentially observed to infer the labels on the complete graphs. Similarly, in spectral graph clustering, cluster assignments are well approximated by $k$-bandlimited signals and can therefore be heavily subsampled. This leaves the possibility to initially cluster a small subset of the nodes and infer the clustering solution on the complete graphs afterwards, as we have proposed in~\cite{Tremblay_ICML2016}. 

Sensor networks provide other applications of our sampling methods. Indeed, if signals measured by a network of sensors are smooth, one can deduce \textit{beforehand} from the structure of the network which sensors to sample in priority in an active sampling strategy, using the optimal or estimated sampling distribution.

	\section*{Acknowledgements}
	\ERC. \INRIAGilles. 
	% !TEX root = ../main.tex

% ====================================================
% APPENDICES
% ====================================================
\ifels
\appendix 
\else
\appendix  \titleformat{\section}[hang]{\color{blue1}\large\bfseries\centering}{Appendix \thesection}{0mm}{}[]
\fi

% ===========================================
%
% ===========================================
\section{ \ifels\else - \fi Proof of the theorems in Section \ref{sec:rip}}
\label{app:proof_rip}

We start with the proof of Theorem \ref{th:rip}. For this proof, we need the following result obtained by Tropp in \cite{tropp12}.

\begin{lemma}[Theorem $1.1$, \cite{tropp12}]
\label{th:matrix_chernoff}
Consider a finite sequence $\{ \ma{X}_i \}$ of independent, random, self-adjoint, positive semi-definite matrices of dimension $d \times d$. Assume that each random matrix satisfies
\begin{align*}
\lambda_{\rm max}(\ma{X}_i) \leq R\quad \text{almost surely}.
\end{align*}
Define
\begin{align*}
\mu_{\rm min} := \lambda_{\rm min} \left( \sum_i \Ebb \, \ma{X}_i \right)
\quad \text{and} \quad
\mu_{\rm max} := \lambda_{\rm max} \left( \sum_i \Ebb \, \ma{X}_i \right).
\end{align*}
Then
\begin{align*}
\Pbb \left\{ \lambda_{\rm min} \left( \sum_i \ma{X}_i \right)  \leq (1 - \delta) \mu_{\rm min} \right\} 
& \; \leq \; d \cdot \left[ \frac{\ee^{-\delta}}{(1-\delta)^{1-\delta}}\right]^{\mu_{\rm min}/R} \; \text{ for } \delta \in [0, 1], \; \text { and}\\
\Pbb \left\{ \lambda_{\rm max} \left( \sum_i \ma{X}_i \right)  \geq (1 + \delta) \mu_{\rm max} \right\} 
& \; \leq \; d \cdot \left[ \frac{\ee^{\delta}}{(1+\delta)^{1+\delta}}\right]^{\mu_{\rm max}/R} \; \text{ for } \delta \geq 0.
\end{align*}
\end{lemma}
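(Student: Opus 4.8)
The plan is to use the matrix Laplace transform method, which reduces each eigenvalue tail bound to a bound on the trace of a matrix moment generating function. I treat the upper tail (on $\lambda_{\rm max}$) in detail; the lower tail follows by the same argument run with a negative parameter. Writing $\ma{Y} := \sum_i \ma{X}_i$, the starting point is that for any $\theta > 0$, since $\ee^{\theta \lambda_{\rm max}(\ma{Y})} = \lambda_{\rm max}(\ee^{\theta \ma{Y}}) \leq \operatorname{tr} \ee^{\theta\ma{Y}}$ (all eigenvalues of the matrix exponential being positive), Markov's inequality gives
\begin{align*}
\Pbb\left\{ \lambda_{\rm max}(\ma{Y}) \geq t \right\} \; \leq \; \ee^{-\theta t}\, \Ebb \operatorname{tr} \ee^{\theta \ma{Y}}.
\end{align*}
It therefore remains to control $\Ebb \operatorname{tr}\exp(\theta \sum_i \ma{X}_i)$ and then to optimise over $\theta$.

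The crux is a subadditivity property of the matrix cumulant generating function. By Lieb's concavity theorem (the map $\ma{A}\mapsto \operatorname{tr}\exp(\ma{H}+\log\ma{A})$ is concave on positive-definite matrices) together with Jensen's inequality, peeling off the independent summands one at a time yields
\begin{align*}
\Ebb \operatorname{tr}\exp\left( \theta \sum_i \ma{X}_i \right) \; \leq \; \operatorname{tr}\exp\left( \sum_i \log \Ebb\, \ee^{\theta \ma{X}_i} \right).
\end{align*}
To bound each term I invoke the hypotheses that $\ma{X}_i$ is positive semi-definite and $\lambda_{\rm max}(\ma{X}_i) \leq R$, so that $0 \preceq \ma{X}_i \preceq R\,\ma{I}$. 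Because the scalar function $x\mapsto \ee^{\theta x}$ is convex, it lies below its chord on $[0,R]$, i.e. $\ee^{\theta x} \leq 1 + R^{-1}(\ee^{\theta R}-1)\,x$; transferring this to matrix arguments and taking expectations gives $\Ebb\, \ee^{\theta \ma{X}_i} \preceq \ma{I} + R^{-1}(\ee^{\theta R}-1)\,\Ebb\ma{X}_i$. Applying $1+a \leq \ee^{a}$ in the operator form $\log(\ma{I}+\ma{C}) \preceq \ma{C}$ then yields $\log\Ebb\,\ee^{\theta\ma{X}_i} \preceq R^{-1}(\ee^{\theta R}-1)\,\Ebb\ma{X}_i$. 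Summing over $i$ and using monotonicity of $\operatorname{tr}\exp$ under the semidefinite order together with $\operatorname{tr}\exp(\ma{M}) \leq d\,\ee^{\lambda_{\rm max}(\ma{M})}$ produces
\begin{align*}
\Ebb \operatorname{tr}\exp\left( \theta \sum_i \ma{X}_i \right) \; \leq \; d\,\exp\left( \frac{\ee^{\theta R}-1}{R}\, \mu_{\rm max} \right).
\end{align*}

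Combining the two displays at the threshold $t=(1+\delta)\mu_{\rm max}$ and choosing $\theta = R^{-1}\log(1+\delta)$ (so that $\ee^{\theta R}=1+\delta$) collapses the right-hand side to $d\cdot[\ee^{\delta}/(1+\delta)^{1+\delta}]^{\mu_{\rm max}/R}$, which is the stated upper-tail bound. For the lower tail I apply the same scheme to $\lambda_{\rm min}(\ma{Y})=-\lambda_{\rm max}(-\ma{Y})$ with a negative parameter $\theta<0$: the chord inequality is unchanged (convexity is sign-blind), and because the scalar factor $R^{-1}(\ee^{\theta R}-1)$ is now negative it converts $\lambda_{\rm max}(\sum_i\Ebb\ma{X}_i)$ into $\lambda_{\rm min}(\sum_i\Ebb\ma{X}_i)=\mu_{\rm min}$, which is precisely why $\mu_{\rm min}$ governs that estimate; optimising with $\theta=R^{-1}\log(1-\delta)$ then yields $d\cdot[\ee^{-\delta}/(1-\delta)^{1-\delta}]^{\mu_{\rm min}/R}$. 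The main obstacle is the subadditivity step: the scalar Chernoff manipulation fails for noncommuting matrices, and the only known route is through Lieb's concavity theorem, so the bulk of the work lies in justifying that inequality and the accompanying operator facts ($\log(\ma{I}+\ma{C})\preceq\ma{C}$, monotonicity of $\operatorname{tr}\exp$ under $\preceq$, and the transfer rule) that allow the semidefinite bounds to pass through the trace exponential.
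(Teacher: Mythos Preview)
The paper does not prove this lemma at all: it is quoted verbatim as Theorem~1.1 of Tropp~\cite{tropp12} and used as a black box in the proof of Theorem~\ref{th:rip}. Your sketch is nonetheless correct and is precisely Tropp's original argument---the matrix Laplace transform combined with the Lieb--Tropp subadditivity inequality $\Ebb\operatorname{tr}\exp(\theta\sum_i\ma{X}_i)\leq\operatorname{tr}\exp(\sum_i\log\Ebb\,\ee^{\theta\ma{X}_i})$, followed by the chord bound on $[0,R]$ and the choice $\theta=R^{-1}\log(1\pm\delta)$---so there is nothing to compare against within this paper.
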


We also need the following facts. For all $\delta \in [0, 1]$, we have
\begin{align*}
\left[ \frac{\ee^{-\delta}}{(1-\delta)^{1-\delta}}\right]^{\mu_{\rm min}/R} 
\; \leq \; \exp \left( - \frac{\delta^2 \mu_{\rm min}}{3 \, R}\right)
\; \text{and} \;
\left[ \frac{\ee^{\delta}}{(1+\delta)^{1+\delta}}\right]^{\mu_{\rm max}/R} 
\; \leq \; \exp \left( - \frac{\delta^2 \mu_{\rm max}}{3 \, R}\right).\\
\end{align*}
\begin{proof}[Proof of Theorem \ref{th:rip}]
As the $i^\th$ row-vector of $\Meas \ma{P}^{-1/2} \Fou_\nbClass$ is $\vec{\delta}_{\omega_i}^\adjoint \Fou_\nbClass/\sqrt{\prob_{\omega_i}}$, we have
\begin{align*}
\inv{\nbVertRed} \; \Fou_\nbClass^\adjoint \ma{P}^{-1/2} \Meas^\adjoint \Meas \ma{P}^{-1/2} \Fou_\nbClass  
= \sum_{i = 1}^\nbVertRed \frac{\left(\Fou_\nbClass^\adjoint \vec{\delta}_{\omega_i} \right) \left(\vec{\delta}_{\omega_i}^\adjoint \Fou_\nbClass \right)}{\nbVertRed \prob_{\omega_i}}.
\end{align*}
Let us define
\begin{align*}
\ma{X}_i := \frac{1}{\nbVertRed \prob_{\omega_i}} \Fou_\nbClass^\adjoint \vec{\delta}_{\omega_i} \vec{\delta}_{\omega_i}^\adjoint \Fou_\nbClass,
\end{align*}
and
\begin{align*}
\ma{X} := \sum_{i=1}^\nbVertRed \ma{X}_i = \nbVertRed^{-1} \; \Fou_\nbClass^\adjoint \ma{P}^{-1/2} \Meas^\adjoint \Meas \ma{P}^{-1/2} \Fou_\nbClass.
\end{align*}
The matrix $\ma{X}$ is thus a sum of $\nbVertRed$ of independent, random, self-adjoint, positive semi-definite matrices. We are in the setting of Lemma \ref{th:matrix_chernoff}. We continue by computing $\Ebb \,\ma{X}_i$ and $\lambda_{\rm max}(\ma{X}_i)$.

The expected value of each $\ma{X}_i$ is
\begin{align*}
\Ebb \, \ma{X}_i 
= 
\Ebb \left[ \frac{\left(\Fou_\nbClass^\adjoint \vec{\delta}_{\omega_i} \right) \left(\vec{\delta}_{\omega_i}^\adjoint \Fou_\nbClass \right)}{\nbVertRed \prob_{\omega_i}} \right] 
= 
\inv{\nbVertRed} \; \Fou_\nbClass^\adjoint \left(\sum_{i=1}^\nbVert \prob_i \frac{\vec{\delta}_{i} \vec{\delta}_{i}^\adjoint}{\prob_i}  \right) \Fou_\nbClass 
=  
\inv{\nbVertRed} \; \Fou_\nbClass^\adjoint \Fou_\nbClass = \inv{\nbVertRed} \ma{I}
\end{align*}
where $\ma{I} \in \Rbb^{\nbClass \times \nbClass}$ is the identity matrix. Therefore, 
\begin{align*}
\mu_{\rm min} := \lambda_{\rm min} \left( \sum_i \Ebb \, \ma{X}_i \right) = 1
\quad \text{and} \quad
\mu_{\rm max} := \lambda_{\rm max} \left( \sum_i \Ebb \, \ma{X}_i \right) = 1.
\end{align*}
Furthermore, for all $i = 1, \ldots, \nbVert$, we have
\begin{align*}
\lambda_{\rm max} (\ma{X}_i)
=
\norm{\ma{X}_i}_2 
\leq 
\max_{1 \leq j \leq \nbVert} \norm{ \frac{\left(\Fou_\nbClass^\adjoint \vec{\delta}_{j} \right) \left(\vec{\delta}_{j}^\adjoint \Fou_\nbClass \right)}{\nbVertRed \prob_{j}}}_2
=
\frac{1}{\nbVertRed} \; \max_{1 \leq j \leq \nbVert} \left\{ \frac{\norm{\Fou_\nbClass^\adjoint \vec{\delta}_{j}}_2^2}{\prob_j} \right\}
= \frac{(\cumCoh_{\prob}^{\nbClass})^2}{\nbVertRed}.
\end{align*}
Lemma \ref{th:matrix_chernoff} yields, for any $\delta \in (0, 1)$,
\begin{align*}
\Pbb \left\{ \lambda_{\rm min} \left( \ma{X} \right)  \leq (1 - \delta) \right\} 
& \; \leq \; 
\nbClass \cdot \left[ \frac{\ee^{-\delta}}{(1-\delta)^{1-\delta}}\right]^{m/(\cumCoh_{\prob}^{\nbClass})^2} 
\; \leq \; 
\nbClass \; \exp \left( - \frac{\delta^2 m}{3 \, (\cumCoh_{\prob}^{\nbClass})^2} \right)  \; \text { and}\\
\Pbb \left\{ \lambda_{\rm max} \left( \ma{X} \right)  \geq (1 + \delta) \right\} 
& \; \leq \; 
\nbClass \cdot \left[ \frac{\ee^{\delta}}{(1+\delta)^{1+\delta}}\right]^{m/(\cumCoh_{\prob}^{\nbClass})^2} 
\; \leq \; 
\nbClass \; \exp \left( - \frac{\delta^2 m}{3 \, (\cumCoh_{\prob}^{\nbClass})^2} \right).
\end{align*}
Therefore, for any $\delta \in (0, 1)$, we have, with probability at least $1 - \epsilon$,
\begin{align}
\label{eq:hidden_rip}
1 - \delta \leq \lambda_{\rm min} \left( \ma{X} \right) 
\quad \text{and} \quad
\lambda_{\rm max} \left( \ma{X} \right) \leq 1+\delta
\end{align}
provided that
\begin{align*}
\nbVertRed 
\geq 
\frac{3}{\delta^2} \; (\cumCoh_{\prob}^{\nbClass})^2 \; \log\left( \frac{2k}{\epsilon} \right).
\end{align*}
Noticing that \refeq{eq:hidden_rip} implies that
\begin{align*}
(1+\delta) \norm{\vec{\alpha}}_2^2
\leq
\norm{ \Meas \ma{P}^{-1/2} \Fou_\nbClass \vec{\alpha} }_2^2 
\leq 
(1+\delta) \norm{\vec{\alpha}}_2^2,
\end{align*}
for all $\vec{\alpha} \in \Rbb^k$, which is equivalent to
\begin{align*}
(1+\delta) \norm{\sig}_2^2
\leq
\norm{ \Meas \ma{P}^{-1/2} \sig }_2^2 
\leq 
(1+\delta) \norm{\sig}_2^2,
\end{align*}
for all $\sig \in \spann(\Fou_k)$, terminates the proof.
\end{proof}

The proof of Theorem \ref{th:rip_bis} is based on the following results, also obtained by Tropp.
\begin{lemma}[Theorem $2.2$, \cite{tropp11}]
Let $\mathcal{X}$ be a finite set of positive-semidefinite matrices of dimension $d \times d$, and suppose that
\begin{align*}
\max_{\ma{X} \in \mathcal{X}}\lambda_{\rm max}(\ma{X}) \leq R.
\end{align*}
Sample $\{ \ma{X}_1, \ldots, \ma{X}_l  \}$ uniformly at random from $\mathcal{X}$ without replacement. Compute
\begin{align*}
\mu_{\rm min} := l \cdot \lambda_{\rm min} \left( \Ebb \, \ma{X}_1 \right)
\quad \text{and} \quad
\mu_{\rm max} := l \cdot \lambda_{\rm max} \left( \Ebb \, \ma{X}_1 \right).
\end{align*}
Then
\begin{align*}
\Pbb \left\{ \lambda_{\rm min} \left( \sum_i \ma{X}_i \right)  \leq (1 - \delta) \mu_{\rm min} \right\} 
& \; \leq \; d \cdot \left[ \frac{\ee^{-\delta}}{(1-\delta)^{1-\delta}}\right]^{\mu_{\rm min}/R} \; \text{ for } \delta \in [0, 1], \; \text { and}\\
\Pbb \left\{ \lambda_{\rm max} \left( \sum_i \ma{X}_i \right)  \geq (1 + \delta) \mu_{\rm max} \right\} 
& \; \leq \; d \cdot \left[ \frac{\ee^{\delta}}{(1+\delta)^{1+\delta}}\right]^{\mu_{\rm max}/R} \; \text{ for } \delta \geq 0.
\end{align*}
\end{lemma}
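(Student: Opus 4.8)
The plan is to derive this without-replacement tail bound by reducing it to the with-replacement matrix Chernoff inequality already recorded as Lemma~\ref{th:matrix_chernoff}. The bridge consists of two ingredients: the matrix Laplace transform (Ahlswede--Winter) method, and a comparison theorem showing that the matrix moment generating function of a sum sampled \emph{without} replacement is dominated by that of a sum of i.i.d.\ draws.

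First I would set $\ma{S} := \sum_{i=1}^l \ma{X}_i$ and invoke the matrix Laplace transform bound: for every $\theta > 0$,
\begin{align*}
\Pbb \left\{ \lambda_{\rm max}(\ma{S}) \geq (1+\delta)\mu_{\rm max} \right\} \leq \ee^{-\theta(1+\delta)\mu_{\rm max}} \; \Ebb \operatorname{tr} \exp(\theta \ma{S}),
\end{align*}
together with the analogous inequality for $\lambda_{\rm min}$ obtained by applying the same estimate to $-\ma{S}$ (using $\lambda_{\rm min}(\ma{S}) = -\lambda_{\rm max}(-\ma{S})$). This reduces both tail estimates to controlling the trace moment generating functions $\Ebb \operatorname{tr}\exp(\pm\theta\ma{S})$.

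The key step is the passage from sampling without replacement to sampling with replacement. Let $\ma{Z}_1, \ldots, \ma{Z}_l$ be drawn i.i.d.\ uniformly from $\mathcal{X}$, and set $\ma{S}' := \sum_{i=1}^l \ma{Z}_i$. Using the convexity of the map $\ma{A} \mapsto \operatorname{tr}\exp(\ma{A})$ on Hermitian matrices, together with a permutation/exchangeability argument (the matrix extension of Hoeffding's classical comparison, due to Gross and Nesme), one obtains
\begin{align*}
\Ebb \operatorname{tr}\exp(\theta \ma{S}) \leq \Ebb \operatorname{tr}\exp(\theta \ma{S}'),
\end{align*}
and likewise with $-\theta$ in place of $\theta$. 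Since the $\ma{Z}_i$ are i.i.d., we have $\Ebb \ma{S}' = l\, \Ebb\ma{Z}_1 = l\, \Ebb\ma{X}_1$, so the spectral parameters $\mu_{\rm min}, \mu_{\rm max}$ of the statement coincide with those governing the i.i.d.\ sum $\ma{S}'$, while $\lambda_{\rm max}(\ma{Z}_i) \leq R$ holds because each summand lies in $\mathcal{X}$. Substituting this domination into the Laplace transform bounds and optimizing over $\theta$ exactly as in the proof of Lemma~\ref{th:matrix_chernoff} then produces the two stated inequalities, with the same functions $\ee^{\mp\delta}/(1\mp\delta)^{1\mp\delta}$ raised to the power $\mu_{\rm min}/R$ or $\mu_{\rm max}/R$.

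The main obstacle is establishing the comparison inequality for the trace moment generating function. Hoeffding's scalar result --- that $\Ebb F(\sum_i Y_i) \leq \Ebb F(\sum_i Z_i)$ for any convex $F$ when the $Y_i$ are sampled without replacement --- does not transfer verbatim, since here the relevant object $\ma{A}\mapsto\operatorname{tr}\exp(\theta\ma{A})$ acts on a matrix-valued sum. The delicate point is to verify that this trace-exponential is jointly convex and to represent the without-replacement sum as a suitable average over uniformly random permutations, so that Jensen's inequality can be applied coordinate by coordinate to dominate it by the i.i.d.\ sum. Once this matrix Hoeffding reduction is in hand, the remainder of the argument is identical to the with-replacement case.
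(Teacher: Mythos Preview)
The paper does not prove this lemma at all: it is quoted verbatim as Theorem~2.2 of \cite{tropp11} and used as a black box. The only remark the paper makes is the observation immediately following the statement, namely that the resulting tail bounds coincide with those of Lemma~\ref{th:matrix_chernoff} for sampling with replacement, so that the proof of Theorem~\ref{th:rip_bis} can proceed exactly as the proof of Theorem~\ref{th:rip}.

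Your sketch is nonetheless correct and is in fact the argument Tropp gives in the cited reference: one bounds the tail via the matrix Laplace transform, then invokes the Gross--Nesme matrix version of Hoeffding's reduction to dominate $\Ebb\operatorname{tr}\exp(\theta\ma{S})$ for the without-replacement sum by the same quantity for an i.i.d.\ sum, after which the optimisation over $\theta$ is identical to the with-replacement Chernoff computation. The ``obstacle'' you flag---joint convexity of $\ma{A}\mapsto\operatorname{tr}\exp(\ma{A})$ and the exchangeability/averaging representation---is exactly what Gross and Nesme establish, so there is no genuine gap in your outline. In short, you have reconstructed the proof the paper chose to outsource; there is nothing to compare against in the paper itself.
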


Using Lemma \ref{th:matrix_chernoff}, one can notice that the above probability bounds would be identical if the matrices $\{ \ma{X}_1, \ldots, \ma{X}_l  \}$ were sampled uniformly at random from $\mathcal{X}$ \emph{with} replacement. It is thus not necessary to detail the complete proof which is entirely similar to the one of Theorem~\ref{th:rip}, at the exception of the sampling procedure.

% ===========================================
%
% ===========================================
\section{ \ifels\else - \fi Proof of the theorems in Section \ref{sec:reconstruction}}
\label{app:proof_decoder}

\begin{proof}[Proof of Theorem \ref{th:ideal_decoder}]
We recall that $\sig^*$ is a solution to \refeq{eq:optimal_decoder}. By optimality of $\sig^*$, we have
\begin{align*}
\norm{ \ma{P}^{-1/2}_\Omega \Meas \sig^* -  \ma{P}^{-1/2}_\Omega \meas }_2 \leq \norm{ \ma{P}^{-1/2}_\Omega \Meas \vec{z} - \ma{P}^{-1/2}_\Omega \meas }_2
\end{align*}
for any $\vec{z} \in \spann(\Fou_\nbClass)$. In particular for $\vec{z} = \sig$, we obtain
\begin{align*}
\norm{ \ma{P}^{-1/2}_\Omega \Meas \sig^* -  \ma{P}^{-1/2}_\Omega \meas }_2 \leq \norm{ \ma{P}^{-1/2}_\Omega \Meas \sig - \ma{P}^{-1/2}_\Omega \meas}_2,
\end{align*}
which yields
\begin{align}
\label{eq:upper_bound_optim_decoder}
\norm{ \ma{P}^{-1/2}_\Omega \Meas \sig^* - \ma{P}^{-1/2}_\Omega \Meas \sig - \ma{P}^{-1/2}_\Omega \err}_2 \leq \norm{\ma{P}^{-1/2}_\Omega \err}_2.
\end{align}
Then, the triangle inequality and \refeq{eq:RIP} yields
\begin{align}
\norm{ \ma{P}^{-1/2}_\Omega \Meas (\sig^* -  \sig) - \ma{P}^{-1/2}_\Omega \err}_2 
& \geq 
\norm{ \ma{P}^{-1/2}_\Omega \Meas (\sig^* -  \sig)} - \norm{\ma{P}^{-1/2}_\Omega \err}_2 \nonumber \\
& =
\norm{\Meas \ma{P}^{-1/2} (\sig^* -  \sig)}_2 - \norm{\ma{P}^{-1/2}_\Omega \err}_2 \nonumber \\
& \geq
\label{eq:lower_bound_optim_decoder}
\sqrt{\nbVertRed \, (1 - \delta)} \norm{\sig^* -  \sig}_2 - \norm{\ma{P}^{-1/2}_\Omega \err}_2.
\end{align}
In the second step, we used the fact that $\Meas \ma{P}^{-1/2} = \ma{P}_{\Omega}^{-1/2} \Meas$. Combining \refeq{eq:upper_bound_optim_decoder} and \refeq{eq:lower_bound_optim_decoder} directly yields \refeq{eq:error_reconstruction_optimal}, the first bound in Theorem \ref{th:ideal_decoder}.

To prove the second bound, let us choose $\err_0 = \Meas \vec{z}_0$ with $\vec{z}_0 \in \spann(\Fou_k)$. Therefore, $\meas = \Meas (\sig + \vec{z}_0)$ and $\sig^* = \sig + \vec{z}_0$ is an obvious solution to \refeq{eq:optimal_decoder} in this case. To finish the proof, we use \refeq{eq:RIP} which yields
\begin{align*}
\norm{\sig^* -  \sig}_2
= 
\norm{\vec{z}_0}_2
& \geq
\inv{\sqrt{m(1+\delta)}} \, \norm{\Meas \ma{P}^{-1/2} \vec{z}_0}_2
=
\inv{\sqrt{m(1+\delta)}} \, \norm{\ma{P}^{-1/2}_\Omega \Meas \vec{z}_0}_2 \\
& = 
\inv{\sqrt{m(1+\delta)}} \, \norm{\ma{P}^{-1/2}_\Omega \err_0}_2.
\end{align*}
\end{proof}
\begin{proof}[Proof of Theorem \ref{th:practical_decoder}]
As $\sig^*$ is a solution to \refeq{eq:practical_decoder}, we have
\begin{align}
\label{eq:opt_cond_reg_problem}
\norm{ \ma{P}^{-1/2}_\Omega \left(\Meas \sig^* - \meas \right) }_2^2 + \reg \; (\sig^*)^\adjoint g(\Lap) \sig^*
\leq
\norm{ \ma{P}^{-1/2}_\Omega \left(\Meas \vec{z} - \meas \right) }_2^2 + \reg \; \vec{z}^\adjoint g(\Lap) \vec{z},
\end{align}
for all $\vec{z} \in \Rbb^\nbVert$. We also have $\sig^* = \vec{\alpha}^*  + \vec{\beta}^*$ with $\vec{\alpha}^* \in \spann(\Fou_\nbClass)$ and $\vec{\beta}^* \in \spann(\bar{\Fou}_\nbClass)$. Let us define the matrix 
\begin{align*}
\bar{\Fou}_\nbClass := \left( \fou_{\nbClass+1}, \ldots, \fou_\nbVert \right) \in \Rbb^{\nbVert \times (\nbVert - \nbClass)}.
\end{align*}
Choosing $\vec{z} = \sig$ in \refeq{eq:opt_cond_reg_problem} and using the facts that $\bar{\Fou}_\nbClass^\adjoint \vec{\alpha}^* = \vec{0}$, $\Fou_\nbClass^\adjoint \vec{\beta}^* = \vec{0}$, $\bar{\Fou}_\nbClass^\adjoint \sig = \vec{0}$, and that $g(\Lap) = \Fou \, g(\Lap) \, \Fou^\adjoint$, we obtain
\begin{align*}
& \norm{ \ma{P}^{-1/2}_\Omega \left(\Meas \sig^* - \meas \right) }_2^2 
+ \reg \; (\Fou_\nbClass^\adjoint \vec{\alpha}^*)^\adjoint \; \ma{G}_\nbClass \; (\Fou_\nbClass^\adjoint \vec{\alpha}^*) 
+ \reg \; (\bar{\Fou}_\nbClass^\adjoint \vec{\beta}^*)^\adjoint \; \bar{\ma{G}}_\nbClass \; (\bar{\Fou}_\nbClass^\adjoint \vec{\beta}^*) \\
& \hspace{7cm} \leq \;
\norm{ \ma{P}^{-1/2}_\Omega \err }_2^2 
+ \reg \; (\Fou_\nbClass^\adjoint \sig)^\adjoint \; \ma{G}_\nbClass \; (\Fou_\nbClass^\adjoint \sig),
\end{align*}
where
\begin{align*}
\ma{G}_\nbClass := \diag\left( g(\eig_{1}), \ldots, g(\eig_{\nbClass}) \right) \in \Rbb^{\nbClass \times \nbClass}
\text{ and }
\bar{\ma{G}}_\nbClass := \diag\left( g(\eig_{\nbClass+1}), \ldots, g(\eig_\nbVert) \right) \in \Rbb^{(\nbVert - \nbClass) \times (\nbVert - \nbClass)}.
\end{align*}
We deduce that
\begin{align*}
\norm{ \ma{P}^{-1/2}_\Omega \left(\Meas \sig^* - \meas \right) }_2^2 
+ \reg \; g(\eig_{\nbClass+1}) \norm{\vec{\beta}^*}_2^2
\; \leq \;
\norm{ \ma{P}^{-1/2}_\Omega \err }_2^2  
+ \reg \; g(\eig_\nbClass) \norm{ \sig  }_2^2,
\end{align*}
where we used the fact that $\norm{ \bar{\Fou}_\nbClass^\adjoint \vec{\beta}^* }_2 = \norm{ \vec{\beta}^* }_2$ and $\norm{ \Fou_\nbClass^\adjoint \sig }_2 = \norm{ \sig }_2$. As the left hand side of the last inequality is a sum of two positive quantities, we also have
\begin{align}
\label{eq:bound_fidelity_term}
\norm{ \ma{P}^{-1/2}_\Omega \left(\Meas \sig^* - \meas \right) }_2 
& \; \leq \;
\norm{ \ma{P}^{-1/2}_\Omega \err }_2 + \sqrt{\reg g(\eig_\nbClass)} \norm{ \sig  }_2
\text{ and }\\
\label{eq:bound_non_model_term}
\sqrt{\reg g(\eig_{\nbClass+1})} \norm{\vec{\beta}^*}_2
& \; \leq \;
\norm{ \ma{P}^{-1/2}_\Omega \err }_2 + \sqrt{\reg g(\eig_\nbClass)} \norm{ \sig  }_2.
\end{align}
Inequality \refeq{eq:bound_non_model_term} proves \refeq{eq:bound_beta}, the second inquality in Theorem~\ref{th:practical_decoder}. It remains to prove \refeq{eq:bound_alpha}. To prove this inequality, we continue by using \refeq{eq:RIP}, which yields
\begin{align}
\norm{ \ma{P}^{-1/2}_\Omega \left(\Meas \sig^* - \meas \right)  }_2 
& =
\norm{ \ma{P}^{-1/2}_\Omega \Meas (\vec{\alpha}^* - \sig) + \ma{P}^{-1/2}_\Omega \err + \ma{P}^{-1/2}_\Omega \Meas \vec{\beta}^*}_2 \nonumber \\
& \geq
\norm{\ma{P}^{-1/2}_\Omega \Meas (\vec{\alpha}^* - \sig)}_2 - \norm{\ma{P}^{-1/2}_\Omega \err} - \norm{\ma{P}^{-1/2}_\Omega \Meas \vec{\beta}^*}_2 \nonumber \\
& =
\norm{\Meas \ma{P}^{-1/2} (\vec{\alpha}^* - \sig)}_2 - \norm{\ma{P}^{-1/2}_\Omega \err} - \norm{\Meas \ma{P}^{-1/2} \vec{\beta}^*}_2 \nonumber \\
& \geq
\label{eq:lower_bound_reg_prob}
\sqrt{\nbVertRed (1-\delta)} \; \norm{ \vec{\alpha}^* - \sig }_2 - \norm{\ma{P}^{-1/2}_\Omega \err}_2 - M_{\rm max} \norm{\vec{\beta}^*}_2.
\end{align}
Finally, combining \refeq{eq:bound_fidelity_term}, \refeq{eq:bound_non_model_term} and \refeq{eq:lower_bound_reg_prob} gives
\begin{align*}
\norm{ \vec{\alpha}^* - \sig }_2
& \leq
\inv{\sqrt{\nbVertRed (1-\delta)}} \left(\norm{ \ma{P}^{-1/2}_\Omega \left(\Meas \sig^* - \meas \right) }_2 + \norm{\ma{P}^{-1/2}_\Omega \err}_2 + M_{\rm max} \norm{\vec{\beta}^*}_2 \right)\\
& \leq
\inv{\sqrt{\nbVertRed (1-\delta)}} 
\left( \norm{\ma{P}^{-1/2}_\Omega \err}_2 + \sqrt{\reg g(\eig_\nbClass)} \norm{ \sig  }_2 \right. \\
& \hspace{2.5cm} + \norm{\ma{P}^{-1/2}_\Omega \err}_2 \\
& \hspace{2.5cm} \left. + \frac{M_{\rm max} \norm{\ma{P}^{-1/2}_\Omega \err}_2}{\sqrt{\reg g(\eig_{\nbClass+1})}} + M_{\rm max} \sqrt{\frac{g(\eig_\nbClass)}{g(\eig_{\nbClass+1})}} \norm{ \sig  }_2 \right) \\
& \leq 
\inv{\sqrt{\nbVertRed (1-\delta)}} 
\left( 2 +  \frac{M_{\rm max}}{\sqrt{\reg g(\eig_{\nbClass+1})}}\right)\norm{\ma{P}^{-1/2}_\Omega \err}_2 \\
& \hspace{1cm} + \inv{\sqrt{\nbVertRed (1-\delta)}} \left( M_{\rm max} \sqrt{\frac{g(\eig_\nbClass)}{g(\eig_{\nbClass+1})}} + \sqrt{\reg g(\eig_\nbClass)}\right) \norm{ \sig  }_2.
\end{align*}
This terminates the proof.
\end{proof}
%

% ===========================================
%
% ===========================================
\section{ \ifels\else - \fi Proof of the theorem in Section \ref{sec:estimation_distribution}}
\label{app:proof_jl}

We use the classical technique to prove the Johnson-Lindenstrauss lemma (see, \eg, \cite{baraniuk08}).

\begin{proof}
Each filtered signal $\vec{r}_{\hat{c}_\lambda}^l$, $l \in \{1, \ldots, L\}$, satisfies
\begin{align*}
\vec{r}_{\hat{c}_\lambda}^l = \Fou \, \ma{C}_\lambda \, \Fou^\adjoint \vec{r}^l,
\end{align*}
where $\ma{C}_\lambda := \diag(\hat{c}_\lambda(\eig_1), \ldots, \hat{c}_\lambda(\eig_\nbVert))$. Let $i$ be fixed for the moment. We have
\begin{align*}
\sum_{l=1}^L \, (\vec{r}_{\hat{c}_\lambda}^l)_i^2
\; = \;
\sum_{l=1}^L \, (\vec{\delta}_i^\adjoint \Fou \, \ma{C}_\lambda \, \Fou^\adjoint \vec{r}^l)^2,
\end{align*}
The expected value of this sum is $\norm{\ma{C}_\lambda \Fou^\adjoint \vec{\delta}_i}_2^2$. Indeed,
\begin{align*}
\Ebb \left[ \sum_{l=1}^L \, (\vec{r}_{\hat{c}_\lambda}^l)_i^2 \right]
& \; = \;
\sum_{l=1}^L \, \vec{\delta}_i^\adjoint \Fou \, \ma{C}_\lambda \, \Fou^\adjoint \; \Ebb \left[ \vec{r}^l (\vec{r}^l)^\adjoint \right] \; \Fou \ma{C}_\lambda \Fou^\adjoint \vec{\delta}_i
\; = \;
L^{-1} \; \sum_{l=1}^L \, \vec{\delta}_i^\adjoint \Fou \, \ma{C}_\lambda \, \Fou^\adjoint \Fou \ma{C}_\lambda \Fou^\adjoint \vec{\delta}_i \\
& \; = \;
\vec{\delta}_i^\adjoint \Fou \, \ma{C}_\lambda^2 \Fou^\adjoint \vec{\delta}_i
\; = \;
\norm{\ma{C}_\lambda \Fou^\adjoint \vec{\delta}_i}_2^2.
\end{align*}
Let us define 
\begin{align*}
X_i := \sum_{l=1}^L \, \left[ (\vec{r}_{\hat{c}_\lambda}^l)_i^2 - L^{-1} \norm{\ma{C}_\lambda \Fou^\adjoint \vec{\delta}_i}_2^2 \right].
\end{align*}
This is a sum of $L$ independent centered random variables. Furthermore, as each $\vec{r}^l$ is a zero-mean Gaussian random vector with covariance matrix $L^{-1} \, \ma{I}$, the variables $(\vec{r}_{\hat{c}_\lambda}^l)_i$ are subgaussian with subgaussian bounded by $C \, L^{-1/2} \, \norm{\ma{C}_\lambda \Fou^\adjoint \vec{\delta}_i}_2$, where $C\geq1$ is an absolute constant. We let the reader refer to, \eg, \cite{vershynin12} for more information on the definition and properties of subgaussian random variables. Using Lemma $5.14$ and Remark $5.18$ in \cite{vershynin12}, one can prove that each summand of $X$ is a centered subexponential random variable with subexponentinal norm bounded by $4C^2 \, L^{-1} \, \norm{\ma{C}_\lambda \Fou^\adjoint \vec{\delta}_i}_2^2$. Corollary $5.17$ in \cite{vershynin12} shows that there exists an absolute contant $c>0$ such that
\begin{align*}
\Pbb \left( \abs{X_i} \geq t \, L  \right) \leq 2 \exp \left( - \frac{c \, L \, t^2}{16C^4 \, L^{-2} \, \norm{\ma{C}_\lambda \Fou^\adjoint \vec{\delta}_i}_2^4} \right)
\end{align*}
for all $t \in (0, 4C^2 \, L^{-1} \, \norm{\ma{C}_\lambda \Fou^\adjoint \vec{\delta}_i}_2^2)$, or, equivalently, that
\begin{align*}
\Pbb \left( \abs{X_i} \geq \delta \, \norm{\ma{C}_\lambda \Fou^\adjoint \vec{\delta}_i}_2^2  \right) \leq 2 \exp \left( - \frac{c \, L \,\delta^2}{16C^4} \right),
\end{align*}
for all $\delta \in (0, 4C^2)$. 

Then, using the union bound, we obtain
\begin{align*}
\Pbb \left( \max_{i \in \{1, \ldots, \nbVert\} } \abs{X_i} \geq \delta \, \norm{\ma{C}_\lambda \Fou^\adjoint \vec{\delta}_i}_2^2  \right) \leq 2 \nbVert \exp \left( - \frac{c \, L \,\delta^2}{16C^4} \right).
\end{align*}
This proves that, with probability at least $1 - \epsilon$,
\begin{align*}
(1-\delta) \norm{\ma{C}_\lambda \Fou^\adjoint \vec{\delta}_i}_2^2
\; \leq \;
\sum_{l=1}^L \, (\vec{r}_{\hat{c}_\lambda}^l)_i^2 
\; \leq \;
(1+\delta) \norm{\ma{C}_\lambda \Fou^\adjoint \vec{\delta}_i}_2^2,
\end{align*}
for all $i \in \{1, \ldots, \nbVert \}$, provided that
\begin{align*}
L \geq  \frac{16C^4} {c \, \delta^2}  \; \log \left( \frac{2 \nbVert}{\epsilon} \right).
\end{align*}
To finish the the proof, one just needs to remark that
\begin{align*}
\ma{C}_\lambda \Fou^\adjoint \vec{\delta}_i = \Fou_{j^*}^\adjoint \vec{\delta}_i + \ma{E}_\lambda \Fou^\adjoint \vec{\delta}_i,
\end{align*}
by definition of $\hat{c}_\lambda$ (see \refeq{eq:def_c_lambda}) and use the triangle inequality.
\end{proof}

\else
	\maketitle

\fi

% ====================================================
% BIBLIO
% ====================================================
\ifels
	\bibliographystyle{elsarticle-num}
	\section*{References}
\else
	\bibliographystyle{IEEEtran}
\fi
\bibliography{biblio}

% Generated by IEEEtran.bst, version: 1.14 (2015/08/26)
\begin{thebibliography}{10}
\providecommand{\url}[1]{#1}
\csname url@samestyle\endcsname
\providecommand{\newblock}{\relax}
\providecommand{\bibinfo}[2]{#2}
\providecommand{\BIBentrySTDinterwordspacing}{\spaceskip=0pt\relax}
\providecommand{\BIBentryALTinterwordstretchfactor}{4}
\providecommand{\BIBentryALTinterwordspacing}{\spaceskip=\fontdimen2\font plus
\BIBentryALTinterwordstretchfactor\fontdimen3\font minus
  \fontdimen4\font\relax}
\providecommand{\BIBforeignlanguage}[2]{{%
\expandafter\ifx\csname l@#1\endcsname\relax
\typeout{** WARNING: IEEEtran.bst: No hyphenation pattern has been}%
\typeout{** loaded for the language `#1'. Using the pattern for}%
\typeout{** the default language instead.}%
\else
\language=\csname l@#1\endcsname
\fi
#2}}
\providecommand{\BIBdecl}{\relax}
\BIBdecl

\bibitem{newman_book2010}
M.~Newman, \emph{Networks: an introduction}.\hskip 1em plus 0.5em minus
  0.4em\relax Oxford University Press, 2010.

\bibitem{shuman_SPMAG2013}
D.~Shuman, S.~Narang, P.~Frossard, A.~Ortega, and P.~Vandergheynst, ``The
  emerging field of signal processing on graphs: Extending high-dimensional
  data analysis to networks and other irregular domains,'' \emph{Signal
  Processing Magazine, IEEE}, vol.~30, no.~3, pp. 83--98, May 2013.

\bibitem{sandryhaila_SPMAG2014}
A.~Sandryhaila and J.~Moura, ``Big data analysis with signal processing on
  graphs: Representation and processing of massive data sets with irregular
  structure,'' \emph{Signal Processing Magazine, IEEE}, vol.~31, no.~5, pp.
  80--90, Sept 2014.

\bibitem{unser_shannon2000}
M.~Unser, ``Sampling-50 years after shannon,'' \emph{Proceedings of the IEEE},
  vol.~88, no.~4, pp. 569--587, April 2000.

\bibitem{shannon_IRE1949}
C.~Shannon, ``Communication in the presence of noise,'' \emph{Proceedings of
  the IRE}, vol.~37, no.~1, pp. 10--21, Jan 1949.

\bibitem{mcewen11}
J.~D. McEwen and Y.~Wiaux, ``A novel sampling theorem on the sphere,''
  \emph{IEEE Trans. Signal Process.}, vol.~59, no.~12, 2011.

\bibitem{grochenig1992reconstruction}
K.~Gr{\"o}chenig, ``Reconstruction algorithms in irregular sampling,''
  \emph{Mathematics of Computation}, vol.~59, no. 199, pp. 181--194, 1992.

\bibitem{benedetto1992irregular}
J.~J. Benedetto, ``Irregular sampling and frames,'' \emph{wavelets: A Tutorial
  in Theory and Applications}, vol.~2, pp. 445--507, 1992.

\bibitem{candes2006compressive}
E.~J. Cand{\`e}s \emph{et~al.}, ``Compressive sampling,'' in \emph{Proceedings
  of the international congress of mathematicians}, vol.~3.\hskip 1em plus
  0.5em minus 0.4em\relax Madrid, Spain, 2006, pp. 1433--1452.

\bibitem{chen_TSP2015}
S.~Chen, R.~Varma, A.~Sandryhaila, and J.~Kovacevic, ``Discrete signal
  processing on graphs: Sampling theory,'' \emph{Signal Processing, IEEE
  Transactions on}, vol.~PP, no.~99, pp. 1--1, 2015.

\bibitem{anis_ICASSP2014}
A.~Anis, A.~Gadde, and A.~Ortega, ``Towards a sampling theorem for signals on
  arbitrary graphs,'' in \emph{Acoustics, Speech and Signal Processing
  (ICASSP), 2014 IEEE International Conference on}, May 2014, pp. 3864--3868.

\bibitem{narang_TSP2013}
S.~Narang and A.~Ortega, ``Compact support biorthogonal wavelet filterbanks for
  arbitrary undirected graphs,'' \emph{Signal Processing, IEEE Transactions
  on}, vol.~61, no.~19, pp. 4673--4685, Oct 2013.

\bibitem{pesenson_paley}
I.~Pesenson, ``{Sampling in Paley-Wiener spaces on combinatorial graphs},''
  \emph{Transactions of the American Mathematical Society}, vol. 360, no.~10,
  pp. 5603--5627, 2008.

\bibitem{Pesenson:2011tu}
I.~Z. Pesenson and M.~Z. Pesenson, ``{Sampling, Filtering and Sparse
  Approximations on Combinatorial Graphs},'' \emph{J. Fourier Anal. Appl.},
  vol.~16, no.~6, pp. 921--942, 2010.

\bibitem{Anis_ARXIV2015}
A.~Anis, A.~E. Gamal, S.~Avestimehr, and A.~Ortega, ``Asymptotic justification
  of bandlimited interpolation of graph signals for semi-supervised learning,''
  in \emph{2015 IEEE International Conference on Acoustics, Speech and Signal
  Processing (ICASSP)}, April 2015, pp. 5461--5465.

\bibitem{chen_ICASSP2015}
S.~Chen, A.~Sandryhaila, and J.~Kovacevic, ``Sampling theory for graph
  signals,'' in \emph{Acoustics, Speech and Signal Processing (ICASSP), 2015
  IEEE International Conference on}, April 2015, pp. 3392--3396.

\bibitem{anis_arxiv2015_long}
A.~Anis, A.~Gadde, and A.~Ortega, ``Efficient sampling set selection for
  bandlimited graph signals using graph spectral proxies,'' \emph{IEEE
  Transactions on Signal Processing}, vol.~PP, no.~99, pp. 1--1, 2016.

\bibitem{nguyen_TSP2015}
H.~Nguyen and M.~Do, ``Downsampling of signals on graphs via maximum spanning
  trees,'' \emph{Signal Processing, IEEE Transactions on}, vol.~63, no.~1, pp.
  182--191, Jan 2015.

\bibitem{shuman_ARXIV2013}
D.~I. Shuman, M.~J. Faraji, and P.~Vandergheynst, ``A multiscale pyramid
  transform for graph signals,'' \emph{IEEE Transactions on Signal Processing},
  vol.~64, no.~8, pp. 2119--2134, April 2016.

\bibitem{irion_SPIE2015}
J.~Irion and S.~Naoki, ``Applied and computational harmonic analysis on graphs
  and networks,'' in \emph{Proc. SPIE Conf. WAVELET XVI}, vol. 9597, 2015.

\bibitem{tremblay_arxiv2015}
N.~Tremblay and P.~Borgnat, ``Subgraph-based filterbanks for graph signals,''
  \emph{IEEE Transactions on Signal Processing}, vol.~PP, no.~99, pp. 1--1,
  2016.

\bibitem{agaskar_TIT2013}
A.~Agaskar and Y.~Lu, ``A spectral graph uncertainty principle,''
  \emph{Information Theory, IEEE Transactions on}, vol.~59, no.~7, pp.
  4338--4356, July 2013.

\bibitem{rabbat_icassp2014}
M.~Rabbat and V.~Gripon, ``Towards a spectral characterization of signals
  supported on small-world networks,'' in \emph{Acoustics, Speech and Signal
  Processing (ICASSP), 2014 IEEE International Conference on}, May 2014, pp.
  4793--4797.

\bibitem{Nakatsukasa_laa2013}
Y.~Nakatsukasa, N.~Saito, and E.~Woei, ``Mysteries around the graph laplacian
  eigenvalue 4,'' \emph{Linear Algebra and its Applications}, vol. 438, no.~8,
  pp. 3231 -- 3246, 2013.

\bibitem{puy11}
G.~Puy, P.~Vandergheynst, and Y.~Wiaux, ``On variable density compressive
  sampling,'' \emph{IEEE Signal Process. Lett.}, vol.~18, no.~10, pp. 595--598,
  2011.

\bibitem{krahmer12}
F.~Krahmer and R.~Ward, ``Stable and robust sampling strategies for compressive
  imaging,'' \emph{IEEE Trans. Image Process.}, vol.~23, no.~2, pp. 612--622,
  2013.

\bibitem{adcock14}
B.~Adcock, A.~C. Hansen, C.~Poon, and B.~Roman, ``Breaking the coherence
  barrier: a new theory for compressed sensing,'' \emph{arXiv:1302.0561}, 2013.

\bibitem{chen_sampta2015}
S.~Chen, R.~Varma, A.~Singh, and J.~Kovacević, ``Signal recovery on graphs:
  Random versus experimentally designed sampling,'' in \emph{Sampling Theory
  and Applications (SampTA), 2015 International Conference on}, May 2015, pp.
  337--341.

\bibitem{chung_book1997}
F.~Chung, \emph{Spectral graph theory}.\hskip 1em plus 0.5em minus 0.4em\relax
  Amer Mathematical Society, 1997, no.~92.

\bibitem{candes07}
E.~Candes and J.~Romberg, ``Sparsity and incoherence in compressive sampling,''
  \emph{Inverse Problems}, vol.~23, no.~3, pp. 969--985, 2007.

\bibitem{foucart13}
S.~Foucart and H.~Rauhut, \emph{A Mathematical Introduction to Compressive
  Sensing}, ser. Applied and Numerical Harmonic Analysis.\hskip 1em plus 0.5em
  minus 0.4em\relax Springer New York, 2013.

\bibitem{lustig07}
M.~Lustig, D.~Donoho, and J.~M. Pauly, ``Sparse {MRI}: {T}he application of
  compressed sensing for rapid {MR} imaging,'' \emph{Magn. Reson. Med.},
  vol.~58, no.~6, pp. 1182--1195, 2007.

\bibitem{chauffert13}
N.~Chauffert, P.~Ciuciu, and P.~Weiss, ``Variable density compressed sensing in
  mri. theoretical vs heuristic sampling strategies,'' in \emph{Proc. IEEE Int.
  Symp. on Biomedical Imaging}, 2013, pp. 298 -- 301.

\bibitem{boyer15}
C.~Boyer, J.~Bigot, and P.~Weiss, ``Compressed sensing with structured sparsity
  and structured acquisition,'' \emph{arXiv:1505.01619}, 2015.

\bibitem{bigot16}
J.~Bigot, C.~Boyer, and P.~Weiss, ``An analysis of block sampling strategies in
  compressed sensing,'' \emph{IEEE Trans. Inf. Theory}, vol.~62, no.~4, pp.
  2125 -- 2139, 2016.

\bibitem{drineas12}
P.~Drineas, M.~M.-Ismail, M.~W. Mahoney, and D.~P. Woodruff, ``Fast
  approximation of matrix coherence and statistical leverage,'' \emph{J. Mach.
  Learn. Res.}, vol.~13, p. 3475−3506, 2012.

\bibitem{drineas06}
P.~Drineas, M.~W. Mahoney, and S.~Muthukrishnan, ``Sampling algorithms for l2
  regression and applications,'' in \emph{Proceedings of the seventeenth annual
  ACM-SIAM symposium on Discrete algorithm (SODA'06)}, 2006, pp. 1127--1136.

\bibitem{mahoney11}
M.~W. Mahoney, ``Randomized algorithms for matrices and data,''
  \emph{Foundation and Trends in Machine Learning}, vol.~3, no.~2, pp.
  123--224, 2011.

\bibitem{chen16}
S.~Chen, R.~Varma, A.~Singh, and J.~Kovacevic, ``A statistical perspective of
  sampling scores for linear regression,'' 2016, submitted.

\bibitem{drineas08}
P.~Drineas, M.~W. Mahoney, and S.~Muthukrishnan, ``Relative-error $cur$ matrix
  decompositions,'' \emph{SIAM J. Matrix Anal. \& Appl.}, vol.~30, no.~2, pp.
  844--881, 2008.

\bibitem{gittens13}
A.~Gittens and M.~Mahoney, ``Revisiting the nystrom method for improved
  large-scale machine learning,'' \emph{J. Mach. Learn. Res.}, vol.~28, no.~3,
  pp. 567--575, 2013.

\bibitem{sun15}
S.~Sun, J.~Zhao, and J.~Zhu, ``A review of nyström methods for large-scale
  machine learning,'' \emph{Information Fusion}, vol.~26, pp. 35--48, 2015.

\bibitem{hammond11}
D.~K. Hammond, P.~Vandergheynst, and R.~Gribonval, ``Wavelets on graphs via
  spectral graph theory,'' \emph{Appl. Comput. Harmon. Anal.}, vol.~30, no.~2,
  pp. 129--150, 2011.

\bibitem{Chapelle_SSL_book}
O.~Chapelle, B.~Schlkopf, and A.~Zien, \emph{Semi-Supervised Learning},
  1st~ed.\hskip 1em plus 0.5em minus 0.4em\relax The MIT Press, 2010.

\bibitem{zhu_ICML2003}
X.~Zhu, Z.~Ghahramani, and J.~D. Lafferty, ``Semi-supervised learning using
  gaussian fields and harmonic functions,'' in \emph{Machine Learning,
  Proceedings of the Twentieth International Conference {(ICML} 2003), August
  21-24, Washington, DC, {USA}}, 2003, pp. 912--919.

\bibitem{Zhou_NIPS2004}
D.~Zhou, O.~Bousquet, T.~N. Lal, J.~Weston, and B.~Sch\"{o}lkopf, ``Learning
  with local and global consistency,'' in \emph{Advances in Neural Information
  Processing Systems 16}, S.~Thrun, L.~K. Saul, and B.~Sch\"{o}lkopf,
  Eds.\hskip 1em plus 0.5em minus 0.4em\relax MIT Press, 2004, pp. 321--328.

\bibitem{Belkin_JMLR2006}
M.~Belkin, P.~Niyogi, and V.~Sindhwani, ``Manifold regularization: A geometric
  framework for learning from labeled and unlabeled examples,'' \emph{J. Mach.
  Learn. Res.}, vol.~7, pp. 2399--2434, Dec. 2006.

\bibitem{bengio_chapter2006}
Y.~Bengio, O.~Delalleau, and N.~L. Roux, \emph{Label Propagation and Quadratic
  Criterion}.\hskip 1em plus 0.5em minus 0.4em\relax MIT Press, 2006, p.
  193–216.

\bibitem{trend_filtering_AISTATS2015}
Y.-X. Wang, J.~Sharpnack, A.~Smola, and R.~J. Tibshirani, ``Trend filtering on
  graphs,'' in \emph{International Conference on Artificial Intelligence and
  Statistics}, 2015, pp. 1042--1050.

\bibitem{Belkin_NIPS2003}
M.~Belkin and P.~Niyogi, ``Using manifold stucture for partially labeled
  classification,'' in \emph{Advances in Neural Information Processing Systems
  15}, S.~Becker, S.~Thrun, and K.~Obermayer, Eds.\hskip 1em plus 0.5em minus
  0.4em\relax MIT Press, 2003, pp. 953--960.

\bibitem{Fergus_NIPS2009}
R.~Fergus, Y.~Weiss, and A.~Torralba, ``Semi-supervised learning in gigantic
  image collections,'' in \emph{Advances in Neural Information Processing
  Systems 22}, Y.~Bengio, D.~Schuurmans, J.~D. Lafferty, C.~K.~I. Williams, and
  A.~Culotta, Eds.\hskip 1em plus 0.5em minus 0.4em\relax Curran Associates,
  Inc., 2009, pp. 522--530.

\bibitem{settles_active_learning}
B.~Settles, \emph{Active learning}.\hskip 1em plus 0.5em minus 0.4em\relax
  Morgan \& Claypool Publishers, 2012, vol.~6, no.~1.

\bibitem{Fu2012}
Y.~Fu, X.~Zhu, and B.~Li, ``A survey on instance selection for active
  learning,'' \emph{Knowledge and Information Systems}, vol.~35, no.~2, pp.
  249--283, 2012.

\bibitem{Gu_ICDM2012}
Q.~Gu and J.~Han, ``Towards active learning on graphs: An error bound
  minimization approach,'' in \emph{2012 IEEE 12th International Conference on
  Data Mining}, Dec 2012, pp. 882--887.

\bibitem{bilgic_AAAI2010}
M.~Bilgic and L.~Getoor, ``Active inference for collective classification,''
  \emph{AAAI Conference on Artificial Intelligence}, 2010.

\bibitem{ji_AISTATS2012}
M.~Ji and J.~Han, ``A variance minimization criterion to active learning on
  graphs,'' in \emph{International Conference on Artificial Intelligence and
  Statistics}, 2012, pp. 556--564.

\bibitem{ma_NIPS2013}
Y.~Ma, R.~Garnett, and J.~Schneider, ``$\sigma$ -optimality for active learning
  on gaussian random fields,'' in \emph{Advances in Neural Information
  Processing Systems 26}, C.~J.~C. Burges, L.~Bottou, M.~Welling,
  Z.~Ghahramani, and K.~Q. Weinberger, Eds.\hskip 1em plus 0.5em minus
  0.4em\relax Curran Associates, Inc., 2013, pp. 2751--2759.

\bibitem{zhang2015graph}
\BIBentryALTinterwordspacing
C.~Zhang, D.~Flor{\^e}ncio, and P.~A. Chou, ``Graph signal processing--a
  probabilistic framework,'' Microsoft Research, Tech. Rep., April 2015.
  [Online]. Available:
  \url{http://research.microsoft.com/apps/pubs/default.aspx?id=243326}
\BIBentrySTDinterwordspacing

\bibitem{gadde_ICASSP2015}
A.~Gadde and A.~Ortega, ``A probabilistic interpretation of sampling theory of
  graph signals,'' in \emph{2015 IEEE International Conference on Acoustics,
  Speech and Signal Processing (ICASSP)}, April 2015, pp. 3257--3261.

\bibitem{napoli13}
E.~D. Napoli, E.~Polizzo, and Y.~Saad, ``Efficient estimation of eigenvalue
  counts in an interval,'' \emph{arXiv:1308.4275}, 2013.

\bibitem{perraudin14}
N.~Perraudin, J.~Paratte, D.~Shuman, V.~Kalofolias, P.~Vandergheynst, and D.~K.
  Hammond, ``Gspbox: A toolbox for signal processing on graphs,''
  \emph{arXiv:1408.5781}, 2014.

\bibitem{shahid15}
N.~Shahid, N.~Perraudin, V.~Kalofolias, and P.~Vandergheynst, ``Fast robust
  {PCA} on graphs,'' \emph{arXiv:1507:08173}, 2015.

\bibitem{Tremblay_ICML2016}
N.~Tremblay, G.~Puy, R.~Gribonval, and P.~Vandergheynst, ``Compressive spectral
  clustering,'' in \emph{Machine Learning, Proceedings of the Thirty-third
  International Conference {(ICML} 2016), June 20-22, New York City, {USA}},
  2016.

\bibitem{tropp12}
J.~A. Tropp, ``User-friendly tail bounds for sums of random matrices,''
  \emph{Found. Comput. Math.}, vol.~12, no.~4, pp. 389--434, 2012.

\bibitem{tropp11}
------, ``Improved analysis of the subsampled randomized hadamard transform,''
  \emph{Adv. Adapt. Data Anal.}, vol.~3, no. 01n02, pp. 115--126, 2011.

\bibitem{baraniuk08}
R.~G. Baraniuk, M.~Davenport, R.~DeVore, and M.~Wakin, ``A simple proof of the
  restricted isometry property for random matrices,'' \emph{Constr. Approx.},
  vol.~28, no.~3, pp. 253--263, 2008.

\bibitem{vershynin12}
R.~Vershynin, \emph{Compressed Sensing, Theory and Applications}.\hskip 1em
  plus 0.5em minus 0.4em\relax Cambridge University Press, 2012, ch.
  Introduction to the non-asymptotic analysis of random matrices, pp. 210--268.

\end{thebibliography}

\end{document}